\def\checkmark{\tikz\fill[scale=0.4](0,.35) -- (.25,0) -- (1,.7) -- (.25,.15) -- cycle;}
\newcommand{\citeU}{\citep}
\newcommand{\Ll}{L}
\newcommand{\A}{A}
\newcommand{\D}{D}
\newcommand{\B}{B}
\newcommand{\F}{F}
\newcommand{\C}{C}
\newcommand{\W}{W}
\newcommand{\bff}{}
\newtheorem{theorem}{Theorem}
\newtheorem{corollary}{Corollary}
\newtheorem{proposition}{Proposition}
\newtheorem{lemma}{Lemma}
\newtheorem{example}{Example}
\renewcommand{\algocf@captiontext}[2]{#1\algocf@typo. \AlCapFnt{}#2} 
\def\@algocf@capt@plain{top}
\renewcommand{\algocf@makecaption}[2]{%
  \addtolength{\hsize}{\algomargin}%
  \sbox\@tempboxa{\algocf@captiontext{#1}{#2}}%
  \ifdim\wd\@tempboxa >\hsize
    \hskip .5\algomargin%
    \parbox[t]{\hsize}{\algocf@captiontext{#1}{#2}}
  \else%
    \global\@minipagefalse%
    \hbox to\hsize{\box\@tempboxa}
  \fi%
  \addtolength{\hsize}{-\algomargin}%
}
\begin{document}







\title{Identifiability and Estimation of Structural Vector Autoregressive Models for Subsampled and Mixed Frequency Time Series}

\author{
	Alex Tank \\
       Department of Statistics \\
       University of Washington \\
       alextank@uw.edu
	\and 
	Emily Fox \\
       Department of Statistics \\
       University of Washington \\
       ebfox@uw.edu
       \and
       Ali Shojaie \\
Department of Biostatistics\\
University of Washington\\
ashojaie@uw.edu
}

\maketitle

\begin{abstract}
Causal inference in multivariate time series is challenging due to the fact that the sampling rate may not be as fast as the timescale of the causal interactions. In this context, we can view our observed series as a \emph{subsampled} version of the desired series. Furthermore, due to technological and other limitations, series may be observed at different sampling rates, representing a \emph{mixed frequency} setting. To determine instantaneous and lagged effects between time series at the true causal scale, we take a model-based approach based on structural vector autoregressive (SVAR) models. In this context, we present a unifying framework for parameter identifiability and estimation under both subsampling and mixed frequencies when the noise, or shocks, are non-Gaussian. Importantly, by studying the SVAR case, we are able to both provide identifiability and estimation methods for the causal structure of both lagged and instantaneous effects at the desired time scale. We further derive an exact EM algorithm for inference in both subsampled and mixed frequency settings. We validate our approach in simulated scenarios and on two real world data sets.
\end{abstract}
 \section{Introduction}\label{sec:intro}
Classical approaches to multivariate time series and Granger causality assume that all time series are sampled at the same sampling rate. However, due to data integration across heterogeneous sources, many data sets in econometrics, health care, environment monitoring, and neuroscience are composed of multiple time series sampled at different rates, referred to as \emph{mixed frequency} time series. Furthermore, due to the cost or technological challenge of data collection, many time series may be sampled at a rate lower than the true causal scale of the underlying physical process. For example, many econometric indicators, such as GDP and housing price data, are recorded at quarterly and monthly scales \citeU{moauro:2005}. However, there may be important interactions between these indicators at the weekly or bi-weekly scales \citeU{moauro:2005,stram:1986,boot:1967}. In neuroscience, imaging technologies with high spatial resolution, like functional magnetic resonance imaging or fluorescent calcium imaging, have relatively low temporal resolutions. On the other hand, it is well-known that many important neuronal processes and interactions happen at finer time scales \citeU{zhou:2014}. A causal analysis rooted at a slower time scale than the true causal time scale may both miss true interactions and add spurious ones \citeU{zhou:2014,silvestrini:2008,boot:1967,breitung:2002}. A comprehensive approach to Granger causality in multivariate time series should be able to simultaneously accommodate both mixed frequency and subsampled data. 

Recently, the problem of causal discovery in subsampled time series has been studied drawing from methods in causal structure learning using graphical models \citeU{danks:2013,pils:2015,plis:2015,hyttinen:2016}. These methods are model free, and automatically infer a sampling rate for causal relations most consistent with the data. We maintain a similar goal, but take a model-based approach and examine the  identifiability of structural vector autoregressive models (SVAR) under both subsampling and mixed frequency settings. SVARs are an important tool in time series analysis \citeU{Lutkepohl:2005,harvey:1990} and are a mainstay in econometrics and macro-economic policy analysis. SVAR models combine classical linear autoregressive models with structural equation modeling \citeU{ullman:2003} to allow analysis of both instantaneous and lagged causal effects between time series. However, SVAR models are commonly applied to regularly sampled data, where each series is observed at the same, discrete regular intervals. Moreover, the time scale of a causal SVAR analysis is typically restricted to this shared sampling scale. 

\cite{Gong:2015} recently explored identifiability and estimation for VAR models under subsampling with independent innovations, i.e.,  \emph{no} instantaneous causal effects or error correlations. They show that with non-Gaussian errors, the transition matrix is identifiable under subsampling, implying that Granger causality estimation under subsampling is possible. Unfortunately, their results do not cover the case of correlated errors, a common and important aspect of many real world time series and their respective models \citeU{Lutkepohl:2005}. Interestingly, non-Gaussian errors have also been shown to aide model identifiability in SVAR models with standard sampling assumptions \citeU{zhang:2009,lanne:2015,hyvarinen:2010,hyvarinen:2008,peters:2013}. 
This line of work applies techniques originally developed for both structural equation modeling with non-Gaussian errors and  independent component analysis (ICA) \citeU{hyvarinen:2004} to the SVAR context. Importantly, non-Gaussian errors allow identification of the SVAR model without any other identifying restrictions \citeU{lanne:2015}, and further allow identification of the causal ordering of the instantaneous effects if these are known to follow a directed acyclic graph (DAG) \citeU{hyvarinen:2010}.

Our approach to subsampling unifies existing approaches to identifiability along two complimetary directions.

\begin{enumerate}
\item Our work concretely connects the non-Gaussian subsampled VAR with independent innovations method \citeU{Gong:2015} with the now extensive non-Gaussian SVAR framework \citeU{zhang:2009,lanne:2015,hyvarinen:2010,hyvarinen:2008,peters:2013} by proving identifiability of an  SVAR model of order one under arbitrary subsampling. As a result, we find that not only can one identify the causal structure of lagged effects from subsampled data with correlated errors, but also the DAG of the instantaneous effects without prior knowledge of the causal ordering. 
\item We generalize our results to the mixed frequency setting with arbitrary subsampling, where the subsampling level may be different for each time series.  In doing so, we provide a unified theoretical approach and estimation methodology for subsampled and mixed frequency cases. Precise identifiability conditions on the model parameters in the mixed frequency case is notoriously difficult \citeU{anderson:2015} and has only been studied based on the first two moments of the mixed frequency process. Our work takes a complimentary direction by leveraging higher order moments and provides the first set of specific model conditions for mixed frequency SVAR models needed for identifiability. Furthermore, previous approaches to mixed frequency SVAR have assumed a causal ordering, while our results indicate this may be estimated by leveraging non-Gaussianity. Finally, our approach to identifiability allows us to move beyond the classical mixed frequency setting where the time scale is fixed at the most finely sampled series \citeU{anderson:2015}, and instead consider identifiability and estimation in more general mixed frequency cases. We display the four sampling types our approach covers in Figure \ref{sampling_types}.
\end{enumerate}

We introduce an exact EM algorithm for inference for both subsampled and mixed frequency cases. \cite{Gong:2015} also utilize an EM algorithm, but because they formulate inference directly on the subsampled process by marginalizing out the missing data, the approach requires an extra layer of approximation.  Our approach instead casts inference as a missing data problem and utilizes a Kalman filter to exactly compute the E-step for both subsampled and mixed frequency cases. We validate our estimation and identifiability results via extensive simulations and apply our method to evaluate causal relations in a subsampled climate data set and a mixed frequency econometric dataset. Taken together, we present a unified theoretical analysis and unified estimation methodology for both subsampled and mixed frequency SVAR cases, areas that have been traditionally studied separately. A summary of our contributions are presented in Table \ref{estimation_contributions}.

\begin{figure}
\centering
\includegraphics[width = .6\textwidth]{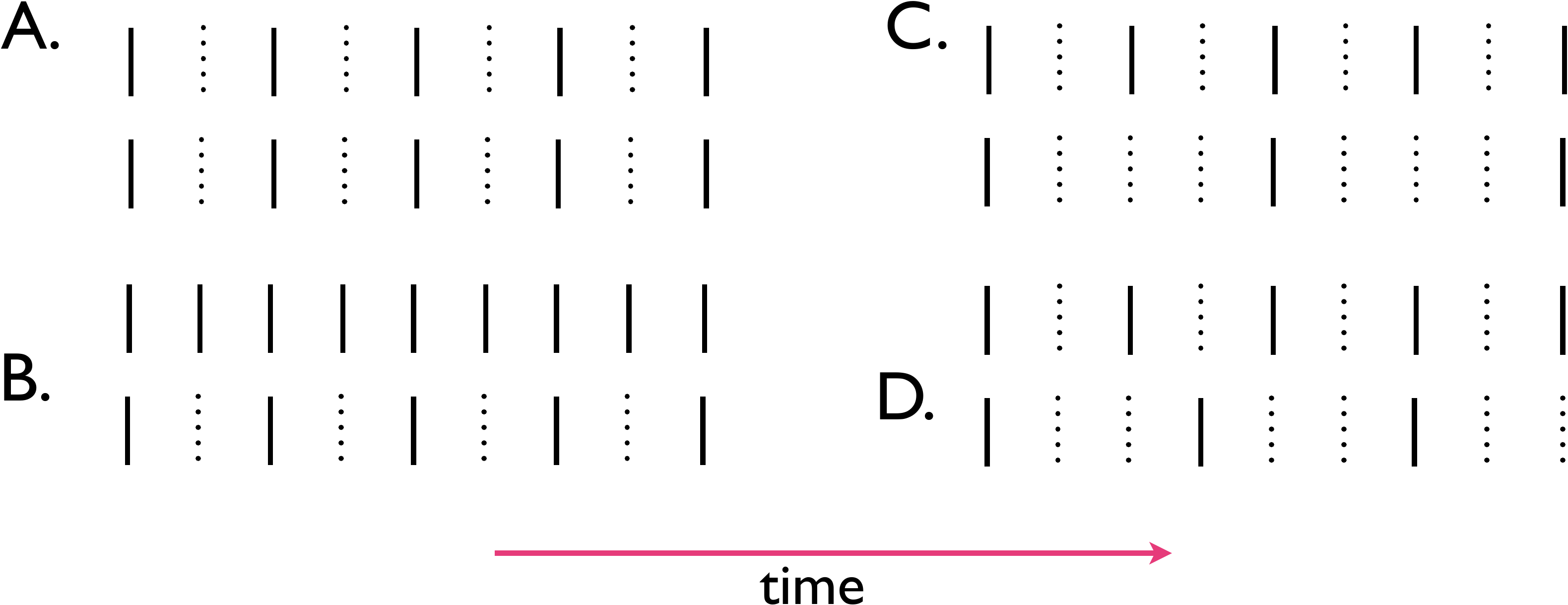}
\caption{Three different types of structured sampling. Black lines indicate observed data and dotted lines indicate missing data. A) Both series are subsampled at a rate of two. B) The standard mixed frequency example \citeU{anderson:2015}, with the first having no subsampling while the second series is subsampled. C) A mixed frequency subsampled version of B where each series is subsampled, but at different rates. D) Another subsampled mixed frequency series, but where there is no common factor across sampling rates and is thus not a subsampled version of B. } \label{sampling_types}
\end{figure}

\begin{table}[ht]
\begin{center}
\caption{Tabular summary of the contributions of our work to identifiability and estimation in mixed frequency sampling SVAR models. The subsampling types are as in Figure \ref{sampling_types}. Citations indicate previous work and the check marks indicate our contributions. The notation ce indicates `computationally expensive';  see the discussion at the end of Section \ref{real_data}. Hyv08 represents \cite{hyvarinen:2008}, Gong15 represents \cite{Gong:2015} and Lut06 represents \cite{Lutkepohl:2005}.}
\begin{tabular}{c c c c c l l}
    sampling type & & none & A & B & \multicolumn{1}{c}{C} & \multicolumn{1}{c}{D}\\
    \multirow{2}{*}{$\C = I$}&ident.& cf. Lut06 & Gong15 & \checkmark & \checkmark & \checkmark \\ 
    &est.& cf. Lut06 & Gong15 (approx), \checkmark & \checkmark & \checkmark (ce) & \checkmark (ce) \\
    \multirow{2}{*}{$\C$ free}&ident.& Hyv08  & \checkmark & \checkmark & \checkmark & \checkmark\\ 
    &est.& Hyv08  & \checkmark & \checkmark & \checkmark (ce)
 & \checkmark (ce)\\

\end{tabular}
 \label{estimation_contributions}
\end{center}
\label{tab:multicol}
\end{table}




 \section{Background}
 \label{svar_b}
Let $x_t \in \mathbb{R}^{p}$ be a $p$-dimensional multivariate time series for $t = 1, \ldots, T$ generated at a fixed sampling rate. We collect the entire set of $x_t$s into the matrix ${\bff X} = \left(x_1, \ldots,x_T \right)$. We assume the dynamics of $x_t$ follow a combination of instantaneous effects, lagged autoregressive effects, and independent noise
\begin{align} \label{SVAR}
x_t &= \B x_t + \D x_{t-1} + e_t,
\end{align}
where $\B \in \mathbb{R}^{p \times p}$ is the structural matrix that determines the instantaneous time linear effects, $\D \in \mathbb{R}^{p \times p}$ is an autoregressive matrix that specifies the lag one effects conditional on the instantaneous effects, and $e_t \in \mathbb{R}^{p}$ is a white noise process such that $E(e_t) = 0 \,\, \forall t$, and $e_{ti}$ is independent of $e_{t'j}$ $\forall i, j,t,t'$ such that $(i,t) \neq (j,t)$ We assue $e_{tj}$ is distributed as  $e_{tj} \sim p_{e_j}$. Solving Eq. (\refeq{SVAR}) in terms of $x_t$ gives the following equation for the evolution of $x_t$:
\begin{align}
x_t &= (I - \B)^{-1} \D x_{t-1} + (I - \B)^{-1} e_t \nonumber\\
&= \A x_{t-1} + \C e_t \label{svar_f}
\end{align}
Under the representation in Eq. (\refeq{svar_f}), each $\A_{ji}$ element denotes the lag one linear effect of series $i$ on series $j$ and $\C \in \mathbb{R}^{p \times p}$ is the structural matrix. Element $e_{tj}$ is refered to as the \emph{shock} to series $j$ and element $\C_{ji}$ is the linear instantaneous effect of shock $j$ on series $j$ to series $i$.

Conditions on $\C$, or equivalently $\B$, for model identifiability and estimation have been heavily explored \citeU{harvey:1990}.
The most typical condition is that $\C$ is a lower triangular matrix with ones on the diagonal, implying a known causal ordering to the instantaneous effects. In this case, one may interpret the instantaneous effects as a directed acyclic graph (DAG) \citeU{lauritzen:1996}. A DAG is a directed graph, $G = (V, E)$, with vertices $V = \{1, \ldots, p\}$ and directed edge set $E$, with no directed cycles. A causal ordering for a DAG is an ordering of the vertices into a sequence, $\pi$, such that if $j$ comes before $i$ in $\pi$ then $E$ does not contain a path of edges from $i$ to $j$; see, e.g.,  \citeU{shojaie2010biometrika} for more details.
In the context of SVARs, for $i \neq j$ there exists a directed edge $i \to j$ from $x_i$ to $x_j$ in $E$, if and only if $\C_{ji}$ is nonzero. Classical estimation for SVAR models with known causal ordering typically proceeds by simultaneously fitting $\A$ and $\C$ with the identifiability constraint that $\C$ be lower triangular.  

A recent line of work \citeU{zhang:2009,lanne:2015,hyvarinen:2010} focuses on estimating $\A$ and $\C$ when $\pi$ is unknown. They show that when the errors, $e_t$, are \emph{non-Gaussian}, both the causal ordering and instantaneous effects $\C$, or $\B$, may be inferred directly from the data using techniques common in independent component analysis (ICA) \citeU{hyvarinen:2010}. Alternatively, one may dispense with causal orderings and lower triangular restrictions all together and directly estimate $\C$ \citeU{lanne:2015} under non-Gaussian errors. Our analysis continues this direction of work, leveraging non-Gaussianity of SVAR with subsampling and/or mixed frequency sampling.

\section{Subsampled SVAR} \label{subsamp_sec}
\subsection{The subsampled process}
Subsampling occurs when, due to low temporal resolution, we only observe $x_t$ every $k$ time steps, as displayed graphically as case A in Figure \ref{sampling_types}. In this case, we only have access to the observations ${\bff \tilde{X}} = \left(\tilde{x}_1, \tilde{x}_2, \ldots, \tilde{x}_{\tilde{T}}\right) \equiv \left(x_1, x_{1 + k},\ldots,x_{1 + (\tilde{T}-1)k}\right)$, where $\tilde{T}$ is the number of subsampled observations. We may marginalize out the unobserved $x_t$ to obtain the evolution equations for $\tilde{x}_t$:
\begin{align}
\tilde{x}_{t+1} &= x_{1 + tk} = \A x_{1 + tk - 1} + \C e_{1 + tk} \nonumber \\ 
&= \A \left( \A x_{1 + tk - 2} + \C e_{1 + tk - 1} \right) + \C e_{1 + tk} \nonumber \\
&= \ldots \nonumber \\
&= \A^k \tilde{x}_{t-1} + \sum_{l = 0}^{k-1} \A^l \C e_{1 + tk - l} \\
&= \A^k \tilde{x}_{t-1} + \Ll \tilde{e}_t \label{svar_sub},
\end{align} 
where $\tilde{e}_t = \left(e_{1 + tk}^T, \ldots e_{2 + (t - 1)k}^T\right)^T$ and $\Ll = \left(\C, \ldots, \A^{k-1} \C\right)$.
Eq. (\refeq{svar_sub}) appears to take a similar form to the SVAR process in Eq. (\refeq{SVAR}); however, now the vector of shocks, $\tilde{e}_t$, is of dimension $kp$ with special structure on both the structural matrix $\Ll$ and the distributions of the elements in $\tilde{e}_t$. Unfortunately, this representation no longer has the interpretation of instantaneous causal effects as described in Section \ref{svar_b} since there are now multiple shocks per individual time series. We will refer to the full parametrization of the subsampled SVAR model in Eq. (\refeq{svar_sub}) as $\left(\A, \C, p_e; k\right)$. Identifiability of the SVAR model means that there is a unique pair of $\A$ and $\C$ for the SVAR model consistent with the joint distribution of ${\bff \tilde{X}}$ at subsampling rate $k$.

\subsection{Lagged and Instantaneous Causality Confounds of Subsampling}
\label{cs}
A classical SVAR analysis on the $\tilde{x}_t$ that does not account for subsampling would incorrectly estimate lagged Granger causal effects in $\A^k$; this is because, $\A_{ij}$ being zero does not imply that $(\A^k)_{ij} = 0$, and vice versa \citeU{Gong:2015}. Zeros in the estimated structural matrix may also be incorrect if subsampling is ignored. Furthermore, classical SVAR estimation methods that assume a known causal ordering to the instantaneous shocks simply estimate the covariance of the error process, $\Sigma = E(\C e_t e_t^T \C^T) = \C \Lambda \C^T$, and let the estimated structural matrix be the Cholesky decomposition of $\Sigma$. Under subsampling, the covariance of the error process is 
\begin{align}
E(\Ll\tilde{e}_t \tilde{e}_t^T \Ll^T) = \Ll \left(I_{k} \otimes \Lambda \right) \Ll^T, \label{sub_cov}
\end{align}
where $\otimes$ is the Kronecker product and $I_{k}$ is the identity matrix of size $k$. The causal structure given by zeros in the Cholesky decomposition of Eq. (\refeq{sub_cov}) need not be the same as those implied by $\C$. 

\begin{example} As an example, consider the following process \citeU{Gong:2015}:
\begin{align}
\A = \left(\begin{array}{c c}
.8 & .5 \\
0 & -.5
\end{array}\right) \,\,\,\, \C = \left( \begin{array}{c c}
1 & 0 \\
0 & 1 
\end{array} \right) \,\,\,\, \Lambda = \left( \begin{array}{c c}
1 & 0 \\
0 & 1 
\end{array} \right)\nonumber
\end{align}
so that $\C \Lambda \C^T = I_p$. Then for a subsampling of $k = 2$,
\begin{align}
\A^k = \left(\begin{array}{c c}
.64 & 0 \\
0 & .64
\end{array} \right) \nonumber \,\,\,\,
\Ll \left(I_{k} \otimes \Lambda \right) \Ll^T = \left(\begin{array}{c c}
1.89 & -.4 \\
-.4 & 1.64
\end{array} \right), \nonumber
\end{align}
implying \emph{no} lagged causal effect between $x_1$ and $x_2$, but a relatively large instantaneous interaction; this is the opposite of the true data generating model! A graphical depiction of this example is given in Figure~\ref{confounds}. \label{ex1}
\end{example}

\begin{figure} 
\centering 
\includegraphics[width=.8\textwidth]{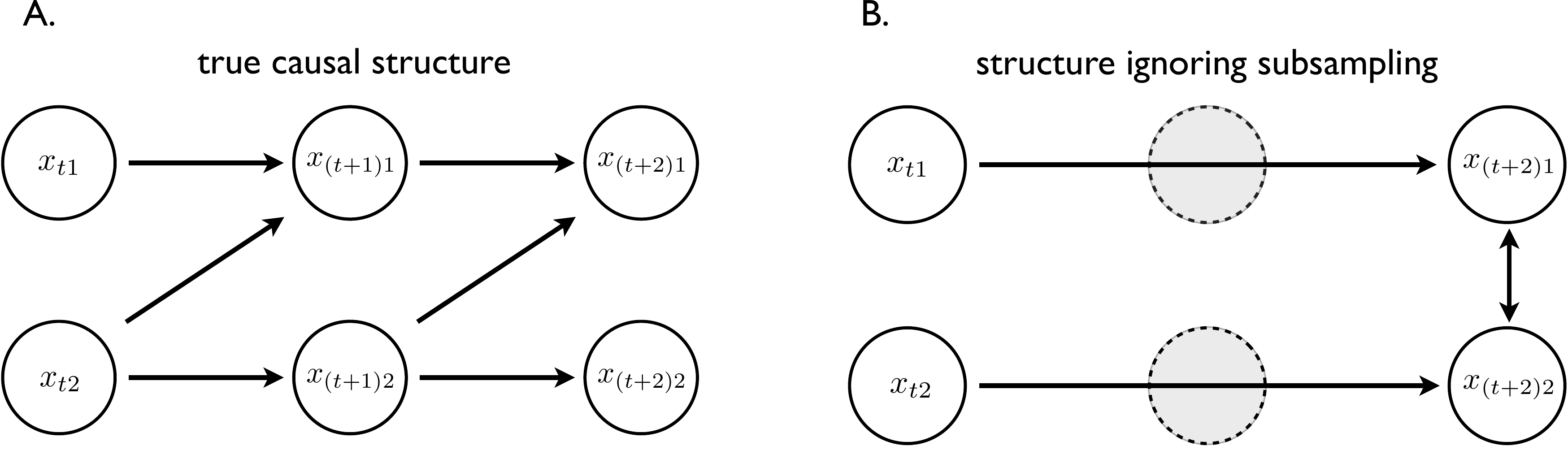}
\caption{Graphical depiction of how subsampling confounds both causal analysis of lagged and instantaneous effects. A) The true causal diagram for the regularly sampled data. B) The estimated causal structure of the subsampled process when the effects of subsampling are ignored.} \label{confounds}
\end{figure}

\subsection{Identifiability of L under subsampling}
While both lagged Granger causality and instantaneous structural interactions are confounded by subsampling, we show here that by accounting for subsampling, and under some conditions, we may still estimate the $\A$ and $\C$ matrices of the underlying SVAR process directly from the subsampled data. 
As a first step to proving the identifiability result of $\A$ and $\C$, we show that the matrix $\Ll = \left(\C, \ldots, \A^{k-1} \C\right)$ in Eq. (\refeq{svar_sub}) is identifiable up to permutation and scaling of columns when the $p_{e_j}$ (the distribution of $e_{tj}$) are all non-Gaussian.
\begin{proposition} \label{L_ident}
Suppose that all $p_{e_j}$ are non-Gaussian. Given a known subsampling factor $k$ and subsampled data $\tilde{X}$ generated according to Eq. (\refeq{svar_sub}), $\Ll$ may be determined up to permutation and scaling of columns.
\end{proposition}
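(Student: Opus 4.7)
My plan is to split identification of $\Ll$ into two steps: first identify the lagged coefficient $\A^k$ using only the second-order structure of $\tilde{X}$, and then recover $\Ll$ by applying an overcomplete independent component analysis (ICA) identifiability result to the resulting residual process.

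For the first step, observe that in Eq. (\ref{svar_sub}), the innovation $\Ll\tilde{e}_t$ is built from the original shocks at times $1+(t-1)k+1,\dots,1+tk$, all of which occur strictly after the shocks appearing in $\tilde{x}_{t-1}$; hence $\Ll\tilde{e}_t$ has zero mean and is independent of $\tilde{x}_{t-1}$. Assuming the subsampled process is stationary (so $\A^k$ has spectral radius less than one and $E(\tilde{x}_{t-1}\tilde{x}_{t-1}^\T)$ is invertible), the Yule--Walker-type relation
\begin{equation*}
E(\tilde{x}_{t+1}\tilde{x}_{t-1}^\T) = \A^k\, E(\tilde{x}_{t-1}\tilde{x}_{t-1}^\T)
\end{equation*}
pins down $\A^k$ uniquely from the joint distribution of $\tilde{X}$.

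For the second step, with $\A^k$ in hand the residual sequence $r_t \equiv \tilde{x}_{t+1} - \A^k\tilde{x}_{t-1} = \Ll\tilde{e}_t$ has an identifiable distribution, and it is a linear mixture of the $kp$ mutually independent, non-Gaussian scalar shocks comprising $\tilde{e}_t$ through the $p \times kp$ matrix $\Ll$. Because $kp > p$ whenever $k \geq 2$, this is an overcomplete ICA problem: if two matrices $\Ll$ and $\Ll'$ produce the same distribution of $r_t$ from independent non-Gaussian sources, the Darmois--Skitovich theorem together with its overcomplete extension forces each column of $\Ll'$ to be proportional to some column of $\Ll$, yielding identifiability up to permutation and scaling of columns exactly as claimed.

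The main obstacle is precisely the overcomplete ICA step: the classical whitening argument used in square ICA is unavailable when $\Ll$ has strictly more columns than rows, so the proof must work directly with the (second) characteristic function of $r_t$ and invoke Darmois--Skitovich-type results to rule out any mixing matrix whose columns cannot be matched to those of $\Ll$ up to scale. By contrast, the first step is essentially routine given stationarity of $\tilde{x}_t$ and non-degeneracy of its marginal covariance; and the ambiguity in column scaling cannot be further reduced without imposing a normalization on either $\Ll$ or on the shock distributions $p_{e_j}$.
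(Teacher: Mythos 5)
Your proposal matches the paper's argument essentially step for step: the paper likewise identifies $\A^k$ by linear regression on the subsampled series and then applies the overcomplete ICA uniqueness result of Eriksson and Koivunen (stated as Lemma \ref{lemma_ica}) to the residual $\Ll\tilde{e}_t$, which is precisely the ``overcomplete extension'' of the Darmois--Skitovich theory you invoke. So the approach is correct and the same as the paper's, with your write-up merely spelling out the Yule--Walker justification for the regression step that the paper leaves implicit.
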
 
The proof closely follows the proof of Proposition 1 in \citeU{Gong:2015} and reposes upon the following foundational result in the field of ICA \citeU{eriksson:2004}.
\begin{lemma} \label{lemma_ica}
Let $\hat{e} = J r$ and $\hat{e} = M s$ be two representations of the n-dimensional random vector $\hat{e}$, where $J$ and $M$ are constant matrices of orders $n \times l$ and $n \times m$, respectively, and $r = (r_1, \ldots, r_l)^T$ and $s = (s_1, \ldots, s_m)^T$ are random vectors with independent components. Then the following assertions hold.
\begin{enumerate}
\item If the $i$th column of $J$ is not proportional to any column of $M$, then $r_i$ is Gaussian.
\item If the $i$th column of $J$ is proportional to the $j$th column of $M$, then the logarithms of the characteristic functions of $r_i$ and $s_j$ differ by a polynomial in a neighborhood of the origin.
\end{enumerate}
\end{lemma}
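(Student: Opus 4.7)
The plan is to convert the two representations $\hat e = Jr = Ms$ into a functional identity among log characteristic functions, and then deploy a Darmois--Skitovich-style finite-difference argument to show that the log characteristic function of the component of interest (or, in case 2, the difference of two such functions) is a polynomial in a neighborhood of the origin. Once this is established, Marcinkiewicz's theorem closes out case 1 by forcing the polynomial degree to be at most two.

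First, since $r$ has independent coordinates, the characteristic function of $\hat e$ factors as $\phi_{\hat e}(t) = \prod_{j=1}^l \phi_{r_j}(J_{\cdot j}^T t)$, and analogously in terms of $M$ and $s$. On a convex neighborhood $U$ of the origin where all factors are nonzero, take continuous logarithms $F_j := \log\phi_{r_j}$, $G_k := \log\phi_{s_k}$ to obtain
\begin{equation*}
\sum_{j=1}^l F_j(J_{\cdot j}^T t) \;=\; \sum_{k=1}^m G_k(M_{\cdot k}^T t), \qquad t \in U.
\end{equation*}
We may assume without loss of generality that no two columns of $J$ are proportional, and likewise for $M$, by absorbing scalar factors into the corresponding source and convolving any parallel sources together; this preserves both independence and the non-Gaussian/Gaussian classification.

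Now fix $i$ and set $a := J_{\cdot i}$. Let $\mathcal V$ denote the set of columns to eliminate: $\{J_{\cdot j}\}_{j\neq i} \cup \{M_{\cdot k}\}_{k=1}^m$ in case 1, and $\{J_{\cdot j}\}_{j\neq i} \cup \{M_{\cdot k}\}_{k \neq j^*}$ in case 2, where $M_{\cdot j^*}$ is the unique column of $M$ parallel to $a$. By the WLOG reduction together with the hypothesis of the corresponding case, $a$ is not parallel to any $v \in \mathcal V$, so each hyperplane $v^\perp$ fails to lie inside $a^\perp$, and we can pick $h \in v^\perp$ with $a^T h$ taking any prescribed (sufficiently small) real value. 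Assembling one such vector $h_r$ per element $v_r \in \mathcal V$ and applying the iterated finite-difference operator $\Delta_{h_1}\cdots\Delta_{h_N}$ to the identity, each summand whose direction lies in some $v_r^\perp$ is annihilated; after rescaling so that $J_{\cdot i} = M_{\cdot j^*}$ in case 2, the identity collapses to
\begin{equation*}
(\Delta_{\alpha_1}\cdots\Delta_{\alpha_N} H)(a^T t) \;=\; 0, \qquad \alpha_r := a^T h_r,
\end{equation*}
where $H = F_i$ in case 1 and $H = F_i - G_{j^*}$ in case 2. Because each $h_r$ may vary freely within $v_r^\perp$, the scalars $\alpha_r$ sweep out open intervals around $0$, so arbitrary $N$th-order iterated finite differences of the continuous function $H$ vanish in a neighborhood of the origin. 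A classical theorem of Fr\'echet then implies that $H$ is a polynomial of degree less than $N$ near $0$. This is precisely the conclusion of case 2. For case 1, the fact that $F_i$ is polynomial near the origin says that all cumulants of $r_i$ of order above $\deg F_i$ vanish, and Marcinkiewicz's theorem then forces $\deg F_i \le 2$, so $r_i$ is Gaussian.

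The main obstacle is the column-by-column annihilation step: one must verify that the $h_r$ can simultaneously satisfy one orthogonality constraint per element of $\mathcal V$ while still producing free, nonzero scalar steps $\alpha_r = a^T h_r$. This is exactly where the non-proportionality hypotheses are consumed---without them, $v_r^\perp$ would lie inside $a^\perp$ and the step would be forced to zero, defeating the argument. Smaller technical points include justifying the WLOG merger of parallel columns via convolution, handling the domain on which the logarithms are continuous single-valued, and passing from ``polynomial log characteristic function near zero'' to ``Gaussian distribution'' in case 1, which relies only on the fact that a probability law with finitely many nonzero cumulants must have at most two of them.
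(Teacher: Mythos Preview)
The paper does not prove this lemma at all; it is quoted verbatim as a foundational ICA result from Eriksson and Koivunen (2004) and simply invoked to prove Proposition~\ref{L_ident}. So there is no ``paper's own proof'' to compare against.

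That said, your outline is essentially the classical route by which results of this type (going back to Kagan--Linnik--Rao and Darmois--Skitovich) are established, and it is in the same spirit as the argument in the cited reference: pass to second characteristic functions, exploit the additive structure $\sum_j F_j(J_{\cdot j}^T t) = \sum_k G_k(M_{\cdot k}^T t)$, annihilate all but the target direction via iterated finite differences along vectors orthogonal to the unwanted columns, conclude polynomiality of $H$ from Fr\'echet's theorem, and finish case~1 with Marcinkiewicz. The key geometric observation---that non-proportionality of $a$ to each $v_r$ guarantees $v_r^\perp \not\subset a^\perp$ and hence a nonzero step $\alpha_r$---is exactly right and is where the hypothesis is used.

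One small point worth tightening: your WLOG merger of parallel columns is fine for case~1, but in case~2 you should be careful that merging columns within $M$ does not accidentally combine $M_{\cdot j^*}$ with another column of $M$ parallel to it, since the conclusion is phrased about the individual $s_{j^*}$. In the usual formulation one either assumes distinct (non-proportional) columns within each of $J$ and $M$ from the outset, or states the conclusion for the merged source; the lemma as quoted is implicitly of the first type. Also, Marcinkiewicz strictly speaking says that if $\exp(P)$ is a characteristic function with $P$ an entire polynomial then $\deg P \le 2$; you need a brief remark that a polynomial log characteristic function on a neighborhood of $0$ extends to all of $\mathbb{R}$ (the characteristic function is then entire of finite order), which is standard. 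These are routine cleanups rather than gaps.
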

Intuitively, this result states that if $r$ is non-Gaussian with independent elements, and if $Jr = Ms$, it must be that $M$ and $J$ are equal up to permutation and scaling of columns. This implies that one may estimate $M$ from only observations of $\hat{e}$ and that the estimate of $M$ should be equal up to permutations and scalings of the true data generating $M$.

To apply Lemma \ref{lemma_ica} to Proposition \ref{L_ident}, note that $\A^k$ is identifiable by linear regression. Thus, the error component $\hat{e} = \tilde{x}_t - \A^k \tilde{x}_{t - 1} = \Ll \tilde{e}_t$ satisfies condition 1 of Lemma \ref{lemma_ica} and $\Ll$ is identifiable up to permutations and scalings since $\tilde{e}_t$ are non-Gaussian. 

\subsection{Complete identifiability of SVAR under $\C = I$}

 Using the identifiability result for $\Ll$ in Proposition \ref{L_ident} we can derive identifiability statements and conditions for $\C$ and $\A$ for the subsampled SVAR. We first require a few mild assumptions.

\begin{description}
\item[A1] $x_t$ is stationary so that all singular values of $\A$ have modulus less than one. 
\item[A2] The distributions $p_{e_j}$ are distinct for each $j$ after rescaling $e_j$ by any non-zero scale factor, their characteristic functions are all analytic (or they are all non-vanishing), and none of them has an exponent factor with polynomial of degree at least 2.
\item[A3] All $p_{e_j}$ are asymmetric.
\end{description}
Assumption A1 is standard in time series modeling \citeU{Lutkepohl:2005} and A2 is also common in non-Gaussian, ICA-type models. \cite{Gong:2015} provide identifiability results under A1 and A2 for the subsampled VAR with no instantaneous correlations, $\C = I$. We restate their result in our framework, both for comparison with the SVAR results and for use in Section~\ref{mf}, where we consider the mixed frequency setting. 

\begin{theorem} \label{Gong_theorem}
(Gong et al. 2015) Suppose $e_{tj}$ is non-Gaussian for all $t,j$, and that the data $\tilde{x}_t$ are generated by Eq. (\refeq{svar_f}) with $\C = I_p$. Further assume that the process admits another $k$th order subsampling representation $(\A', I_p, p'_E;k)$. If assumptions A1 and A2 hold, the following statements are true.
\begin{enumerate}
\item $\A'$ can be represented as $\A = \A D_1$, where $D_1$ is a diagonal matrix with $1$ or $-1$ on the diagonal. If we constrain the self influences to be positive, represented by the diagonal entries, then $\A' = \A$.
\item If A3 also holds, then $\A' = \A$.
\end{enumerate}   
\end{theorem}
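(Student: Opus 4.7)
The plan is to bootstrap Proposition \ref{L_ident} into a statement about $\A$ itself by exploiting the special form $\Ll = (I_p, \A, \A^2, \ldots, \A^{k-1})$ that the assumption $\C = I_p$ imposes, together with the distribution-matching constraints from A2 (and A3 for part 2). Two facts will be used throughout. First, ordinary least squares on the subsampled recursion $\tilde{x}_{t+1} = \A^k \tilde{x}_t + \Ll \tilde{e}_{t+1}$ identifies the lag-$k$ transition matrix because $\Ll \tilde{e}_{t+1}$ is mean zero and uncorrelated with $\tilde{x}_t$, so any alternative representation must have $(\A')^k = \A^k$. Second, Proposition \ref{L_ident} applied to the residual $\Ll \tilde{e}_{t+1} = \Ll' \tilde{e}'_{t+1}$ yields $\Ll' = \Ll P D$ for some $kp \times kp$ permutation $P$ and nonsingular diagonal $D$.

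Next I would index the columns of $\Ll$ by pairs $(l, j)$ with $l \in \{0, \ldots, k-1\}$ and $j \in \{1, \ldots, p\}$, noting that the shock attached to column $(l, j)$ has distribution $p_{e_j}$, independent of $l$. Combining Lemma \ref{lemma_ica} with the distinctness-under-rescaling in A2, the permutation $P$ must preserve the series label $j$; after absorbing any global series relabeling of the primed shocks we may take this relabeling to be the identity. Consequently $P$ decomposes into $p$ independent permutations $\tau_j$ of $\{0, \ldots, k-1\}$ together with scalars $d_{l, j}$ satisfying $(\A')^l e_j = d_{l, j}\, \A^{\tau_j(l)} e_j$ for all $l, j$. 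The $l = 0$ equation reads $e_j = d_{0, j}\, \A^{\tau_j(0)} e_j$, and under stationarity A1 no column of $\A^m$ for $m \geq 1$ can be proportional to a standard basis vector without contradicting the ICA matching constraints on the remaining blocks; this forces $\tau_j(0) = 0$ and $d_{0, j} = 1$.

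With the $l = 0$ block locked, the $l = 1$ relation $\A' e_j = d_{1, j}\, \A^{\tau_j(1)} e_j$ together with the higher-$l$ block matchings and the identity $(\A')^k = \A^k$ forces $\tau_j(1) = 1$ for every $j$, yielding $\A' = \A D_1$ with $D_1 = \mathrm{diag}(d_{1, 1}, \ldots, d_{1, p})$. Because $\C = I_p$ fixes the diagonal gains in both representations, matching the induced shock distributions via Lemma \ref{lemma_ica} and A2 restricts each $d_{1, j}$ to $\pm 1$, proving part 1. Under the additional asymmetry assumption A3, no sign flip of an asymmetric density preserves it, so each $d_{1, j} = +1$, $D_1 = I$, and $\A' = \A$, giving part 2.

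The main obstacle will be the combinatorial-algebraic bookkeeping in the middle step: ruling out non-identity intra-series permutations $\tau_j$ requires propagating the block-level ICA matching consistently across all $l$ via $(\A')^k = \A^k$, and carefully excluding degenerate configurations of $\A$ in which some power $\A^m e_j$ coincidentally lies along a basis direction or along another column direction of $\Ll$. Once the permutation structure is pinned down, the distribution-matching arguments that restrict the scalings to $\pm 1$ and then, under A3, to $+1$ are essentially routine applications of Lemma \ref{lemma_ica}.
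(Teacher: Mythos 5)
Your overall scaffolding -- identify $\A^k$ by regression, apply Proposition \ref{L_ident} to get $\Ll'=\Ll P D$, and use A2 to force the column matching to respect the series label -- is the same as what the paper does (note the paper does not reprove Theorem \ref{Gong_theorem}, which is restated from Gong et al.; its own proof of the analogous Theorem \ref{mixed_corr_theorem} in the Appendix is the relevant comparison). The genuine gap is exactly the step you flag as the ``main obstacle'': pinning down the within-series lag permutation $\tau_j$. The claim you offer for the $l=0$ block -- that under A1 no column of $\A^m$, $m\ge 1$, can be proportional to a standard basis vector -- is false: any stable diagonal $\A$, or any $\A$ with $\A e_j = c\,e_j$, $|c|<1$, gives $\A^m e_j \propto e_j$. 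In such cases no particular $\tau_j(0)$ is ``forced'' at all (the permutation is not even unique when within-series columns are mutually proportional), so an argument organized around first fixing $\tau_j(0)=0$ and then propagating cannot go through as stated. Similarly, ``the higher-$l$ block matchings and $(\A')^k=\A^k$ force $\tau_j(1)=1$'' is asserted rather than proved; $(\A')^k=\A^k$ alone is compatible with many permutation/sign patterns. A smaller but real issue: you cannot ``absorb a global series relabeling of the primed shocks,'' since relabeling would turn $\C'=I_p$ into a permutation matrix; the identity of the cross-series matching itself needs an argument (it does follow from the lag-$0$ blocks plus A2, but that is intertwined with the step you left open).

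The missing ingredient, and the way the paper closes this hole, is a norm-ordering argument from A1: since all singular values of $\A$ have modulus less than one, $\|\A^{l+1}\C_{:i}\|_2 \le \|\A\|_2\,\|\A^{l}\C_{:i}\|_2 < \|\A^{l}\C_{:i}\|_2$, so within each series the column norms of $\Ll$ (and of $\Ll'$) are strictly decreasing in the lag index. Because the shock law of series $j$ is the same at every lag, the matched scale factor has a common modulus across $l$ within a series; in the $\C=I_p$ setting the lag-$0$ blocks of both $\Ll$ and $\Ll'$ are exactly $I_p$ (columns of norm one), which together with the decreasing norms pins that common modulus to one and ties $p'_{e_j}$ to $p_{e_j}$ at the same scale (the paper's Theorem \ref{mixed_corr_theorem} instead invokes A4 for this). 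Equal norms of matched columns then force the matching to preserve the lag level, yielding $\A'=\A D_1$ with $D_1$ diagonal $\pm 1$, after which your treatment of the sign ambiguity via A3 is fine. Without the norm-monotonicity step (or some substitute for it), the combinatorial core of the theorem is not established.
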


\subsection{Complete identifiability of general SVAR model}

For identifiability of the full SVAR model under subsampling, we require two additional assumptions: 
\begin{description}
\item[A4] The variance of each $p_{e_j}$ is equal to one, i.e., $\Lambda = I_p$.
\item[A5] $\C$ is full rank.
\end{description}
Assumption A4 is common in  SVAR models due to the inherent non-identifiability between scaling the $e_{tj}$s and scaling the columns of $\C$. Assumption A5 is mild, and contains the more restrictive assumptions in non-Gaussian SVAR models (necessary to infer the instantaneous structural DAG \citeU{shimizu:2006}), that $\C$ may be row and column permuted to a lower triangular matrix with non-zeros on the diagonal. Under these assumptions, we have the following identifiability result for general subsampled SVAR models: 

\begin{theorem} \label{mixed_corr_theorem}
 Suppose $e_{tj}$ are all non-Gaussian and independent, and that the data are generated by a SVAR(1) process with representation $(\A, \C,p_e;k)$ that also admits another subsampling representation $(\A', \C',p'_e;k)$. If assumptions A1,A2 and A4 hold, the following statements are true:
 \begin{enumerate}
 \item $\C$ is equal to $\C'$ up to permutation of columns and scaling of columns by $1$ or $-1$, i.e. that $\C' = \C P$ where $P$ is a scaled permutation matrix with $1$ or $-1$ elements. This implies $\Sigma = \C\C^T = \C' \C'^T = \Sigma^T$.
 \item If A3 and A5 also hold then $\A$ is equal to $\A'$.
 \end{enumerate}
 \end{theorem}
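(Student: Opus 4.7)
The plan is to apply Proposition~\ref{L_ident} to identify the block matrix $\Ll = (\C, \A\C, \ldots, \A^{k-1}\C)$ up to signed column permutation, and then exploit the internal block structure of $\Ll$ together with A1--A5 to pin down $\C$ and $\A$ individually. First, $\A^k$ is identifiable as the lag-one transition matrix of the stationary VAR(1) satisfied by $\tilde{x}_t$, so $(\A')^k = \A^k$. Applying Proposition~\ref{L_ident} yields $\Ll' = \Ll P D$ for a permutation matrix $P$ and a diagonal $D$. Under A4 the diagonal of $D$ is forced to lie in $\{\pm 1\}$, since the corresponding shock variances in $\tilde{e}'_t$ must remain one. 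Under A2, $P$ can only exchange columns of $\Ll$ whose associated shocks have matching distributions; because the $k$ copies of shock $j$ across blocks in $\tilde{e}_t$ all share distribution $p_{e_j}$, $P$ factors as a relabelling $\sigma$ of the $p$ groups composed with a within-group permutation $\tau_j$ of the $k$ block indices.

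For Claim~1, the task reduces to showing that $\tau_j$ fixes block index $0$ for every $j$, so that the first block of $\Ll'$ (namely $\C'$) is a signed column permutation of the first block of $\Ll$ (namely $\C$). Writing $v_j$ for the $\sigma(j)$-th column of $\C$, the columnwise identification
\begin{equation*}
(\A')^l\, \C'_{:,j} \;=\; \epsilon_{l,j}\, \A^{\tau_j(l)}\, v_j, \qquad l = 0, \ldots, k-1,
\end{equation*}
translates into a cyclic orbit of $\C'_{:,j}$ under $\A'$ visiting the $k$ powers $\{\A^0 v_j, \ldots, \A^{k-1} v_j\}$ up to signs. Combining this with $(\A')^k = \A^k$ and the invertibility of $\A$ under A1, I argue that any candidate $\tau_j(0) = r > 0$ produces, after cancellation through the closed orbit, a relation of the form $\A^k v_j = \pm v_j$. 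Stationarity rules this out, so $\tau_j(0) = 0$. Hence $\C' = \C\, \Pi\, D_0$, with $\Pi$ the permutation matrix of $\sigma$ and $D_0$ a $\pm 1$ diagonal, and in particular $\Sigma = \C \C^\T = \C' (\C')^\T$.

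For Claim~2, the second-block identification yields $\A' \C' = (\A \C)\, \Pi\, E$ for some signed diagonal $E$; by A5 this gives $\A' = \A\, \C\, \Pi\, (E D_0^{-1})\, \Pi^\T\, \C^{-1}$. Under A3 each $p_{e_j}$ is asymmetric, so flipping the sign of any coordinate of $\tilde{e}_t$ produces a different shock distribution and is therefore incompatible with matching the primed representation. This pins $E D_0^{-1}$ to the identity, whence $\A' = \A$.

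The main obstacle is the cyclic-orbit step in the proof of Claim~1: one must rule out every nontrivial $\tau_j$ by tracking both the power of $\A$ and the accumulated signs through $k$ iterations of $\A'$, closing the chain using $(\A')^k = \A^k$ and invoking A1 to prevent any spurious $\pm 1$ eigenvalue of $\A^k$ on the vectors $v_j$. The remaining steps are essentially bookkeeping on top of Proposition~\ref{L_ident} and the ICA lemma.
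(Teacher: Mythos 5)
Your overall architecture matches the paper's: identify $\Ll=(\C,\A\C,\ldots,\A^{k-1}\C)$ up to signed column permutation via Proposition~\ref{L_ident}, use A2 to confine the permutation to columns driven by the same shock, use A4 to force the scalings into $\{\pm 1\}$, read $\C$ off the first block, and recover $\A$ from the second block using A3 and A5. The genuine gap is in your key step for Claim~1, the cyclic-orbit argument meant to show $\tau_j(0)=0$. First, A1 does not give invertibility of $\A$: it only bounds the singular values below one in modulus, and $\A$ may be singular, so ``invertibility of $\A$ under A1'' is not available. Second, and more importantly, the cancellation you describe does not close to $\A^k v_j=\pm v_j$. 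Chaining the relations $(\A')^l\C'_{:,j}=\epsilon_{l,j}\A^{\tau_j(l)}v_j$ with $(\A')^k=\A^k$ only returns $\A^k u=\pm\A^k u$ for $u=\A^{\tau_j(0)}v_j$, i.e.\ the loop closes vacuously (or at best forces $\A^k u=0$), not an eigen-relation of $\A^k$ on $v_j$. What the relations do give, combining $l=0$ with the index $m$ satisfying $\tau_j(m)=0$, is $(\A')^m\A^{\tau_j(0)}v_j=\pm v_j$; if $\tau_j(0)>0$ then $m\ge 1$ and $\tau_j(0)\ge 1$, and this is ruled out not by invertibility but by contraction, $\|(\A')^m\A^{\tau_j(0)}\|_2\le\|\A'\|_2^{\,m}\,\|\A\|_2^{\,\tau_j(0)}<1$ (A1 applied to both representations), which cannot fix a nonzero $v_j$ up to sign. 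So the step can be repaired, but only by converting it into a norm argument, and for the repaired version you must also note that $v_j\neq 0$ and that stationarity is assumed for the primed representation as well.

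The paper handles this step more directly: since $\|\A\|_2<1$, the within-shock columns satisfy $\|\A^{l+1}\C_{:i}\|_2<\|\A^{l}\C_{:i}\|_2$, so in both $\Ll$ and $\Ll'$ the $k$ columns of each shock group have strictly decreasing norms in the block index; because the matching scalings are $\pm 1$ by A4, matched columns have equal norms and hence must lie in the same block. This gives $\C'=\C P_0$ at once, and under A3 the per-column signs are forced to agree across blocks (the same shock $e'_{j}$ drives every block of $\Ll'$, and an asymmetric distribution cannot match both $e_{\sigma(j)}$ and $-e_{\sigma(j)}$), so $\A'\C P=\A\C P$ and A5 yields $\A'=\A$. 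Your Claim~2 reasoning is sound in substance and coincides with this, provided the sign-consistency point is made explicit; the deficiency is confined to the cross-block exclusion step of Claim~1.
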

 
 The requirement that $\C$ be full rank is due to the structure of $\Ll$. Since one may identify $\C$ as the first $p$ columns of $\Ll$, to obtain $\A$ we must premultiply the second set of $p$ columns of $\Ll$ by $\C^{-1}$. The asymmetry assumption is needed since the scaling of the columns of $\C$ and $\A \C$ by 1s or -1s is ambiguous if the distributions are symmetric;  the asymmetry assumption ensures that the unit scalings are identifiable. See the Appendix for a full proof.  

If the instantaneous causal effects follow a directed acyclic graph (DAG), we may identify the DAG structure without any prior information about causal ordering of the variables in the DAG. 
 \begin{corollary} \label{corcor}
 Suppose assumptions A1, A2, and A4 hold. Suppose also that the true SVAR process corresponds to a DAG $G$, i.e. it has a lower triangular structural matrix $\C$ with positive diagonals, and it also admits another representation with structural matrix $\C'$. Then $\C = \C'$. This implies that the structure of $G$ is identifiable without prior specification of the causal ordering of $G$. 
 \end{corollary}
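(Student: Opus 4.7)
The strategy is to invoke Theorem~\ref{mixed_corr_theorem}, part~1, and then exploit the combinatorial rigidity of lower triangular matrices under column permutations. First I would observe that assumption A5 is automatic under the hypotheses: a lower triangular matrix with strictly positive diagonal has nonzero determinant and is therefore full rank. Hence A1, A2, A4, and A5 all hold, and Theorem~\ref{mixed_corr_theorem} part~1 delivers $\C' = \C P$ for a signed permutation matrix $P$. I would then decompose $P = D_{\sigma}\Pi$, where $\Pi$ encodes a permutation $\pi$ of $\{1,\ldots,p\}$ and $D_\sigma$ is diagonal with entries in $\{+1,-1\}$.

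Next, interpreting ``another representation with structural matrix $\C'$'' as one that likewise corresponds to the DAG $G$, I take $\C'$ to be lower triangular with positive diagonal. The $(i,j)$ entry of $\C P$ equals $\sigma_j \C_{i,\pi(j)}$; for this to vanish whenever $i<j$ we must have $\C_{i,\pi(j)} = 0$ in that range, which by lower-triangularity of $\C$ forces $\pi(j) \geq j$ for every $j$. Since $\pi$ is a permutation and $\sum_j \pi(j) = \sum_j j$, the inequalities collapse to equalities, so $\pi$ is the identity. (A descending induction---$\pi(p) \geq p$ forcing $\pi(p) = p$, then $\pi(p-1) \geq p-1$ forcing $\pi(p-1) = p-1$, and so on---accomplishes the same thing.)

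With $\pi$ trivial, $P = D_\sigma$ is diagonal, so $\C'_{jj} = \sigma_j \C_{jj}$; positivity of both diagonals forces every $\sigma_j = +1$, and hence $P = I$ and $\C' = \C$. Identifiability of the DAG $G$ without prior knowledge of its causal ordering then follows immediately, since any DAG-compatible representation of the process is pinned down uniquely. The main obstacle is conceptual rather than technical: one must justify that $\C'$ may be taken to be lower triangular with positive diagonal based on its interpretation as a DAG representation of the same process. Once this is granted, the combinatorial step about permutations preserving lower-triangularity and the subsequent sign-matching step are both elementary.
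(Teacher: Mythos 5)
Your argument is correct and takes essentially the same route as the paper: apply Theorem \ref{mixed_corr_theorem}, part 1, to obtain $\C' = \C P$ for a signed permutation matrix $P$, then use the triangular/DAG structure of $\C$ to force $P = I$; your opening observation that A5 is automatic for a lower triangular matrix with positive diagonal is precisely what licenses the appeal to that theorem. The one difference is how the permutation ambiguity is resolved: the paper outsources this step to the identifiability argument of Shimizu et al.\ (2006), which does not require the competing representation to be lower triangular in the given variable ordering, whereas you assume $\C'$ is itself lower triangular with positive diagonal --- the conceptual point you flag yourself. This matters for the corollary's final claim (``identifiable without prior specification of the causal ordering''), since a priori an alternative DAG-compatible $\C'$ might be lower triangular only after an unknown simultaneous row/column relabeling of the variables. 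The gap closes in one line, though: if the permutation $\pi$ underlying $P$ were nontrivial, then some $j$ has $\pi(j) > j$, so $(\C P)_{jj} = \pm\,\C_{j,\pi(j)} = 0$ because $\C$ is lower triangular; and since simultaneous row/column permutations preserve the multiset of diagonal entries, no such $\C P$ can be brought to lower triangular form with nonzero (let alone positive) diagonal under any vertex ordering. Hence $\pi$ must be the identity for any DAG-compatible $\C'$, and your sign-matching step via the positive diagonals then yields $\C' = \C$ exactly as you argued, with your explicit $\pi(j) \geq j$ counting argument serving as a self-contained replacement for the citation the paper relies on.
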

 This result follows from the fact that $\C$ may be identified up to column permutations. Based on the identifiability results of \citeU{shimizu:2006}, if $\C$ follows a DAG structure, it may be row and column permuted to a unique lower triangular matrix. The row permutations identify the causal ordering, and the nonzero elements below the diagonal identify the edges in $G$. See \citeU{shimizu:2006} for more details on identifiability and estimation of the DAG from $\C$. 
 
 Taken together, the results of Theorem \ref{mixed_corr_theorem} and Corollary \ref{corcor} imply that when the shocks, $e_t$, are independent and non-Gaussian, a complete causal diagram of the lagged effects and the instantaneous effects are fully identifiable from the subsampled time series, $\tilde{\bff X}$.

\section{Mixed Frequency SVAR} \label{mf}
Estimation and forecasting of mixed frequency time series are commonly approached using both standard VAR and SVAR models \citeU{foroni:2013,schorfheide:2015}. 
Typically, the VAR model is fit at the same scale as the fastest sampled time series, setting C in Figure~\ref{sampling_types}. Due to costly data collection, especially for large macroeconomic indicators like GDP, this scale is generally arbitrary and may not reflect the true causal dynamics, leading to confounded Granger and instantaneous causality judgements \citeU{zhou:2014,breitung:2002}. In particular, if the true causal time scale, or one of interest to an analyst, is at a lower rate as in setting D in Figure~\ref{sampling_types}, then a causal analysis at the observed rate will run into the same problems as those for the single frequency subsampling case as discussed in Section \ref{cs}. We provide an example at the end of Section \ref{mfsvar}.

Identifiability conditions for mixed frequency VAR models with no subsampling at the fastest scale (Figure~\ref{sampling_types}B) was an open problem for many years \citeU{chen:1998} . Anderson et al. \citeU{anderson:2015}  recently showed the mixed frequency VAR (MF-VAR) of type B in Figure~\ref{sampling_types} is \emph{generically} identifiable from the first two observed moments of the MF-VAR, meaning that unidentifiable models make up at most a set of measure zero of the parameter space. However, no explicit identifiability conditions of the VAR process were given.

In this section, we generalize our identifiability results from Section \ref{subsamp_sec} to the mixed frequency case with arbitrary levels of subsampling for each time series. Our analysis indicates that Granger and instantaneous causal effects can be accurately estimated from mixed frequency time series. Specifically,  we use the results from Section \ref{subsamp_sec} to provide explicit identifiability conditions for MF-SVAR models under arbitrary subsampling (cases B, C, and D in Figure~\ref{sampling_types}) with non-Gaussian error assumptions. Together, our framework provides a unified way of deriving explicit identifiability conditions for both subsampling and mixed frequency cases. We note that while case C in Figure~\ref{sampling_types} is a subsampled version of the standard mixed frequency case, our results also cover mixed frequency subsampling like case D. To our knowledge, this is the first identifiability result for subsampled mixed frequency cases like C and D. 
\subsection{Mixed Frequency SVAR} \label{mfsvar}
For simplicity of presentation, we assume each time series in $x_t \in \mathbb{R}^p$ is sampled at one of two sampling rates, slow subsampling rates $k_s$ and fast subsampling rates $k_f$. We then write $x_t = (x_t^s, x_t^f)$ where $x_t^s$ are those series subsampled at $k_s$ and $x_t^f$ are those subsampled at $k_f$. Let ${\bff k} \in \{k_s,k_f\}^p$ be the list of subsampling rates for each time series. In Figure~\ref{sampling_types}B, $k_f = 1$ and $k_s = 2$, whereas in Figure~\ref{sampling_types}C, $k_f = 2$ and $k_s = 4$. Analogous to the subsampled case, we refer to a parameterization of a MF-SVAR model as $(\A, \C, p_e;{\bff k})$, where ${\bff k}$ is now a $p$-vector. Let $k^{*}$ be the smallest multiple of both $k_s$ and $k_f$; for example, in Figure~\ref{sampling_types}C, $k^* = 4$. 


We may derive a similar representation to Eq. (\refeq{svar_sub}) for mixed frequency series. Fix a time point $t$ such that all series are observed.  Let $I^{(q)}$ be a modified  $p \times p$ identity matrix where all rows $i$ such that $x_{ti}$ is not observed at time $t - q$ are set to zero. Further, let $I^{(\bar{q})} = I - I^{(q)}$, $A^{(q)} = I^{(q)} A$, and $A^{(\bar{q})} = I^{(\bar{q})} A$. Then
\begin{align}
x_t &= \A x_{t-1} + \C e_t \nonumber \\
&= \A I^{(1)} x_{t-1} + \A I^{(\bar{1})} x_{t-1} + \C e_t \nonumber \\
&= \A I^{(1)} x_{t-1} + \A(\A^{(\bar{1})} x_{t-2} +  \C^{(\bar{1})} e_{t-1}) + \C e_t \nonumber \\
&= \ldots \nonumber \\
& =  \F \tilde{x}_{t-1} + \Ll \tilde{e}_t \label{mf_rep}, 
\end{align}
where 
\begin{align}
&\F = (\A , \A \A^{(\bar{1})}, \ldots, \A \A^{(\bar{1})}\ldots\A^{(\overline{k^* - 1})}), \,\,\, 
\Ll = (\C, \A \C,\A \A^{(\bar{1})}\C,\ldots,\A \A^{(\bar{1})}\ldots \A^{(\overline{k^* - 1})}\C), \nonumber \\ 
&\tilde{x}_{t-1} = (I^{(1)} x_{t-1}, \ldots,I^{(k)} x_{t-k^*}) \nonumber,\,\ \text{and  } \tilde{e}_t = (e_t, I^{(1)} e_{t-1}, \ldots, I^{(k^* - 1)} e_{t-k^* - 1}). \nonumber
\end{align}
Eq. (\refeq{mf_rep}) takes the same form as Eq. (\refeq{svar_sub}), namely some matrix $\F$ times the observed time series samples $\tilde{x}_{t-1}$ plus a matrix $\Ll$ times a vector of non-Gaussian errors $\tilde{e}_t$. This intuitively suggests that similar identifiability results will hold. 

In a \emph{subsampled} mixed frequency setting where the fastest rate is greater than one (Figure \ref{sampling_types} C), not accounting for subsampling may lead to not only the same kind of mistaken inferences as discussed in Section \ref{cs}, but also to some further mistakes unique to the mixed frequency case.

\begin{example}
Consider a subsampled mixed frequency SVAR process generated by Eq. (\ref{mf_rep}) with the same $(\A, \C)$ parameters given by Example \ref{ex1}. Suppose subsampling is not taken into account and $\tilde{\bff X}$ is analyzed instead as a classical mixed frequency series (case B) using MF-VAR methods based on the first two moments \citeU{anderson:2015}. Consider two cases:
\begin{enumerate}
    \item[] Case 1: true sampling rate is ${\bff k} = (2, 4)$. In this case, if $\tilde{\bff X}$ is analyzed at the rate $(1,2)$ using the first two moments, then $\A$ and ${\bff \Sigma}$ are not identifiable at this rate since both off diagonal elements of $\A$ are zero \citeU{anderson:2015}. Thus, no inference of both the instantaneous correlations and lagged effects are even possible.
    \item[] Case 2: true sampling rate is ${\bff k} = (2, 6)$. In this case, if $\tilde{\bff X}$ is analyzed at the rate $(1,3)$ using the first two moments, the estimated $\A$ and covariance ${\bff \Sigma}$ will be the same as that in Example \ref{ex1} \citeU{anderson:2015}, leading to an incorrect inference that there is an instantaneous effect but not any directed lagged effect.  
\end{enumerate}
\end{example}

\subsection{Identifiability of MF-SVAR}
We provide generalizations of both Theorems \ref{Gong_theorem} and \ref{mixed_corr_theorem} to the mixed frequency case. 

\begin{theorem} \label{Gong_theorem_mf}
Suppose the $e_{ti}$ are non-Gaussian and independent for all $t$ and $i$, and that the data $\tilde{x}_t$ are generated by Eq. (\refeq{svar_f}) with $\C = I_p$. Further suppose that the process also admits another mixed frequency subsampling representation $(\A', I_p, p'_e;{\bff k})$. If assumptions A1 and A2 hold, the following statements are true.
\begin{enumerate}

\item $\A'$ can be represented as $\A' = \A D_1$, where $D_1$ is a diagonal matrix with $1$ or $-1$ on the diagonal. \label{i1}
\item If any multiple of $k_i$ is $1$ smaller than some multiple of $k_j$, then $\A_{ij} = \A'_{ij}$. If $\A_{ij} \neq 0$ this implies that $(D_1)_{jj} = 1$, i.e. the $j$th columns of $\A$ and $\A'$ are equal: $\A_{:j} = \A'_{:j}$.
\item If each $p_{e_i}$ is asymmetric, we have $\A' = \A$. \label{i3}
\end{enumerate}   
\end{theorem}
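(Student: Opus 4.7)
The plan is to transfer the argument behind Theorem~\ref{Gong_theorem} to the mixed-frequency representation in Eq.~(\ref{mf_rep}), using Lemma~\ref{lemma_ica} as the main tool. First, identify $\F$ by best linear prediction of $\tilde{x}_t$ from its observed past and extract the residual $\hat{e}_t := \tilde{x}_t - \F \tilde{x}_{t-1} = \Ll \tilde{e}_t$ from the observed joint distribution. The same residual equals $\Ll' \tilde{e}'_t$ under the competing parameterization $(\A', I_p, p'_e; {\bf k})$, and in both cases $\tilde{e}_t, \tilde{e}'_t$ have non-Gaussian independent components once the deterministically-zero entries imposed by the masks $I^{(\bar{q})}$ are dropped. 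Applying Lemma~\ref{lemma_ica} to $\Ll \tilde{e}_t = \Ll' \tilde{e}'_t$ then gives a proportionality matching between the non-zero columns of $\Ll$ and $\Ll'$ and, via Lemma~\ref{lemma_ica}(2) together with A2, distributional equality (up to scale) of each matched pair of shocks.

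Next, exploit the block structure of $\Ll$: with $\C = I_p$, the first $p$ columns of both $\Ll$ and $\Ll'$ equal the identity matrix, so they are unit column vectors corresponding to the shocks $e_{t,1},\ldots,e_{t,p}$. These first-block columns must be paired with one another by the ICA matching, because if a first-block column of $\Ll'$ (corresponding to $e'_{t,j}$) were matched to a later-block column of $\Ll$ (corresponding to a lagged shock $e_{t-q,j'}$), then the distributions of $e'_{t,j}$ and $e_{t-q,j'}$ would have to coincide after rescaling, which A2 prevents unless $j = j'$, and even then only with the sign-only ambiguity that survives. Reading off the second block of $\Ll$, which is $\A$ because $\A\C = \A$, and comparing to the analogous second block $\A'$ of $\Ll'$, gives $\A' = \A D_1$ with $D_1$ diagonal having entries in $\{-1,+1\}$. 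This is item~\ref{i1}.

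For item~2, the condition relating multiples of $k_i$ and $k_j$ guarantees some time at which $x_{i,\cdot}$ and $x_{j,\cdot}$ are observed exactly one step apart and both within the mixed-frequency regression window captured by $\F$. In that column configuration the coefficient $\A_{ij}$ appears as a scalar entry of $\F$, recoverable unambiguously (with no sign freedom) by linear regression of one observed entry on another. Since $\F$ is determined by the observed distribution alone, both parameterizations produce the same value, hence $\A_{ij} = \A'_{ij}$. Substituting into $\A' = \A D_1$ yields $\A_{ij} = \A_{ij}(D_1)_{jj}$, which forces $(D_1)_{jj} = 1$ whenever $\A_{ij} \neq 0$, so the entire $j$-th column of $\A'$ agrees with that of $\A$. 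For item~3, the asymmetry assumption A3 eliminates the remaining $\pm 1$ ambiguity: flipping the sign of any $e_{t,j}$ changes its distribution, contradicting the distributional matching obtained via Lemma~\ref{lemma_ica}. Hence $D_1 = I$ and $\A' = \A$.

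The main obstacle I anticipate is the first-block pairing in the second paragraph: in the mixed-frequency setting the later blocks of $\Ll$ are structured products $\A I^{(\bar{1})} \A I^{(\bar{2})} \cdots$ rather than simple powers of $\A$, so in principle one of these products could collapse to a scalar multiple of a unit vector and create a spurious match against a first-block column of $\Ll'$. Ruling this out requires careful use of A2 --- distinct shock distributions after rescaling, together with the fact that shocks at different time lags enter $\tilde{e}_t$ as distinct independent components --- to force any such matching to respect the block structure. This is where the mixed-frequency generalization departs most visibly from the uniform subsampled argument of \citeU{Gong:2015}, and is the step that will take most of the technical bookkeeping in the full proof.
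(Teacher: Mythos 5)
Your route is genuinely different from the paper's, and the differences matter: the paper proves items \ref{i1} and \ref{i3} by a one-line reduction --- since both parameterizations induce the same law for the mixed-frequency observations, they induce the same law for the coarser series obtained by keeping only the times (multiples of $k^*$) at which \emph{every} coordinate is observed, which is a uniformly subsampled VAR of order $k^*$ with $\C = I_p$, so Theorem \ref{Gong_theorem} applies verbatim; item 2 is then a short regression argument used only to fix column signs. You instead re-run the ICA matching directly on the representation in Eq.~(\ref{mf_rep}), and the steps at which your sketch is incomplete are precisely the steps the reduction sidesteps.

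Concretely, three gaps. First, the residual extraction: in sampling patterns like C and D of Figure \ref{sampling_types}, $\Ll\tilde{e}_t$ is \emph{not} uncorrelated with the regressors $\tilde{x}_{t-1}$ --- a shock to a series unobserved at a deep lag (e.g.\ $e_{t-3,1}$ when series $1$ is unobserved at $t-3$) propagates through $\A$ into an observed regressor at a shallower lag (e.g.\ $x_{t-2,1}$), so the population linear-projection coefficient need not equal $\F$ and the projection residual need not equal $\Ll\tilde{e}_t$; your opening step therefore does not follow from ``best linear prediction'' in general (it is fine for cases A and B). Second, the block-respecting matching that you flag as the main obstacle cannot be closed by A2 in the way you sketch: A2 gives distinctness of shock distributions \emph{within} each representation, not across the two candidate representations, so it does not by itself forbid a lag-$0$ column of $\Ll'$ from matching a deeper-lag column of $\Ll$. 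The paper's analogous step (appendix proof of Theorem \ref{mixed_corr_theorem}) uses the strict norm decay across lag blocks coming from A1 ($\|\A\|_2<1$), and in the mixed-frequency case the blocks are masked products $\A\A^{(\bar 1)}\cdots$, so that argument requires adaptation you have not supplied. Third, even granting the matching, ``the second block of $\Ll$ is $\A$'' is false under mixed frequencies: only the columns of $\A$ indexed by series unobserved at time $t-1$ appear there (in case B the fast series' column of $\A$ never appears in $\Ll$ at all), so column matching alone cannot yield $\A' = \A D_1$ for every column --- you would have to pull the missing columns out of $\F$, which reintroduces the first gap. Your arguments for items 2 and 3 (identifying the relevant entry when the two series are observed one step apart, and using asymmetry to kill the residual sign ambiguity) do coincide with the paper's; it is items \ref{i1} and \ref{i3} where your proposal, as written, does not yet constitute a proof, whereas the paper's subsample-to-$k^*$ reduction does.
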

\begin{proof} Points \ref{i1}. and \ref{i3}. follow since we may further subsample all series in $x_t$ to a subsampling rate of $k^*$. This gives a subsampled ${\bff \tilde{X}}$ with representation $(\A, I, p(e) ;k^*)$. Applying Theorem \ref{Gong_theorem} gives the result. Furthermore, we note that if some multiple of $k_i$ is one less than some multiple of $ k_j$, then there exists a set of $t$s for Eq. (\refeq{mf_rep}), where series $i$ is observed at time $t - 1$ and series $j$ is observed at time $t$. By identifiability of linear regression, $A'_{ij} = A_{ij}$. This resolves the sign ambiguity of the columns in \ref{i1}, so that $\A_{:j} = \A'_{:j}$.
\end{proof}

\begin{theorem} \label{mixed_corr_theorem_mf}
 Suppose the $e_{ti}$ are non-Gaussian and independent for all $t$ and $i$, and that the data is generated by an SVAR(1) process with representation $(\A, \C,p_e; {\bff k})$ that also admits another mixed frequency subsampling representation $(\A', \C',p'_e; {\bff k})$. If assumptions A1, A2, and A4 hold, the following statements are true:
 \begin{enumerate}
 \item $\C$ is equal to $\C'$ up to permutation of columns and scaling of columns by $1$ or $-1$, ie $\C' = \C P$ where $P$ is a scaled permutation matrix with $1$ or $-1$ elements. This implies that $\Sigma = \C\C^T = \C' \C'^T = \Sigma'$. \label{one}
 \item If $\C$ is lower triangular with positive diagonals, i.e. the instantaneous interactions follow a DAG, and if for all $i$ there exists a $j$ such that any multiple of $k_i$ is $1$ smaller than some multiple of $k_j$ with $A_{j:} C_{:i} \neq 0$, then $\A = \A'$. \label{2}
 \item If A3 and A5 also hold, then $\A  = \A'$. \label{three}
 \end{enumerate}
 \end{theorem}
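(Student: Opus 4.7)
The plan is to adapt the proof of Theorem \ref{mixed_corr_theorem} to the mixed-frequency representation Eq.~(\ref{mf_rep}). That equation has the same structural form as the subsampled SVAR, namely an autoregressive term $\F\tilde{x}_{t-1}$ whose coefficients are identifiable by linear regression, plus a non-Gaussian innovation term $\Ll\tilde{e}_t$ whose matrix $\Ll$ can be identified via Lemma \ref{lemma_ica} up to column permutation and scaling. The only new element is that the shocks in $\tilde{e}_t$ are indexed by a non-uniform multiplicity pattern determined by ${\bf k}$ and $k^{*}$: shock $e_{t,j}$ appears at time $t$ plus once for each lag $q\in\{1,\ldots,k^{*}-1\}$ at which series $j$ is unobserved.

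First I would fix a time $t$ at which all $p$ series are simultaneously observed (such times exist because $k^{*}$ is a common multiple of the $k_i$) and use stationarity A1 to identify $\F$ from linear regression of $x_t$ on the nonzero entries of $\tilde{x}_{t-1}$. The residual $x_t - \F\tilde{x}_{t-1} = \Ll\tilde{e}_t$ is then observable. Since the entries of $\tilde{e}_t$ are independent and non-Gaussian by A2, Lemma \ref{lemma_ica} implies that for any alternative MF-SVAR representation $(\A', \C', p'_e; {\bf k})$, $\Ll' = \Ll\Pi D$ for a column permutation $\Pi$ (constrained to mix columns only within equal-distribution classes) and an invertible diagonal $D$; A4 forces $|D_{ii}| = 1$.

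To establish statement \ref{one}, the key task is to show that $\Pi$ cannot swap first-block columns of $\Ll$ (which form $\C$) with later-block columns from the same distributional class. Because both $\tilde{e}_t$ and $\tilde{e}'_t$ have block dimensions dictated by the same sampling pattern ${\bf k}$, matching distributional multiplicities and using the linear-regression identifiability of $\F$ (whose structure $\F = (\A, \A\A^{(\bar{1})}, \ldots)$ pins down certain products of $\A$ and constrains the possible swaps) forces $\Pi$ to preserve the first block. Hence $\C' = \C P_1 D_1$ for a $p\times p$ column permutation $P_1$ and sign diagonal $D_1$, from which $\Sigma = \C\C^T = \C'(\C')^T$. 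For statement \ref{2}, lower triangularity of $\C$ with positive diagonals combined with the identification result of \citeU{shimizu:2006} forces $P_1 = I$ and $D_1 = I$, so $\C' = \C$; the frequency-alignment condition then ensures that the entry $\A_{ji}$ can be read off directly from $\F$ for every $i$ (picking the appropriate $j$ with $x_{t,j}$ observed and $x_{t-1,i}$ observed), and combined with invertibility of $\C$ and the recovered second block $\A\C$ of $\Ll$ (unambiguously matched by $\A_{j:}\C_{:i} \neq 0$), this pins down $\A' = \A$. For statement \ref{three}, A3 (asymmetry) resolves $D_1 = I$ as in Theorem \ref{mixed_corr_theorem}, A5 (full-rank $\C$) gives invertibility so that $\A = (\A\C)\C^{-1}$ is recovered from the second block of $\Ll$, and a residual permutation $P_1$ only relabels the shocks and therefore leaves $\A$ unchanged.

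The main obstacle is the step that rules out block-swapping permutations in the ICA recovery. In the pure-subsampled case of Theorem \ref{mixed_corr_theorem}, all $p$ shocks appear with the uniform multiplicity $k$, so the bookkeeping is symmetric. In the mixed-frequency case, each shock $e_{tj}$ has multiplicity depending on $k_j$, so the argument has to track each shock separately and rely more heavily on the structural constraint that $\Ll'$ arise from a valid MF-SVAR parameterization --- specifically, that the first-block columns coincide with a structural matrix $\C'$ while later blocks are products of $\A'$ with $\C'$ mediated by the $\A'^{(\bar{q})}$ selectors --- to exclude permutations that would otherwise be distributionally admissible.
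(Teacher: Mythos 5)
There is a genuine gap, and you flag it yourself: your plan applies Lemma \ref{lemma_ica} directly to the mixed-frequency representation Eq.~(\ref{mf_rep}) and then needs to rule out column matchings that swap first-block columns of $\Ll$ (which constitute $\C$) with later-block columns built from the same shock distribution --- but you never supply that argument, calling it ``the main obstacle.'' Distributional matching alone cannot do it, since shock $e_{(t-l)i}$ has the same law $p_{e_i}$ for every lag $l$; in the uniformly subsampled proof of Theorem \ref{mixed_corr_theorem} this is resolved by a concrete norm-ordering argument ($\|\A\|_2<1$ under A1 makes $\|\A^{l}\C_{:i}\|_2$ strictly decreasing in $l$, so the $\pm1$ proportionality forces an order-preserving match), and some analogue of that would have to be worked out for the irregular block structure of the mixed-frequency $\Ll$, where the multiplicity and placement of each shock depend on ${\bf k}$. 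As written, statements \ref{one} and \ref{three} are therefore not established. The paper avoids this difficulty entirely with a much simpler reduction that you missed: since $k^{*}$ is a common multiple of all the $k_i$, one may \emph{further subsample} every series down to rate $k^{*}$; any alternative MF representation $(\A',\C',p'_e;{\bf k})$ then yields an alternative uniformly subsampled representation $(\A',\C',p'_e;k^{*})$ of the same observed law, and Theorem \ref{mixed_corr_theorem} applies verbatim to give points \ref{one} and \ref{three}.

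Your sketch of point \ref{2} is also looser than what is needed. After $\C=\C'$ is forced by the lower-triangular, positive-diagonal structure, the remaining ambiguity is a per-column sign matrix $D$ on the second block, i.e.\ $\A'\C=\A\C D$; since column $j$ of $\A\C$ mixes several columns of $\A$ (namely $\sum_{i\ge j}\C_{ij}\A_{:i}$), saying the relevant entry of $\A$ can be ``read off from $\F$'' and is ``unambiguously matched'' does not by itself fix $D_{jj}$. The paper handles this with a backward induction over the columns of the triangular $\C$: start from column $p$, where $\A\C_{:p}=\C_{pp}\A_{:p}$ isolates $\A_{:p}$ up to sign, use the frequency-alignment condition ($A_{j:}C_{:i}\neq 0$ together with the one-step offset of the sampling grids, as in point 2 of Theorem \ref{Gong_theorem_mf}) to identify the sign via direct regression, and then peel off already-identified columns $\A_{:i}$, $i>j$, at each subsequent step. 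If you adopt the further-subsampling reduction for points \ref{one} and \ref{three} and this induction for point \ref{2}, your argument closes; as it stands, the central identification step is missing.
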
 
 The proofs of points \ref{one} and \ref{three} follow the same subsampling logic as the proof given for Theorem \ref{Gong_theorem_mf}. The proof of point \ref{2} is given in the Appendix. 

Taken together, Theorems \ref{Gong_theorem_mf} and \ref{mixed_corr_theorem_mf} demonstrate that identifiability of SVAR models still holds for mixed frequency series with subsampling under non-Gaussian errors. Note that points 1 and 3 in both Theorem \ref{Gong_theorem_mf} and Theorem \ref{mixed_corr_theorem_mf} are the same as their subsampled counterparts; point 2 in both Theorems shows how the mixed frequency setting provides additional information to resolve parameter ambiguities in the non-Gaussian setting. Specifically, in the SVAR(1) model when there is one time step difference between when series $x_j$ and $x_i$ is sampled, then $A_{ij}$ is identifiable. We can then use this information to resolve sign ambiguties in columns of $\A$, which leads to point 2 in both Theorems \ref{Gong_theorem_mf} and \ref{mixed_corr_theorem_mf}. This result applies directly to the standard mixed frequency setting \citeU{anderson:2015,schorfheide:2015} where one series is observed at every time step (Fig. \ref{sampling_types} B). It also applies to case D since there exists certain time steps where one series is observed directly before a latter series.
\section{Estimation}
We take a model-based approach to estimation. Specifically, we model the non-Gaussian error terms as a mixture of Gaussians with $m$ components. This approach has been used widely in econometrics and other fields as a flexible and tractable way of modeling non-Gaussianity in innovations \citeU{lanne:2015, Gong:2015}. Formally, we assume that $e_{tj}$ is drawn from the mixture distribution:
\begin{align}
z_{tj} \sim \text{Categorical}(\pi_j), \,\,\,\,\,e_{tj} \sim \mathcal{N}(\mu_{j z_{tj}}, \sigma^2_{j z_{tj}}) \nonumber
\end{align} 
 where $\mu_j, \sigma^2_j$ and $\pi_j$ are length $m$ vectors specifying the mean, variance, and mixing weight of each mixture component. The $z_{tj}$ component indicators are auxilliary variables introduced to facillitate tractable inference. Together the full set of parameters for the non-Gaussian structural VAR model is given by $\Theta = (\A,\C,\mu,\sigma^2,\pi)$ where $\mu, \sigma^2, \pi$ concatenate the mixture parameters of the errors across series. For example, $\mu_{ji}$ is the mean of the $i$th mixture component for the $j$th error distribution, and likewise for $\sigma^2$ and $\pi$.

\subsection{EM algorithm}

We develop an EM algorithm for joint maximum likelihood estimation of the full set of parameters $\Theta$ based only on the observed subsampled/mixed frequency data $\tilde{X}$. Importantly, our method is the same for both subsampled and mixed frequency data, unlike that of \cite{Gong:2015}, which is tailored specifically to the subsampled case. Furthermore, the VAR-specific (i.e. $\C = I$) EM algorithm of \cite{Gong:2015} introduces auxiliary noise terms to facilitate inference, rendering their resulting algorithm non-exact; in constrast, our algorithm introduces no such approximations. Since the the log-likelihood surface is non-convex, we employ multiple random restarts to avoid poor local optima. For the subsampled case, the local optima problem is particularly severe due to the nonidentifiability under the first two moments, implying that many $(\A, \C)$ parameter values tend to do a decent job at approximately fitting the data. Finally, the basic EM algorithm also suffers from slow convergence due to the large amount of missing data. To ameliorate this problem, we deploy the adaptive-overrealxed EM method \citeU{Salakhutdinov:2003}.

Let $\W = \C^{-1}$. Further, let $z_{tji} = 1$ if error $e_{tj}$ was generated by mixture component $i$ and $z_{tji} = 0$ otherwise. The complete log-likelihood of the SVAR(1) model with mixture of normal errors may be written as:
\begin{align}
\log p(X_{1:T},z_{1:T}|\Theta) = T \log |\W| + \sum_{t = 1}^T \sum_{j = 1}^p \sum_{i = 1}^m z_{tji} \left(  \log \pi_{ji} - \frac{1}{2}\log 2 \pi \sigma_{ji}^2 -\log \frac{\left(\W_j x_t - \W_j \A x_{t-1} + \mu_{ji} \right)^2}{2\sigma^2_{ji}} \right), 
\end{align}
where $\W_j$ is the $j$th row vector of $\W$. The EM algorithm alternates between the $E$-step, where we compute the conditional expectation $E\left(\log p(X_{1:T},z_{1:T}|\Theta) | \tilde{X} \right)$, and the $M$-step, where that expectation is maximized with respect to the parameters $\Theta$. We first provide the specific updates in the $M$-step, and then explain how the partiular conditional expectations used in the $M$-step are computed using a Kalman filter.

\subsection{M-step}
In the M-step, we maximize the expected complete log-likelihood conditional on the observed data, \\
$E\left(\log p(X_{1:T},z_{1:t}|\Theta) | \tilde{X}\right)$, with respect to $\Theta$. We perform this maximization via coordinate ascent, cycling through $\A$, $\W$, and $(\mu, \sigma^2, \pi)$ until convergence. The specific updates are given below.
\begin{description}
\item[$\bullet$] A update: Each row of $\A$, $\A_j$,  may be updated independently,
 \begin{align}
 \hat{\A}_j = \left(\sum_{t = 1}^T \sum_{i = 1}^m \frac{E(z_{tji} x_{t - 1} x_{t-1}^T | \tilde{X})}{\sigma^2_{ji}} \right)^{-1} \left( \sum_{t = 1}^T \sum_{i=1}^m \frac{- \mu_{ji} E(z_{tji} x_{t - 1}|\tilde{X}) + E(z_{tji} x_{t-1} x_t^T | \tilde{X}) \W_j^T}{\sigma^2_{ji}} \right).
 \end{align}

\item[$\bullet$] $\mu$, $\sigma^2$, and $\pi$ update:  \,\, These may be optimized jointly in one step using

 \begin{align}
 \hat{\mu}_{ji} = \frac{\sum_{t = 1}^T E(z_{tji} x_t|\tilde{X}) - \W_j \A E(z_{tji} x_{t-1} | \tilde{X})}{\sum_{t = 1}^T  E(z_{tji}|\tilde{X})},  \,\,\,\,\,\, \hat{\pi}_{ji} = \frac{\sum_{t = 1}^T E(z_{tji}|\tilde{X})}{T} \nonumber
 \end{align}
 \vspace{-.2in}
 \begin{align}
 \hspace{-.2in}\hat{\sigma}^2_{ji} = \frac{1}{\sum_{t = 1}^T  E(z_{tji}|\tilde{X})} \bigg( \sum_{t = 1}^T \W_j E(z_{tji} x_t x_t^T | \tilde{X}) \W_j^T + \W_j^T \A E(z_{tji} x_{t-1} x_{t-1}^T | \tilde{X} ) \A^T \W_j^T  + \hat{\mu}_{ji}^2 E(z_{tji}|\tilde{X}) \nonumber \\
  - 2 \mu_{ji} \W_j E(z_{tji} x_t | \tilde{X})  
 - 2 \W_j E(z_{tji} x_t x_{t-1}^T) \A^T \W_j^T + 2 \mu_{ji} \W_j \A E(z_{tji} x_{t-1})\bigg) \nonumber
 \end{align}

 \item[$\bullet$] W update: \,\,\,
 The maximization with respect to $\W$ is not given in closed form. Instead, we utilize the Newton-Raphson method. Let ${\bff w} = \text{vec}(\W)$ be the column-wise vectorization of $\W$. At each step, the next ${\bff w}$ iterate is given by
 \begin{align}
 {\bff w}^{l + 1} = {\bff w}^l - H({\bff w}^l)^{-1} \nabla l({\bff w}^l)  
 \end{align}
where $l({\bff w}) = E(\log p(X_{1:T},z_{1:t}|\Theta) | \tilde{X})$ and $H({\bff w})$ is the Hessian of $l({\bff w})$ with respect to ${\bff w}$. We provide explicit expressions for the gradient and Hessian in the Appendix. 
\end{description}

\subsection{E-step} \label{e_step}
All conditional expectations in the $M$-step above are computed using the Kalman filtering-smoothing algorithm. For simplicity of presentation, consider only one block of data, so that $X = x_{1:t}$, where $x_1$ and $x_t$ are fully observed but $x_{t'}$, $1 < t' < t$, have some missing data, and hence are not included $\tilde{X}$.  Any subsampled/mixed frequency time series can be broken into independent blocks of this type. The conditional expectation $E(z_{tji} x_{t}x_{t-1}^T | \tilde{X})$ under the past parameter values can be computed by noticing that
\begin{align} \label{comb}
E(z_{tji} x_{t}x_{t-1}^T | \tilde{X}) = E_{z_{1:t}} \left( z_{tji} E_{x}\left(x_{t}x_{t-1}^T | \tilde{X}, z_{1:t} \right) \right).
\end{align}
Now, for a fixed $z_{1:t}$, $E_{x}\left(x_{t}x_{t-1}^T | \tilde{X}, z_{1:t} \right)$ may be computed using the Kalman filtering-smoothing algorithm since for fixed $z_{1:t}$, $\tilde{x}_t$ follows a linear Gaussian state-space model with latent observations $x_t$. Thus, to compute the expectations in Eq. (\refeq{comb}) we compute $E_{x}\left(x_{t}x_{t-1}^T | \tilde{X}, z_{1:t} \right)$ for each $z_{1:t}$ combination, then average them together weighted by $p(z_{1:t} | \tilde{X}) z_{tji}$. The $p(z_{1:t} | \tilde{X})$ terms used in the averaging step may be computed by:
\begin{align}
p(z_{1:t} | \tilde{X}) \propto p(\tilde{X} | z_{1:t}) p(z_{1:t})
\end{align} 
where $p(z_{1:t})$ is given by the prior mixture component weights, $\pi$, and $p(\tilde{X} | z_{1:t})$ is the likelihood of the observed data, which may also be computed by one pass of the Kalman filtering algorithm. This processes is repeated for all expectations in the $E$-step. The computational complexity of this exact EM algorithm scales as $2^{(k + 1)p}$, since the Kalman filter must be run for all combinations of $z_{1:t}$ for each block. The approximate EM algorithm of \cite{Gong:2015} has the same computational complexity. Similar to \cite{Gong:2015}, we have explored approximate inference methods based on variational EM and Markov Chain Monte Carlo (MCMC) methods but found their performance to be quite poor; we discuss this further in the Discussion.

\section{Simulations}

We investigate the estimation performance of the SVAR under subsampling. We simulate data with $p = 2$ time series and $m=2$ mixture components. The asymmetric error distributions are given by: $\pi_1 = (.7,.3)$, $\sigma_1 = (.2,1)$, $\mu_1 = (.36,-.84)$ for $e_{t1}$ and $\pi_2 = (.7,.3)$, $\sigma_2 = (.2,1)$, $\mu_2 = (-.36,.84)$ for $e_{t2}$. We look at two cases each for $\A$ and $\C$:
\begin{align}
\A^{(1)} &= \left(\begin{array}{c c}
.98 & 0 \\
.2 & .98 \end{array} \right) \,\, &\A^{(2)} &= \left(\begin{array}{c c}
.98 & .31 \\
-.31 & .98 \end{array} \right) ,\,\,\,\,\, \C^{(1)} &= \left(\begin{array}{c c}
1 & 0 \\
0 & 1 \end{array} \right) \,\, &\C^{(2)} &= \left(\begin{array}{c c}
1 & 0 \\
-.2 & 1 \end{array} \right)  \nonumber
\end{align}
Simulations are performed for two subsampling factors, $k \in \{2, 3\}$, and three sample sizes, $T \in \{205,403, 805 \}$. Note that due to subsampling, the actual sample sizes are reduced. Data from each parameter configuration is generated 10 times and the EM algorithm is run on each realization using 1000 random restarts. Box plots of the estimates of two scenarios are shown in Figures \ref{f1} and \ref{f4}. Similar plots for the other scenarios are shown in the Appendix.

We next investigate estimation performance in subsampling and mixed frequency sampling as a function of the signal to noise ratio. In these experiments we use $\A^{(1)}$ and $\C^{(2)}$. We scale $\A$ by a factor to set its maximum eigenvalue to the desired level. We perform these experiments for both full subsampling of $k = 2 \text{ and } 3$ and mixed frequency subsampling where one series is observed at every time point and the other is subsampled. Data from each parameter configuration is generated 40 times. In Figure \ref{eigen} we plot the average absolute error of estimating the $\A$ and $\C$ matrices as a function of the maximum eigenvalue of $\A$. Estimation under subsampling is stable until the maximum eigenvalue falls to about 0.6-0.5, and estimation becomes dramatically worse. The individual boxplots for 40 simulation runs per configuration are given in Figures \ref{box2} and \ref{box3} in the Appendix, where it is clear that many outliers in estimation appear for both $\A$ and $\C$ estimation at a maximum eigenvalue of $.5$. The results also show that the standard errors are also dramatically larger, further indicating unstable estimation in this regime. The increasing error in the estimation of A as a function of signal to noise ratio is also observed in the mixed frequency case. However, estimation remains stable and the variability of estimates increases less dramatically than in the subsampled case. This is partly due to the presence of significantly less local optima in the mixed frequency case. We further note that in the mixed frequency case, the error in $\C$ estimation appears to be constant across the maximum eigenvalue range we considered.

Unstable estimation arises from a combination of two factors. First, under subsampling, the transition matrix of the subsampled process is $\A^k$, indicating that the signal strength between observations scales exponentially as a function of subsampling. Furthermore, the likelihood surface is highly multi-modal where the other high probability modes all have approximately the same $\A^k$ value. As the signal to noise ratio falls, $\A^k$ estimation becomes more difficult due to subsampling, and thus the multimodal estimation becomes more severe leading to modes far from the true $\A$ occasionally obtaining higher likelihood. Overall, these simulations indicate that in the subsampling case there appears to be a threshold on the maximum eigenvalue, below which inference becomes unstable and unreliable.

We note that the simulations above cover cases A and B in Figure \ref{sampling_types}. Unfortunately, the computational complexity of the E-step of the EM algorithm forbids performing simulations in a reasonable time on cases C and D. Future work will explore computational speed ups to make inference in these cases tractable; see the discussion at the end of Section \ref{real_data}.

\begin{figure}
\centering
\includegraphics[width=.47\textwidth]{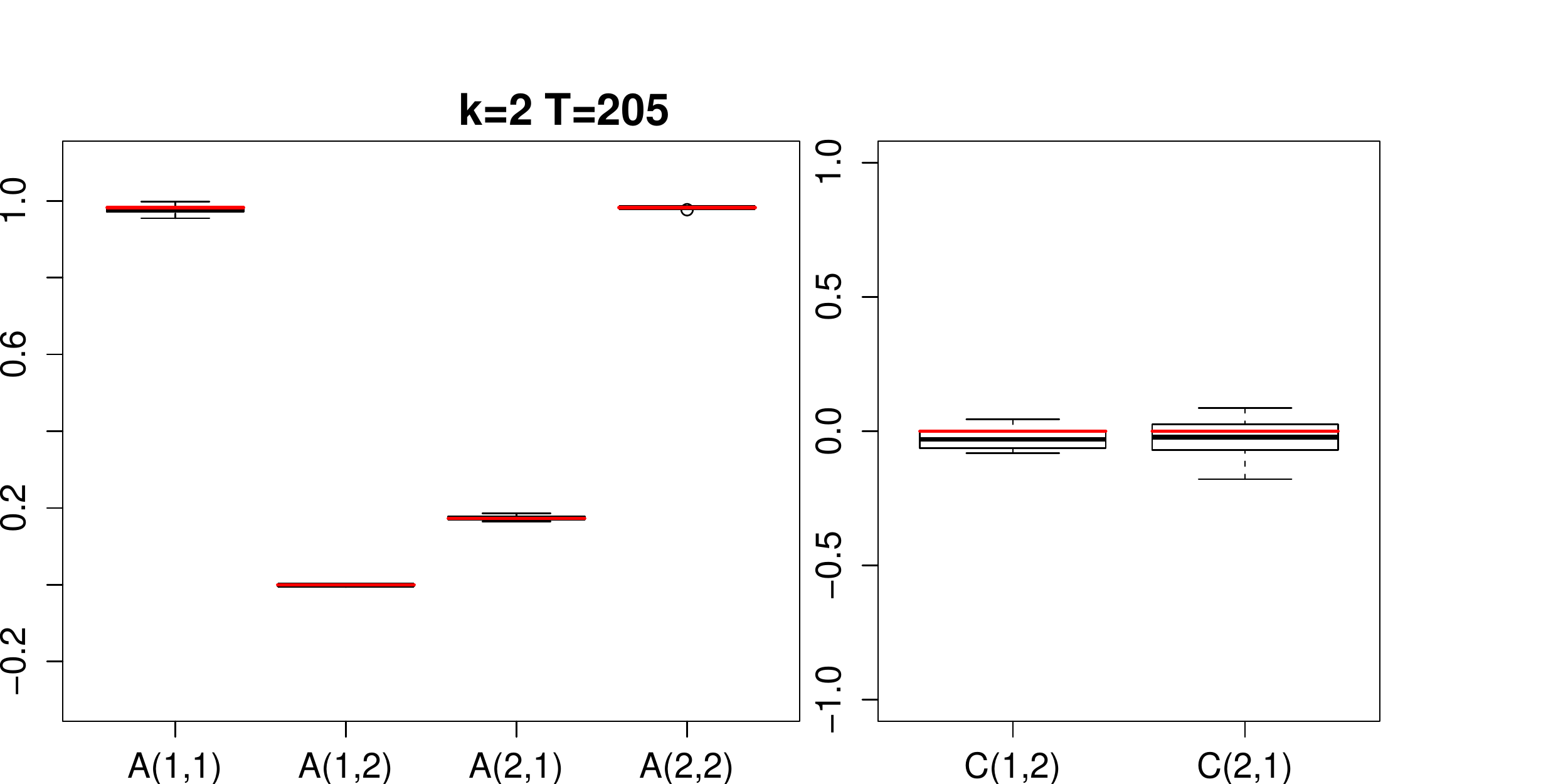} \hspace{.2in}
\includegraphics[width=.47\textwidth]{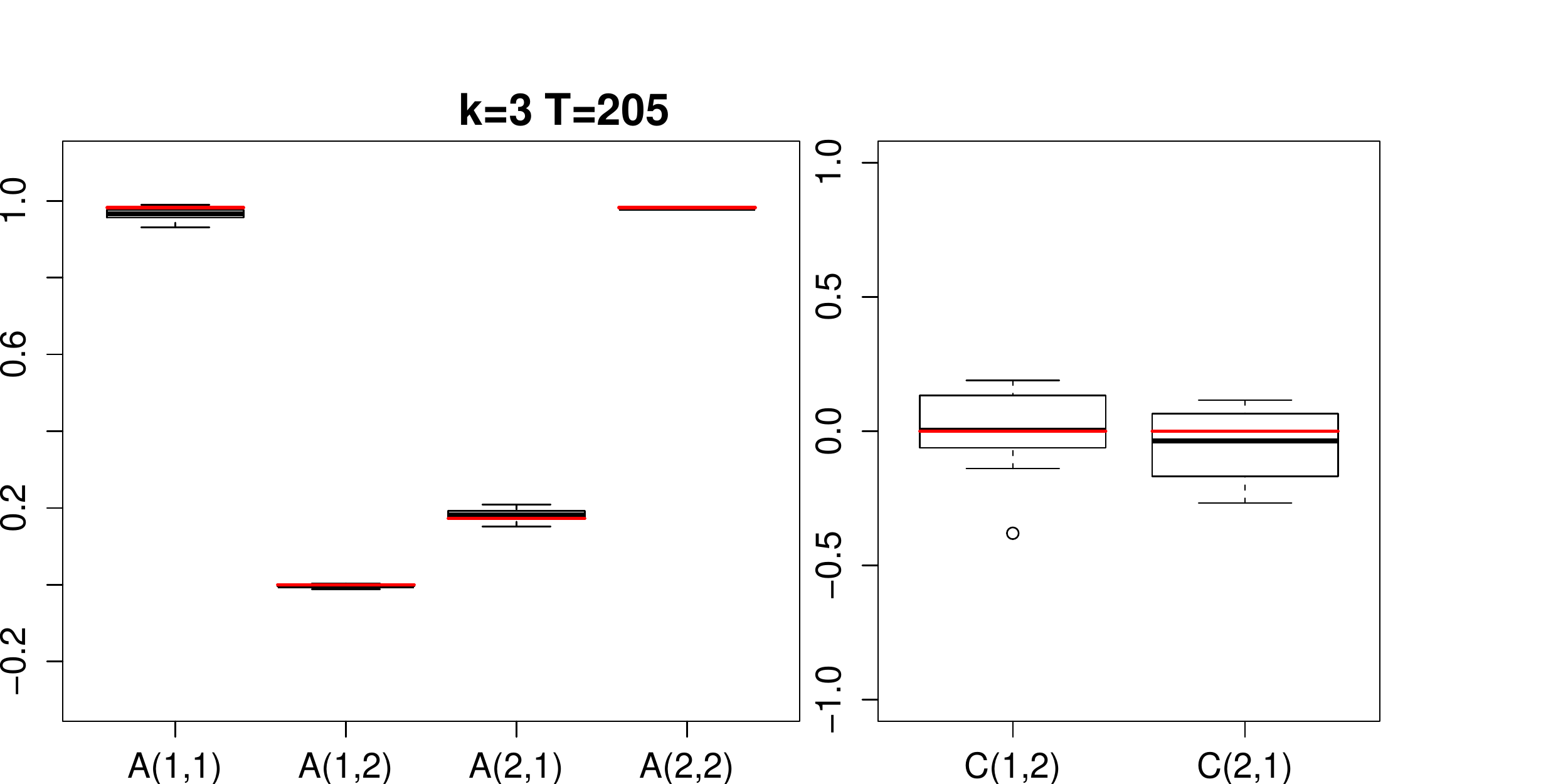}
\includegraphics[width=.47\textwidth]{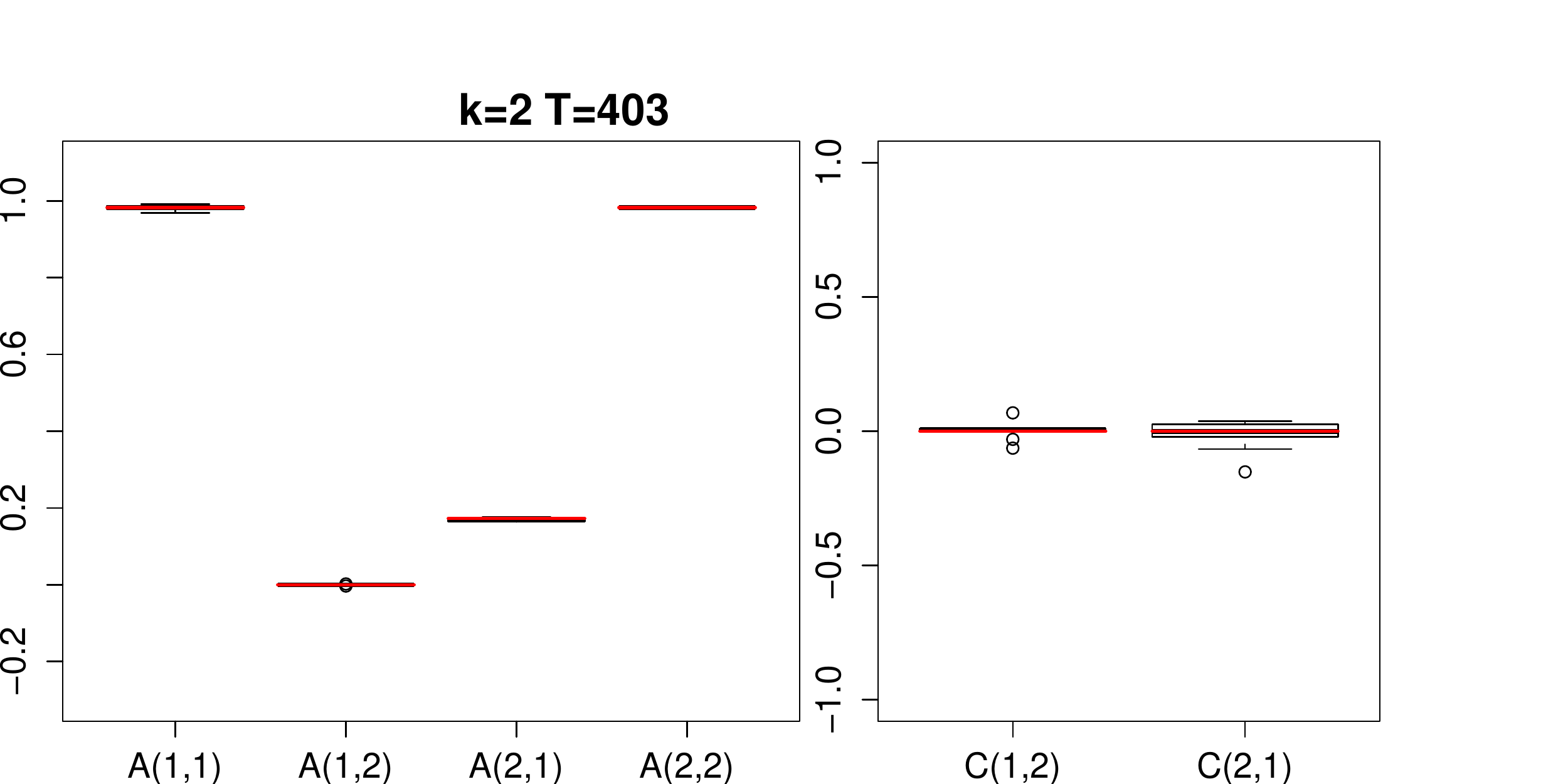} \hspace{.2in}
\includegraphics[width=.47\textwidth]{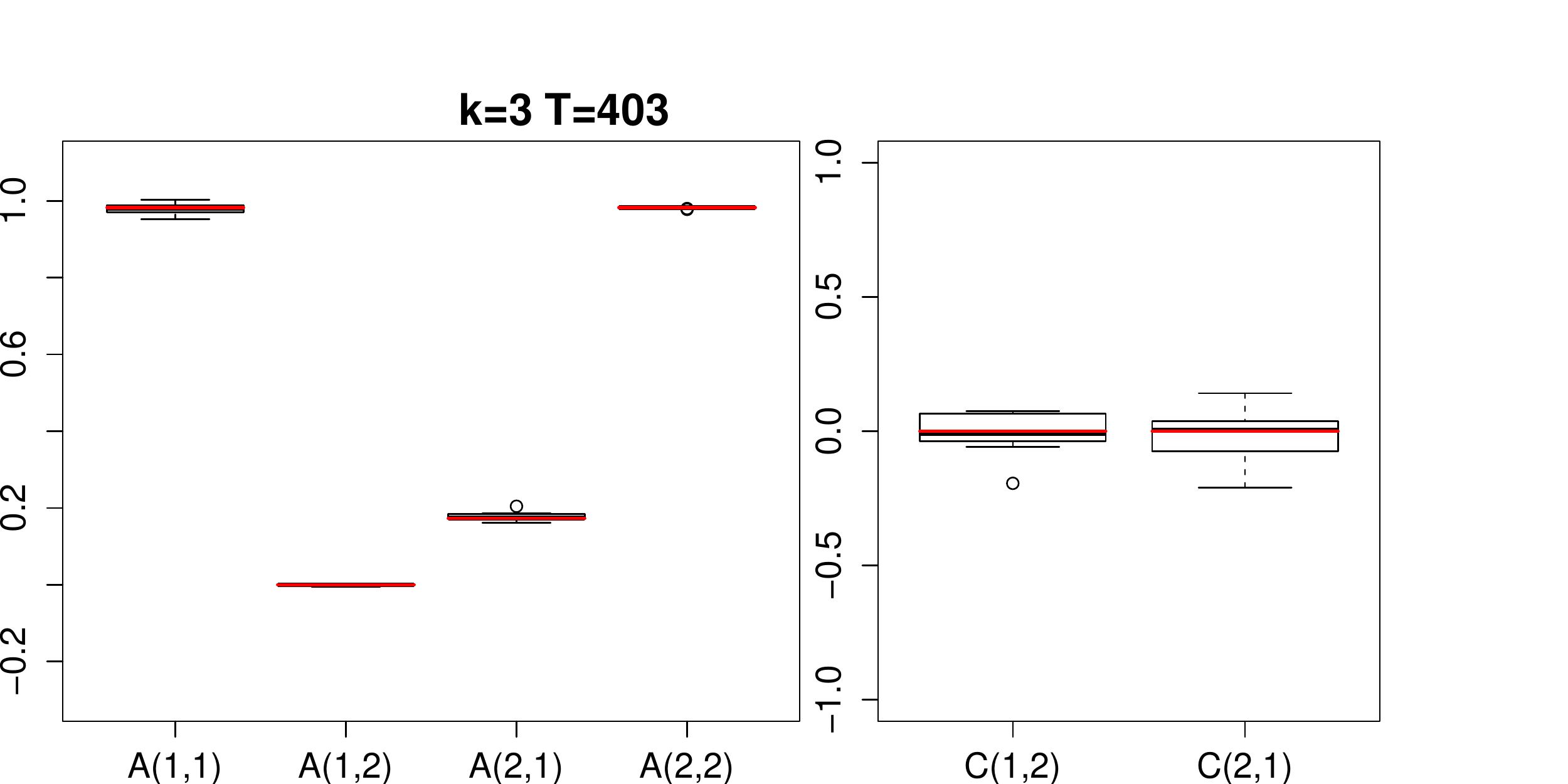}
\includegraphics[width=.47\textwidth]{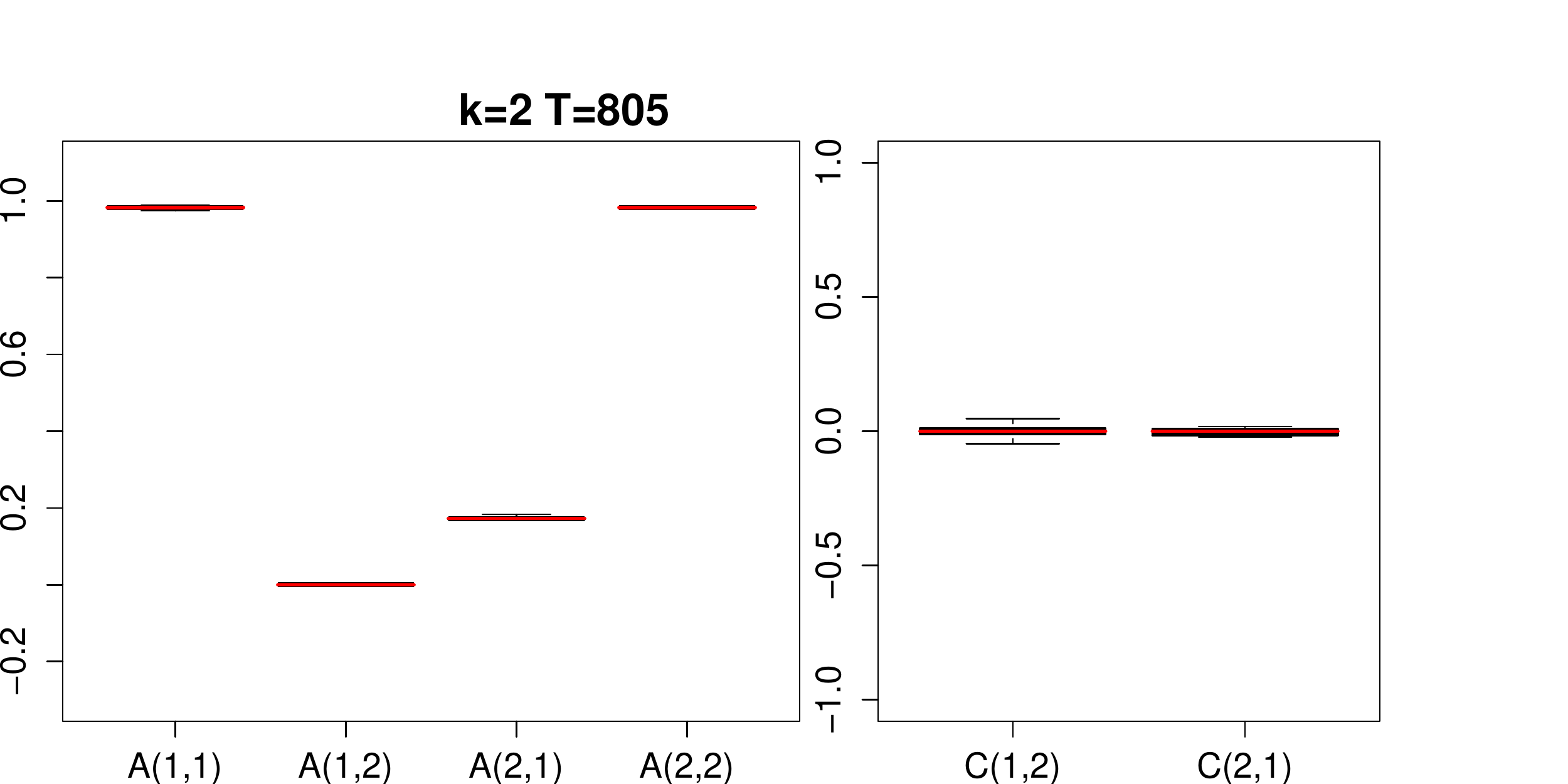} \hspace{.2in}
\includegraphics[width=.47\textwidth]{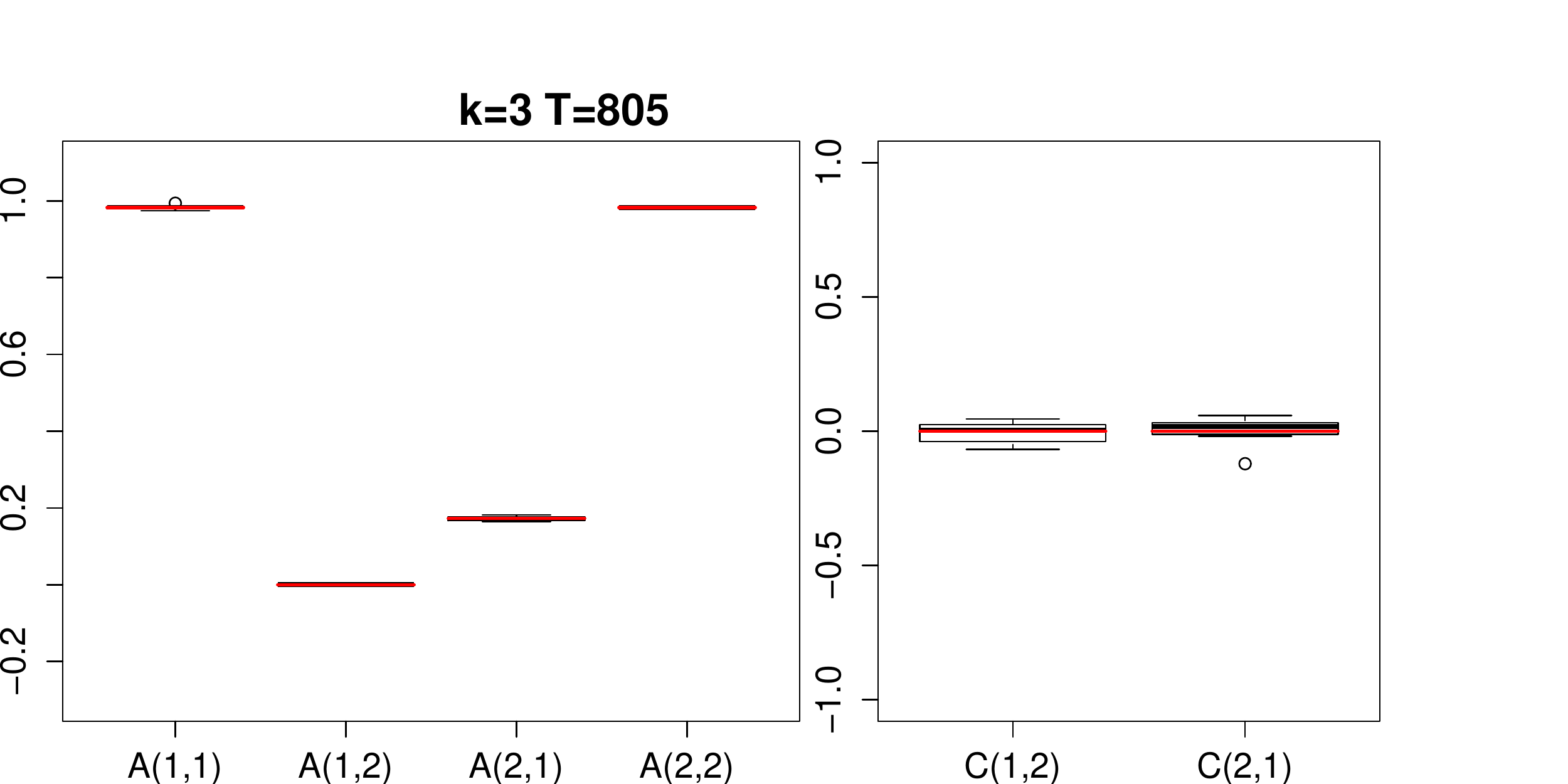}
\caption{{\bf Subsampled estimation performance simulations.} Histogram plots of $\A^{(1)}$ and $\C^{(1)}$ parameter estimates over 10 random data samplings. The original series is either of length $203$ (\emph{top}), $403$ (\emph{middle}) or $805$ (\emph{bottom}) and then subsampled at (\emph{left}) $k = 2$  and (\emph{right}) $k = 3$.}\label{f1}
\end{figure}

\begin{figure}
\includegraphics[width=.47\textwidth]{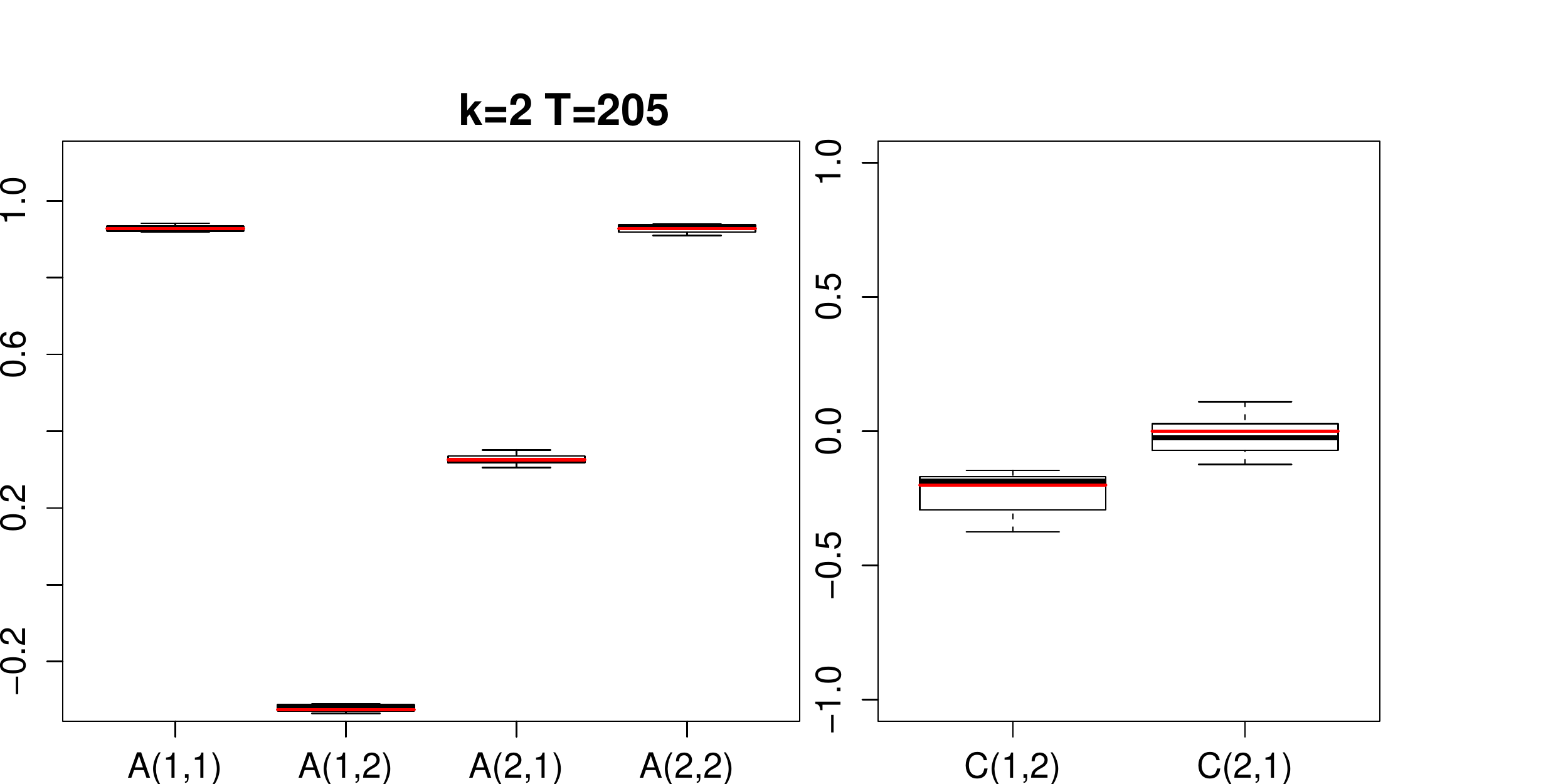} \hspace{.2in}
\includegraphics[width=.47\textwidth]{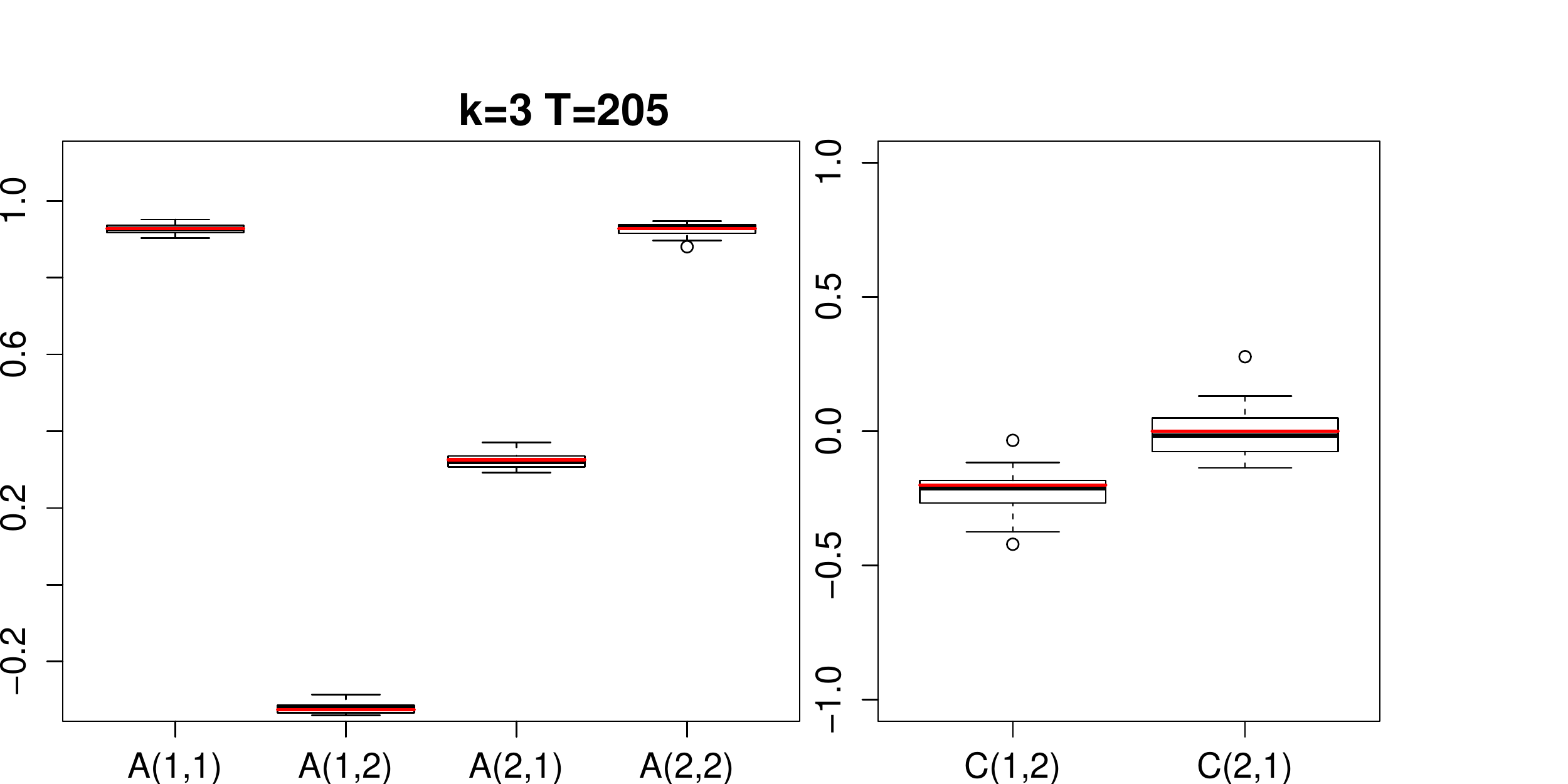}
\includegraphics[width=.47\textwidth]{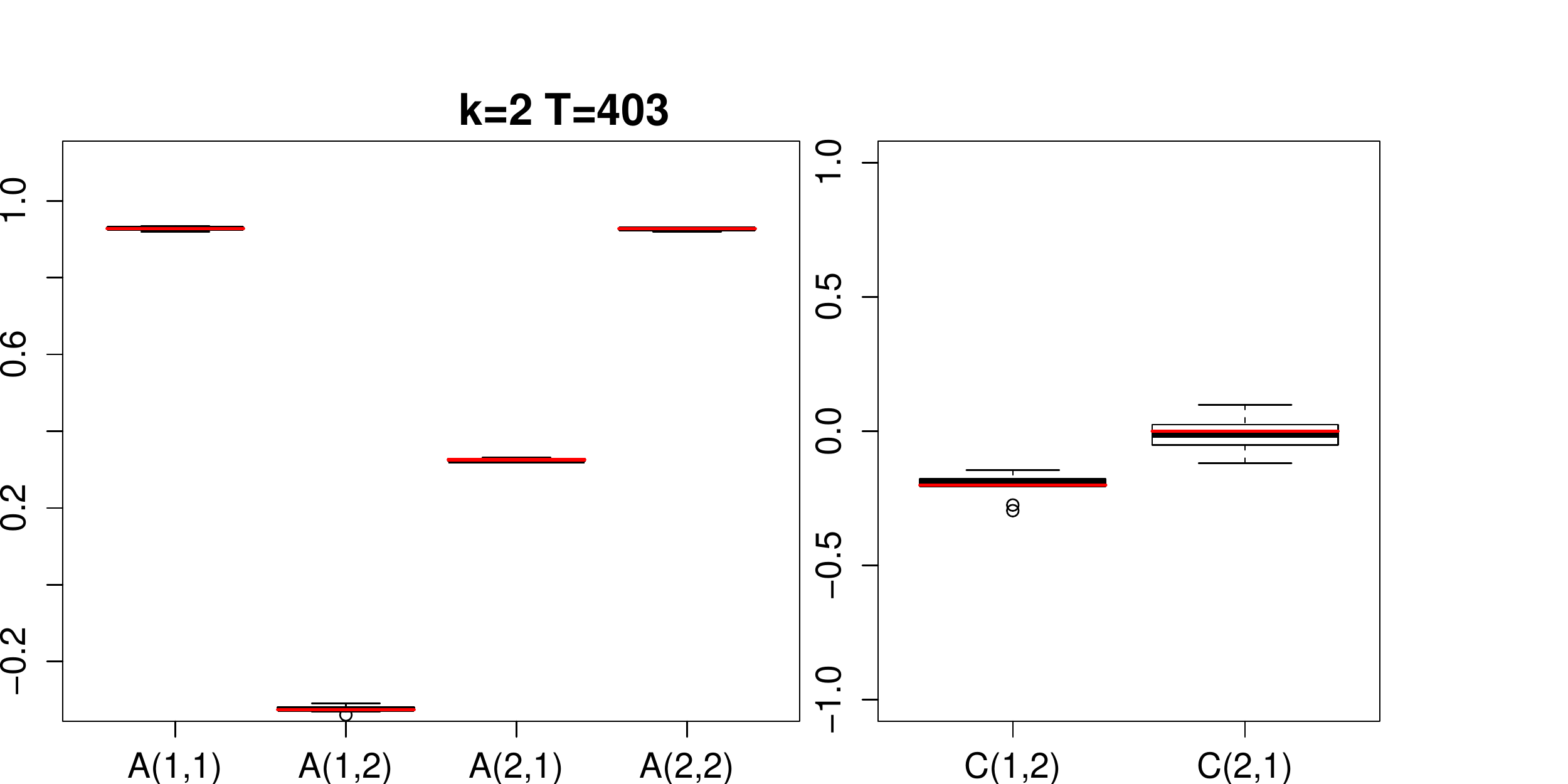} \hspace{.2in}
\includegraphics[width=.47\textwidth]{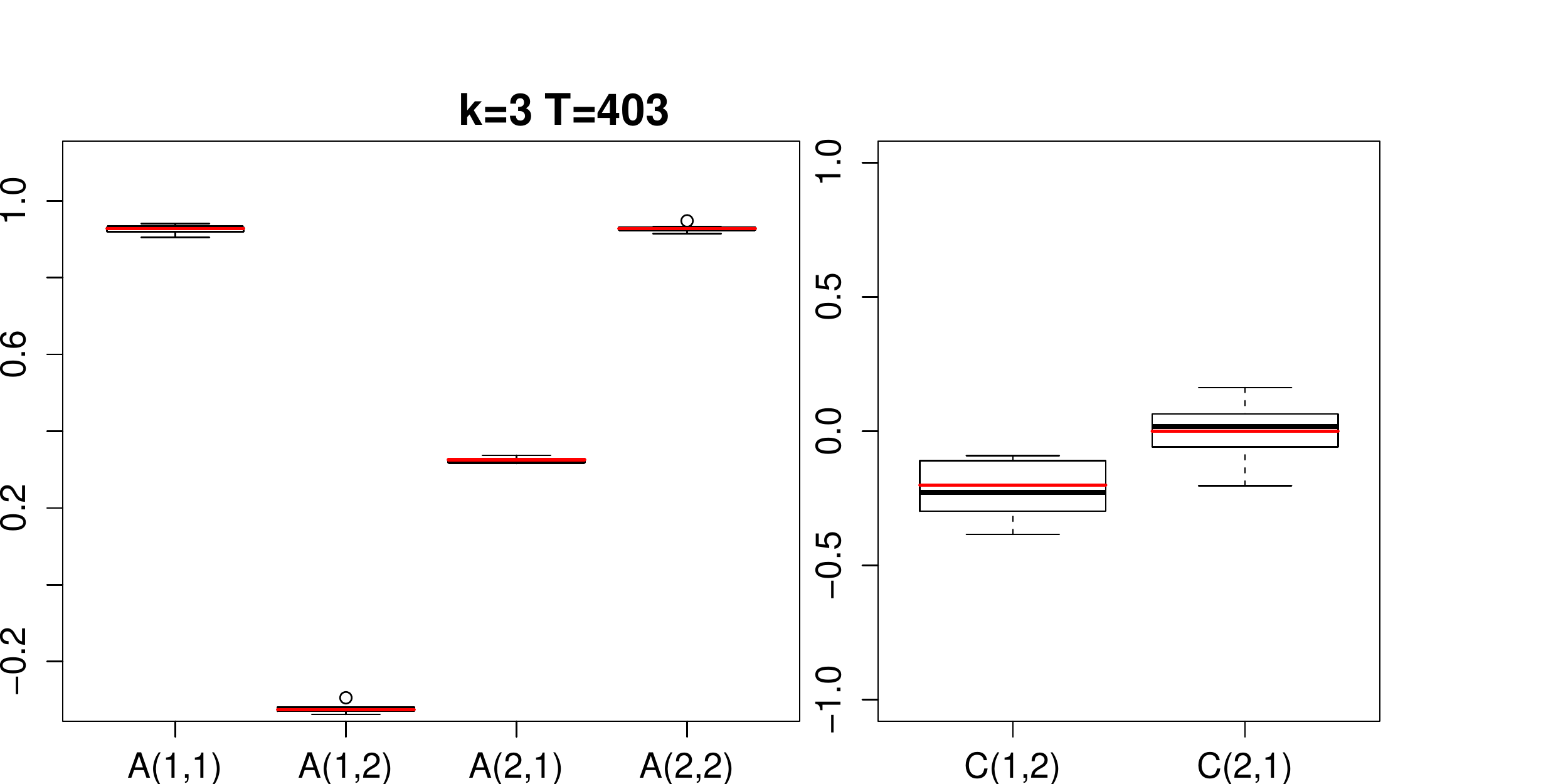}
\includegraphics[width=.47\textwidth]{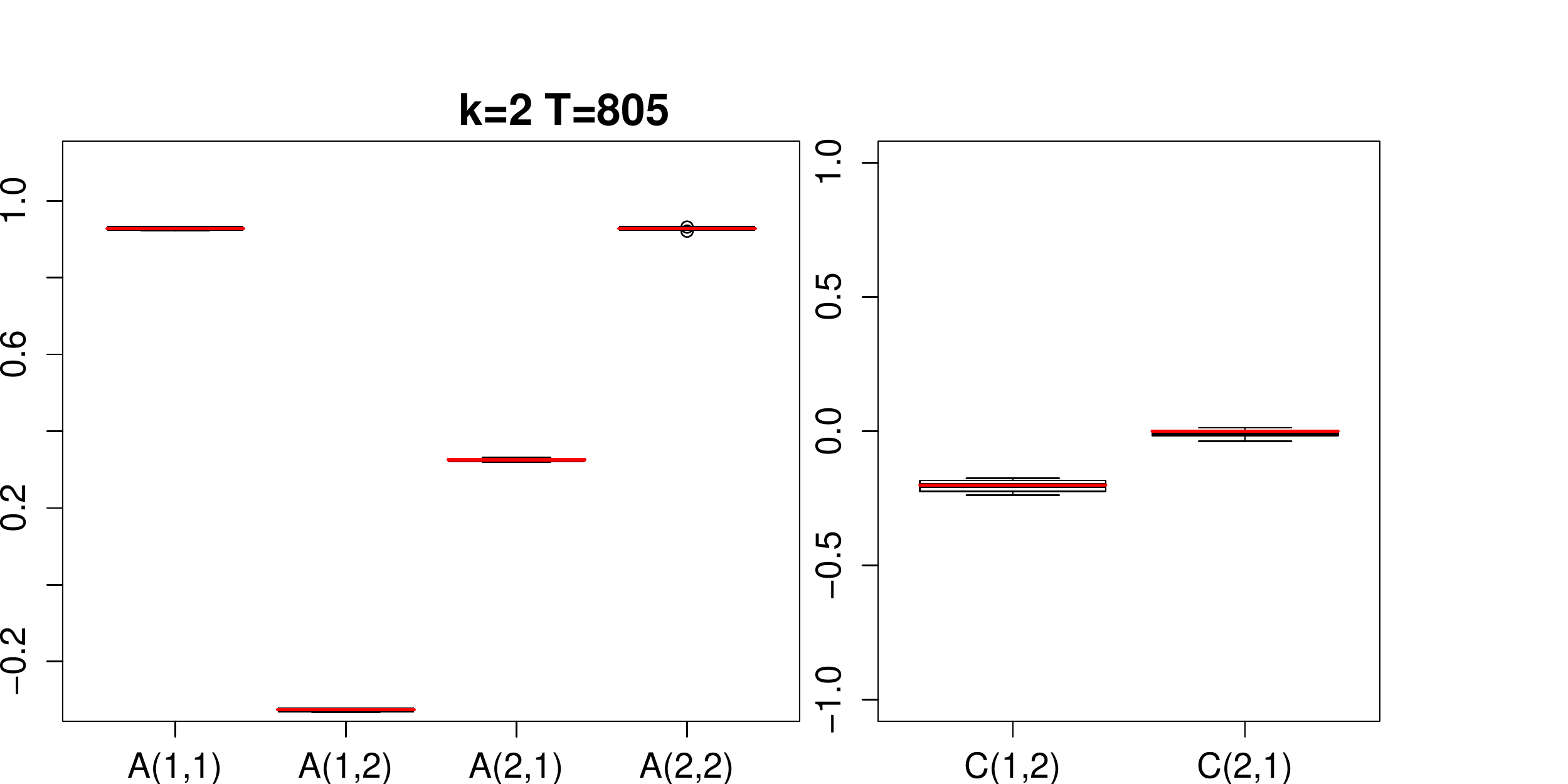} \hspace{.2in}
\includegraphics[width=.47\textwidth]{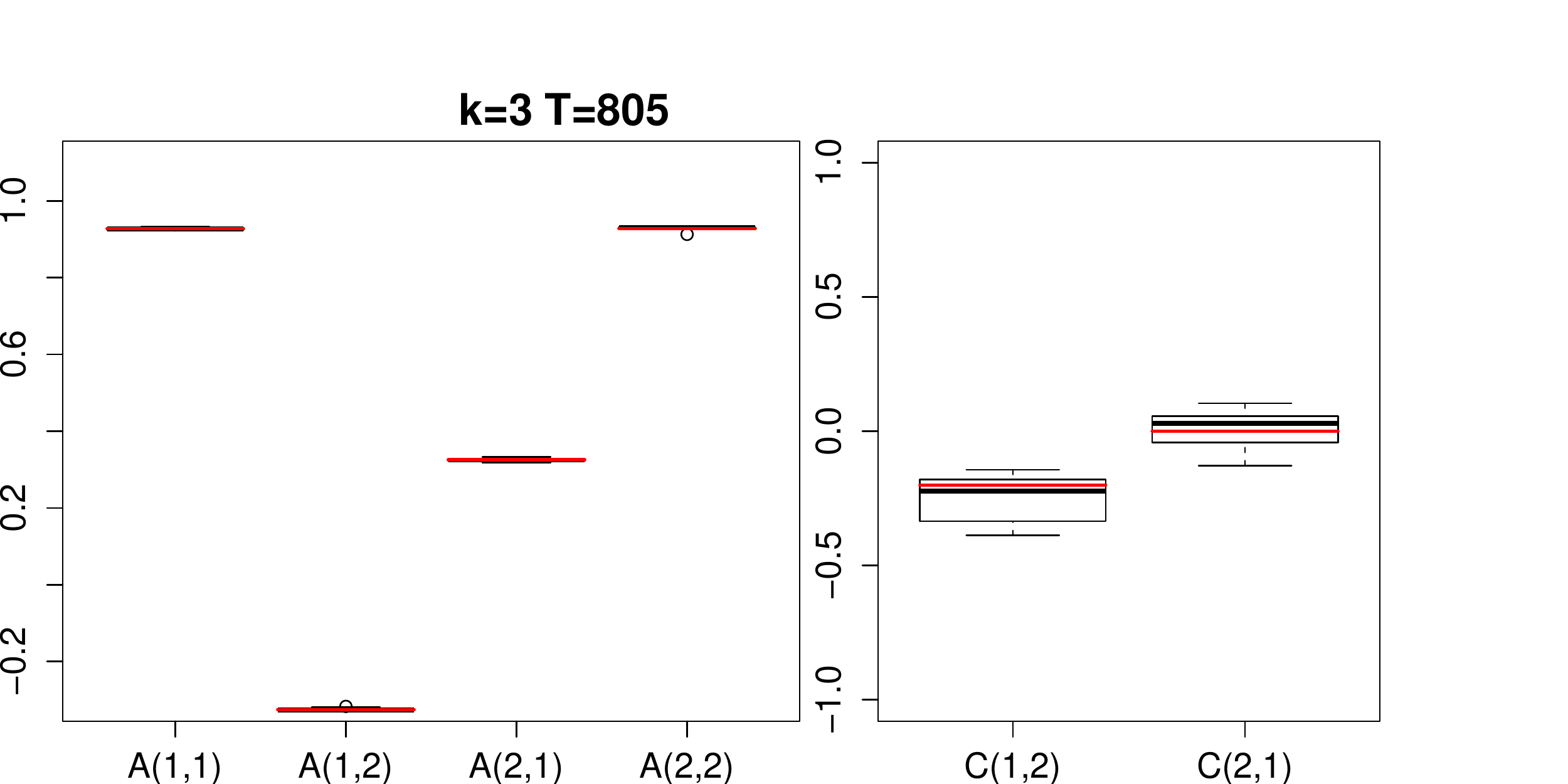}
\caption{{\bf Subsampled estimation performance simulations.} As in Fig. \ref{f1} for $\A^{(2)}$ and $\C^{(2)}$.}\label{f4}\end{figure}

\begin{figure}
\includegraphics[width=.52\textwidth]{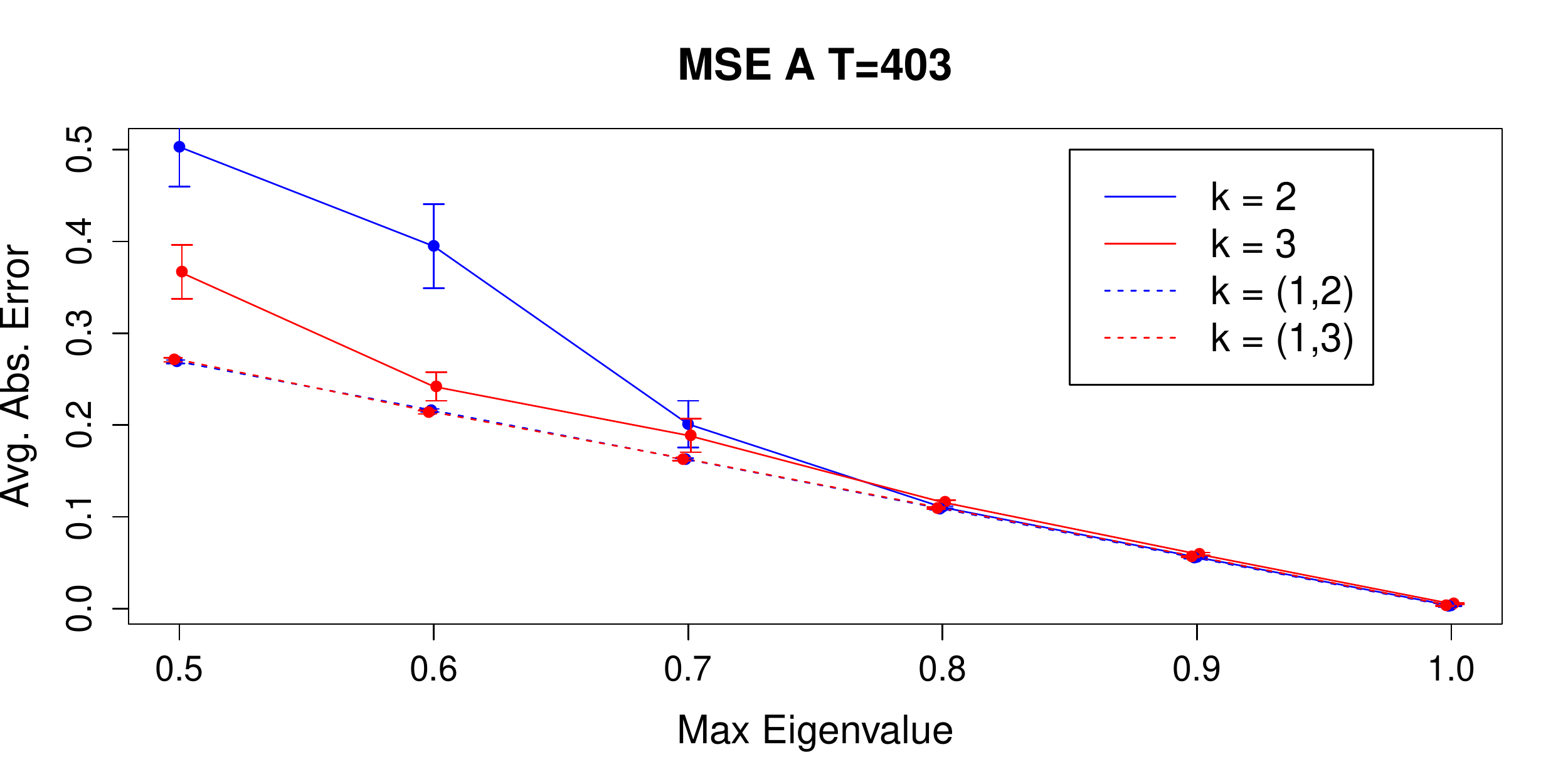}
\includegraphics[width=.52\textwidth]{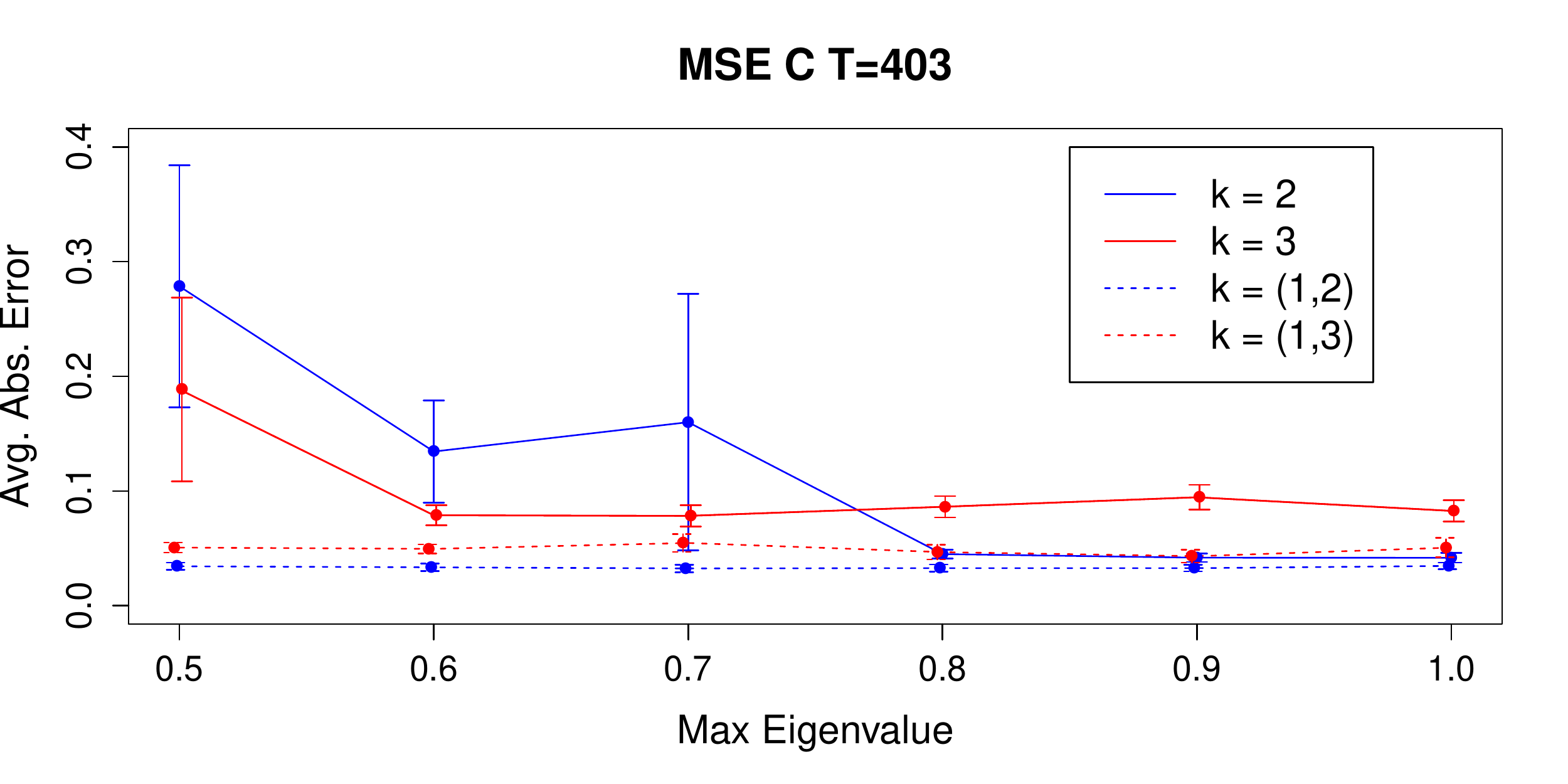}
\caption{{\bf Subsampled and mixed frequency signal to noise simulations.} Average MSE in estimation of $\A$ (\emph{left}) and $\C$ (\emph{right}) as a function of maximum eigenvalue of $\A$. Error bars indicate one standard error from 40 simulation runs.}
\label{eigen}
\end{figure}

\section{Real Data} \label{real_data}
\subsection{Subsampled Ozone Data}
We use the subsampled SVAR to analyze the causal scale and pathways in an ozone and temperature data set. The Temperature Ozone data is the 50th causal-effect pair from the website \url{https://
webdav.tuebingen.mpg.de/cause-effect/}, and was also considered in \citeU{Gong:2015}. The dataset consists of two time series, temperature and ozone concentration, sampled daily. First we standardize each time series to mean zero and unit variance. We fit the subsampled SVAR to the preprocessed series for $k = (1,2,3,4)$ subsampling regimes under both independent errors, $C = I$, and structural covariance in the instantaneous errors, $\C$ free. To ensure that good optima are found we perform 30,000 random restarts and run the adaptive-overrelaxed EM algorithm until the relative change in log-likelihood is less than $10^{-6}$.

 We first note that the estimated $\hat{\A}$ for $k = 1$ is given by $\hat{\A} = \left(\begin{array}{c c}
0.669 & 0.175 \\
-0.050 & 0.992
\end{array} \right)$, with maximum eigenvalue of $.962$, suggesting that accurate estimation of subsampled parameters is possible. The Bayesian Information Criterion (BIC) score for all models is displayed in Table \ref{ozone_table}. Across all subsampling rates, the structural model ($\C$ free) has substantially lower BIC, indicating that the two extra parameters of the structural model (off diagonal elements of $\C$) provide necessary flexibility. Furthermore, the best performing model is the structural matrix with subsampling rate $k = 2$. The transition matrix at $k = 2$ is given by $\hat{\A} = \left(\begin{array}{c c}
 0.849 & 0.058 \\
-0.027 & 0.981 
\end{array} \right)$, a similar result as that given by \citeU{Gong:2015} for $C = I$. After normalizing columns, we obtain a $\hat{\C} = \left(\begin{array}{c c}
1.00 & .206 \\
.29 & 1.00 
\end{array} \right)$ and an instantaneous error covariance of $\hat{\Sigma} = \hat{\C} \hat{\Lambda}(e_t) \hat{\C}^T = \left(\begin{array}{c c}
.1993 & .0535 \\
.0535 & .0539
\end{array} \right)$. Together, these results indicate the prevalence of relatively weak lagged effects at the subsampled scale, but stronger instantaneous effects between temperature and ozone. Furthermore, we see that the temperature time series obtains most of its power from a stronger error variance, while the ozone series is driven relatively more by the autoregressive component.  

\begin{table}
\centering
\begin{tabular}{c c c c c }
Model / k & 1 & 2 & 3 & 4 \\
$\C = I$ & 901.96 & 791.02 & 839.56 & 797.0066 \\
$\C$ free & 784.53 & {\bf 777.78} & 790.46 & 791.23 \\
\end{tabular}
\caption{BIC score for the SVAR model under different subsampling and covariance types on the Temperature Ozone Dataset.} \label{ozone_table}

\end{table}

\subsection{Mixed Frequency: GDP and Treasury Bonds}
We perfrom an SVAR analysis on the mixed frequency data set of quarterly Gross Domestic Product (GDP) and monthly price of treasury bonds (TB). The data set has been previously compiled and analyzed in the mixed frequency setting by \cite{schorfheide:2015} and is available on the author's website. We follow \citeU{schorfheide:2015} and log transform both quarterly GDP and monthly TB. Furthermore, as is common in mixed frequency analysis of econometric indicators \citeU{chen:1998,zadrozny:1990}, we compute first differences to remove first order non-stationarities. 

We fit the SVAR model to the preprocessed data at the monthly rate. In the traditional approaches to mixed frequency VARs analyses, $\A$ and the instantaneous covariance $\Sigma$ are generically identifiable from the first two moments \citeU{anderson:2015}. What sets our non-Gaussian approach apart in this mixed frequency domain with no further subsampling is the ability to uniquely identify the ordering of the instantaneous causal effects in the structural matrix $\C$. To highlight this ability, we perform model selection on the zero entries in $\C$ to determine the causal ordering of the instantaneous effects. Specifically, we calculate the BIC score for the nested models $M:\C_{2,1} = \C_{2,1} = 0$, $M_{GDP \to TP}:\C_{1,2} = 0$, $M_{TP \to GDP}: \C_{2,1} = 0$, and $M_{GDP \to TB, TB \to GDP}$. Models $M$, $M_{GDP \to TB}$ and $M_{TB \to GDP}$ represent DAG structures on the instantaneous effects while the unrestricted model $M_{GDP \to TB, TB \to GDP}$ does not. The BIC scores for all models are given in Table \ref{mf_realdat}. We see that the $M_{TB \to GDP}$ model performs best. The estimated $\C$ matrix is given by 
$\hat{\C} = \left( \begin{array} {c c}
.950 & 0.000 \\
.2800 & .695
\end{array} \right)$, suggesting an instantaneous interaction at the monthly scale from TB to GDP. The inferred transition matrix is given by $\hat{\A} = \left( \begin{array} {c c}
0.297 & -0.068 \\
0.01194395 & 0.658
\end{array} \right)$ suggesting a slight negative lagged interaction from GDP to TB. 

The above analysis fits an SVAR model at the time scale of months, the same sampling rate as the TB time series.  The results from Section \ref{mf} indicate that we could uniquely identify models at bi-monthly, or even more granular, time scales. However, even at the bi-monthly rate, the computational complexity of the E-step of the EM algorithm becomes large due to the large number of combinations of error mixture components in a `block', as discussed in Section \ref{e_step}. We note, however, that the E-step requires running the forward backward algorithm many times. The marginalization over mixture assignments can be run in parallel and massive computational gains could be gleamed from a GPU implementation; we leave this for future work since implementing the forward backward algorithm on a GPU is nontrivial. 

\begin{table}
\centering
\begin{tabular}{ c c c c c } 
Model & $M$ & $M_{GDP \to TB}$ & $M_{TB \to GDP}$ & $M _{GDP \to TB, TB \to GDP}$  \\ 
BIC & 1984.004 & 1983.409 & {\bf 1981.082} & 1987.550 \\
\end{tabular}
\caption{SVAR BIC on the GDP and TB data set for different instantaneous causality structures.}
\label{mf_realdat}
\end{table}

\section{Discussion} \label{discussion}
Our results provide sufficient conditions for identifiability of structural VAR models for both subsampled and mixed frequency time series. Importantly, the complete causal diagram of both lagged effects and instantaneous causal effects is fully identifiable under arbitrary subsampling schemes and non-Gaussian errors. 

For estimation, we developed an exact EM algorithm for maximum likelihood estimation and analyzed its performance via simulations. Our EM estimation approach has two drawbacks: 1) high computational complexity due the evaluation of the Kalman filter over all local mixture error assigments within a subsampled block and 2) many local optima due to weak identifiability and general nonidentifiability from the first two moments. Our simulations show that the local mode problem is more severe under even subsampling factors and low signal to noise regimes.

An ongoing line of work is to develop approximate inference for these models using MCMC or variational methods. Unfortuntely, we have found that the local optima problem makes MCMC approaches particularly difficult in this domain. A Gibbs sampler we have explored gets stuck in one local mode and requires the same number of random restarts as our EM algorithm to find a good solution. Perhaps incorporating recent MCMC advances \citeU{Ma:2016} may prove beneficial. We have also attempted a variational EM algorithm for this problem but found that performance was excessively poor. \cite{Gong:2015} also reported significantly worse results for a variational EM approach as compared to their approximate EM algorithm. By breaking the dependence between the unobserved, subsampled $x_t$ and the auxiliary $z_t$s, the variational approach avoids the combinatorial evaluation of a Kalman filter; however, this dependence is critical for correctly evaluating the probable trajectories of the latent $x_t$, without which inference of $\A$ suffers. As an alternative to approximate methods, exploring parallel GPU implementations of the E-step in our EM algorithm would allow scaling to both more time series and greater subsampling factors.

As a future research direction, it would be interesting to specify what order moments of the process are required for identifiability. This line of work may aide in developing a method of moments estimation procedure based on third order moments for this problem. A method of moments approach may side step both the local optima problem and the combinatorial computational complexity of the EM algorithm.

\paragraph{\bf Acknowledgements}  We thank Mathias Drton for a helpful discussion. AT and EF work was supported in part by
ONR Grant N00014-15-1-2380, NSF CAREER Award IIS-1350133 and  AFOSR Grant FA9550-16-1-0038. AT was also
partially funded by an IGERT fellowship. AS acknowledges the support from NSF grants  DMS-1161565 \& DMS-1561814 and NIH grants 1K01HL124050-01 \& 1R01GM114029-01.

\bibliographystyle{unsrt}
\bibliography{ng_subsamp}

\begin{thebibliography}{10}

\bibitem{moauro:2005}
Filippo Moauro and Giovanni Savio.
\newblock Temporal disaggregation using multivariate structural time series
  models.
\newblock {\em The Econometrics Journal}, 8(2):214--234, 2005.

\bibitem{stram:1986}
Daniel~O Stram and William~WS Wei.
\newblock A methodological note on the disaggregation of time series totals.
\newblock {\em Journal of Time Series Analysis}, 7(4):293--302, 1986.

\bibitem{boot:1967}
John~CG Boot, Walter Feibes, and Johannes Hubertus~Cornelius Lisman.
\newblock Further methods of derivation of quarterly figures from annual data.
\newblock {\em Applied Statistics}, pages 65--75, 1967.

\bibitem{zhou:2014}
Douglas Zhou, Yaoyu Zhang, Yanyang Xiao, and David Cai.
\newblock Analysis of sampling artifacts on the {G}ranger causality analysis
  for topology extraction of neuronal dynamics.
\newblock {\em Frontiers in computational neuroscience}, 8, 2014.

\bibitem{silvestrini:2008}
Andrea Silvestrini and David Veredas.
\newblock Temporal aggregation of univariate and multivariate time series
  models: {A} survey.
\newblock {\em Journal of Economic Surveys}, 22(3):458--497, 2008.

\bibitem{breitung:2002}
J{\"o}rg Breitung and Norman~R Swanson.
\newblock Temporal aggregation and spurious instantaneous causality in multiple
  time series models.
\newblock {\em Journal of Time Series Analysis}, 23(6):651--665, 2002.

\bibitem{danks:2013}
David Danks and Sergey Plis.
\newblock Learning causal structure from undersampled time series.
\newblock 2013.

\bibitem{pils:2015}
Sergey Pils, David Danks, and Jianyu Yang.
\newblock Mesochronal structure learning.
\newblock 2015.

\bibitem{plis:2015}
Sergey Plis, David Danks, Cynthia Freeman, and Vince Calhoun.
\newblock Rate-agnostic (causal) structure learning.
\newblock In {\em Advances in Neural Information Processing Systems}, pages
  3285--3293, 2015.

\bibitem{hyttinen:2016}
Antti Hyttinen, Sergey Plis, Matti J{\"a}rvisalo, Frederick Eberhardt, and
  David Danks.
\newblock Causal discovery from subsampled time series data by constraint
  optimization.
\newblock {\em arXiv preprint arXiv:1602.07970}, 2016.

\bibitem{Lutkepohl:2005}
Helmut L{\"u}tkepohl.
\newblock {\em New introduction to multiple time series analysis}.
\newblock Springer Science \& Business Media, 2005.

\bibitem{harvey:1990}
Andrew~C Harvey.
\newblock {\em Forecasting, structural time series models and the Kalman
  filter}.
\newblock Cambridge university press, 1990.

\bibitem{ullman:2003}
Jodie~B Ullman and Peter~M Bentler.
\newblock {\em Structural equation modeling}.
\newblock Wiley Online Library, 2003.

\bibitem{Gong:2015}
Mingming Gong, Kun Zhang, Bernhard Sch{\"{o}}lkopf, Dacheng Tao, and Philipp
  Geiger.
\newblock Discovering temporal causal relations from subsampled data.
\newblock In {\em Proceedings of the 32nd International Conference on Machine
  Learning}.

\bibitem{zhang:2009}
Kun Zhang and Aapo Hyv{\"a}rinen.
\newblock Causality discovery with additive disturbances: An
  information-theoretical perspective.
\newblock In {\em Machine learning and knowledge discovery in databases}, pages
  570--585. Springer, 2009.

\bibitem{lanne:2015}
Markku Lanne, Mika Meitz, Pentti Saikkonen, et~al.
\newblock Identification and estimation of non-{G}aussian structural vector
  autoregressions.
\newblock {\em CREATES, Arhus Univerisity, Technical report}, 2015.

\bibitem{hyvarinen:2010}
Aapo Hyv{\"a}rinen, Kun Zhang, Shohei Shimizu, and Patrik~O Hoyer.
\newblock Estimation of a structural vector autoregression model using
  non-{G}aussianity.
\newblock {\em The Journal of Machine Learning Research}, 11:1709--1731, 2010.

\bibitem{hyvarinen:2008}
Aapo Hyv{\"a}rinen, Shohei Shimizu, and Patrik~O Hoyer.
\newblock Causal modelling combining instantaneous and lagged effects: {A}n
  identifiable model based on non-{G}aussianity.
\newblock In {\em Proceedings of the 25th international conference on Machine
  learning}, pages 424--431. ACM, 2008.

\bibitem{peters:2013}
Jonas Peters, Dominik Janzing, and Bernhard Sch{\"o}lkopf.
\newblock Causal inference on time series using restricted structural equation
  models.
\newblock In {\em Advances in Neural Information Processing Systems}, pages
  154--162, 2013.

\bibitem{hyvarinen:2004}
Aapo Hyv{\"a}rinen, Juha Karhunen, and Erkki Oja.
\newblock {\em Independent component analysis}, volume~46.
\newblock John Wiley \& Sons, 2004.

\bibitem{anderson:2015}
Brian~DO Anderson, Manfred Deistler, Elisabeth Felsenstein, Bernd Funovits,
  Lukas Koelbl, and Mohsen Zamani.
\newblock Multivariate ar systems and mixed frequency data: G-identifiability
  and estimation.
\newblock {\em Econometric Theory}, pages 1--34, 2015.

\bibitem{lauritzen:1996}
Steffen~L Lauritzen.
\newblock {\em Graphical models}.
\newblock Clarendon Press, 1996.

\bibitem{shojaie2010biometrika}
Ali Shojaie and George Michailidis.
\newblock Penalized likelihood methods for estimation of sparse
  high-dimensional directed acyclic graphs.
\newblock {\em Biometrika}, 97(3):519--538, 2010.

\bibitem{eriksson:2004}
Jan Eriksson and Visa Koivunen.
\newblock Identifiability, separability, and uniqueness of linear ica models.
\newblock {\em Signal Processing Letters, IEEE}, 11(7):601--604, 2004.

\bibitem{shimizu:2006}
Shohei Shimizu, Patrik~O Hoyer, Aapo Hyv{\"a}rinen, and Antti Kerminen.
\newblock A linear non-{G}aussian acyclic model for causal discovery.
\newblock {\em The Journal of Machine Learning Research}, 7:2003--2030, 2006.

\bibitem{foroni:2013}
Claudia Foroni and Massimiliano~Giuseppe Marcellino.
\newblock A survey of econometric methods for mixed-frequency data.
\newblock {\em Available at SSRN 2268912}, 2013.

\bibitem{schorfheide:2015}
Frank Schorfheide and Dongho Song.
\newblock Real-time forecasting with a mixed-frequency var.
\newblock {\em Journal of Business \& Economic Statistics}, 33(3):366--380,
  2015.

\bibitem{chen:1998}
Baoline Chen and Peter~A Zadrozny.
\newblock An extended yule-walker method for estimating a vector autoregressive
  model with mixed-frequency data.
\newblock {\em Advances in Econometrics}, 13:47--74, 1998.

\bibitem{Salakhutdinov:2003}
Ruslan Salakhutdinov and Sam Roweis.
\newblock Adaptive overrelaxed bound optimization methods.
\newblock In {\em ICML}, pages 664--671, 2003.

\bibitem{zadrozny:1990}
Peter~A Zadrozny.
\newblock {\em Estimating a multivariate ARMA model with mixed frequency data:
  an application to forecasting US GNP at monthly intervals}, volume~90.
\newblock Federal Reserve Bank of Atlanta, 1990.

\bibitem{Ma:2016}
Y.-A. {Ma}, T.~{Chen}, L.~{Wu}, and E.~B. {Fox}.
\newblock {A Unifying Framework for Devising Efficient and Irreversible MCMC
  Samplers}.
\newblock {\em ArXiv e-prints}, August 2016.

\end{thebibliography}
\section{Appendix}
\subsection{Proof of Theorem \ref{mixed_corr_theorem}}
 We prove it for the subsampled case.
 The structural VAR model can be decomposed as:
 \begin{align}
 \tilde{x}_t &= \A^k \tilde{x}_{t-k} + \Ll \tilde{e}_t \\
 &=  \A^k \tilde{x}_{t-1}  + \vec{e}_t,
 \end{align}
 where $\Ll = (\C, \A \C, \ldots, \A^{k-1}\C)$ and $\vec{e}_t  = \Ll \tilde{e}_t$. We may determine $A^{k}$ uniquely by linear regression and thus determine the distribution of $\vec{e}_t$. Proposition 1 states that each column of $L'$ is a scaled version of a column of $L$. Denote by $\Ll_{lp + i}$, $l = 0, \ldots, k - 1$, $i = 1,\ldots, p$ the $(lp + i)$th column of $\Ll$, and similarly for $L'_{lp + i}$. From the Uniqueness Theorem in Erikson and Koivunen 2004 \cite{eriksson:2004}, we know that under condition A2, for each $i$, there exists one and only $j$ such that the distribution of $e_{(t-l)i}, l = 1, \ldots, k - 1$ is the same as the disitrbution of $e'_{(t-l)j}$, $l = 1,\ldots, k - 1$ up to changes in location and scale. This implies that each column in $\Ll_{lp + i}$, $l = 0, \ldots, k - 1$, is proportional to at least one of the nonzero columns in $\Ll_{lp+j}$, $l=1, \ldots, k-1$, and vice versa. The proportionality must be either $1$ or $-1$ since we have standardized the $p_e$ to have unit variance. Furthermore, it must be the case that $\L_{lp + i}$ is proportional to column $\Ll'_{lp + j}$ for $j$ and $i = 1,\ldots, p$ since the columns are ordered in magnitude in both $\Ll$ and $\Ll'$, ie $||\Ll_{lp + i}||_2 > ||\Ll_{(l + 1)p + i}||_2$,
 \begin{align}
 ||\Ll_{(l+1)p + i}||_{2} &=  ||\A \A^{l} \C_{:i}||_2 \\
 &< ||\A||_2 ||\A^{l} \C_{:i}||_2 \\
 &<  ||\A^{l} \C_{:i}||_2 \\
 &= ||\Ll_{lp + i}||_{2}.
 \end{align}
 This implies that $\Ll'$ may be written as:
 \begin{align}
 \Ll' &= \Ll P \\
 &= \left( \C P_0, \A \C P_1, \ldots \A^{k-1} \C P_{k-1} \right),
 \end{align}
 where $P_i$ is a scaled permutation matrix with either $1$s or $-1$ scaling factors where $P_i$ and $P_j$ have the same permutation pattern but potentially different scaling factors. This proves the first assertion, ie $\C' = \C P_0$ and $\Sigma' = \C' \C'T = \C P_0 P_0^T \C^T = \C \C^T = \Sigma$. Now, if the $p_e$ are restricted to be nonsymmetric then the scaling factors must all be $1$ so that all the $P_i$ are equal. 
 \begin{align}
 \A' \C' &= \A' \C P \\
 &= \A \C P
 \end{align}
 and since $\C$ is full rank, $\C P$ is full rank so that $\A' = \A$, as desired.

 \subsection{Theorem \ref{mixed_corr_theorem_mf} part 2}
 If $\C$ is lower triangular then $\C = \C'$. Now, $\A \C = \A' \C' P_1 = \A' \C D$ where $D$ is diagonal with either $1$ or $-1$ on the diagonal. This implies that $\Ll'_{p+1:2p} = \A \C D$. We procceed by induction. Since the last column of $\C$, $\C_{:p}$, is zeros everywhere except the last element, we must have that $\C_{pp} \A_{:p} D_{pp}=\Ll'_{2p} =  \C_{pp} \A'_{:p}$, so that $\A_{:p} D_{pp} = \A'_{:p}$. Following the same logic as the proof to item 2 of Theorem \ref{Gong_theorem_mf}, if there exists some $j$ such that a multiple of $k_p$ is one less than a multiple of $k_j$ and $\A_{pj} \neq 0$, then we can identify $\A_{pj}$, and hence its sign, implying $\A_{:p} = \A_{:p}'$. 

 Assume that $\A_{:i} = \A'_{:i}$ for $i > j$. Since $\C$ is lower diagonal we must have that 
\begin{align}
 \Ll'_{p + j} &= \left(\C_{jj} \A'_{:j} + \sum_{i > j} \C_{ij} \A_{:i}\right) \\
 &= D_{jj} \left(\C_{jj} \A_{:j} + \sum_{i > j} \C_{ij} \A_{:i}\right).
\end{align}
Since $\A_{lj} = \A'_{lj}$ with $\C_{jj} \A_{lj} + \sum_{i > j} \C_{ij} \A_{li} \neq 0$ for some $l$, this implies $D_{jj} = 1$, so that $\A_{:j} = \A'_{:j}$. Taken together, $\A = \A'$.

 \subsection{EM algorithm details}

The gradient of the expected joint log probability given in the main text with respect to  $\W = \C^{-1}$ is given by:
\begin{align}
\nabla l(\W) = T \W^{-T} + \sum_{t = 1}^T \sum_{j = 1}^p \sum_{i = 1}^m \frac{1}{\sigma^2_{ji}}\bigg(- E(z_{tji} x_t x_t^T | \tilde{X}) \W_j^T - \A E(z_{tij} x_{t-1} x_{t-1}^T| \tilde{X}) A^T W_j^T \\
+ \left(E(z_{tji} x_t x_{t - 1}^T | \tilde{X}) \A^T + \A E(z_{tjj} x_{t-1} x_{t} | \tilde{X}) \right) \W_i^T + E(z_{tji} x_{t}| \tilde{X}) \mu_{ji} - \A E(z_{tji} x_{t-1}|\tilde{X}) \mu_{ji} \bigg) 
\end{align}

and the Hessian with respect to ${\bf w} = \text(vec)(\W)$ is given by
\begin{align}
H({\bf w}) = -T {\bf \Omega} (\W^{-T} \otimes \W^{-1}) + \sum_{t = 1}^T \sum_{j = 1}^p \sum_{i = 1}^m {\bf \Gamma}_{tji} \otimes {\bf D}^{(j)}
\end{align}
where 
\begin{align}
{\bf \Gamma}_{tji} = \frac{1}{\sigma^2_{ji}} \bigg(- E(z_{tji} x_t x_t^T | \tilde{X}) - \A E(z_{tji} x_{t-1} x_{t-1}^T) \A^T + E(z_{tji} x_t x_{t-1}^T | \tilde{X}) \A^T + \A E(z_{tji} x_{t-1} x_t^T| \tilde{X})  \bigg)
\end{align}
and ${\bf D}^{(j)}$ is a $p \times p$ matrix with ${\bf D}^{(j)}_{jj} = 1$ and all other entries zero. ${\bf \Omega}$ is a permutation matrix with all zero entries except with ${\bf \Omega}_{nm} = 1$ $\forall n \in (1 \ldots p^2)$ and $m = (n - 1) \text{mod}(p) + \lfloor (n - 1)/p \rfloor + 1 $. Finally, note there is a nonidentifiability between the scale of the errors, $e_t$, and the magnitude of $\C$. For algorithmic stability we fix the first mixture componenet for each $e_t$ to have variance set to one, $\sigma^2_{j1} = 1$ $\forall j$.

\subsection{Additional Simulation Plots}
Here we provide additional histogram plots from simulations in the main text. Figures \ref{f2} and \ref{f3} provide estimates for the remaining $(\A^{(1)}, \C^{(2)})$ and $(\A^{(2)},\C^{(1)})$ simulation parameter configurations. Figures \ref{box2} and \ref{box3} contain similar histogram plots but for the maximum eigenvalue experiments. 
\begin{figure}[!htb]
\includegraphics[width=.47\textwidth]{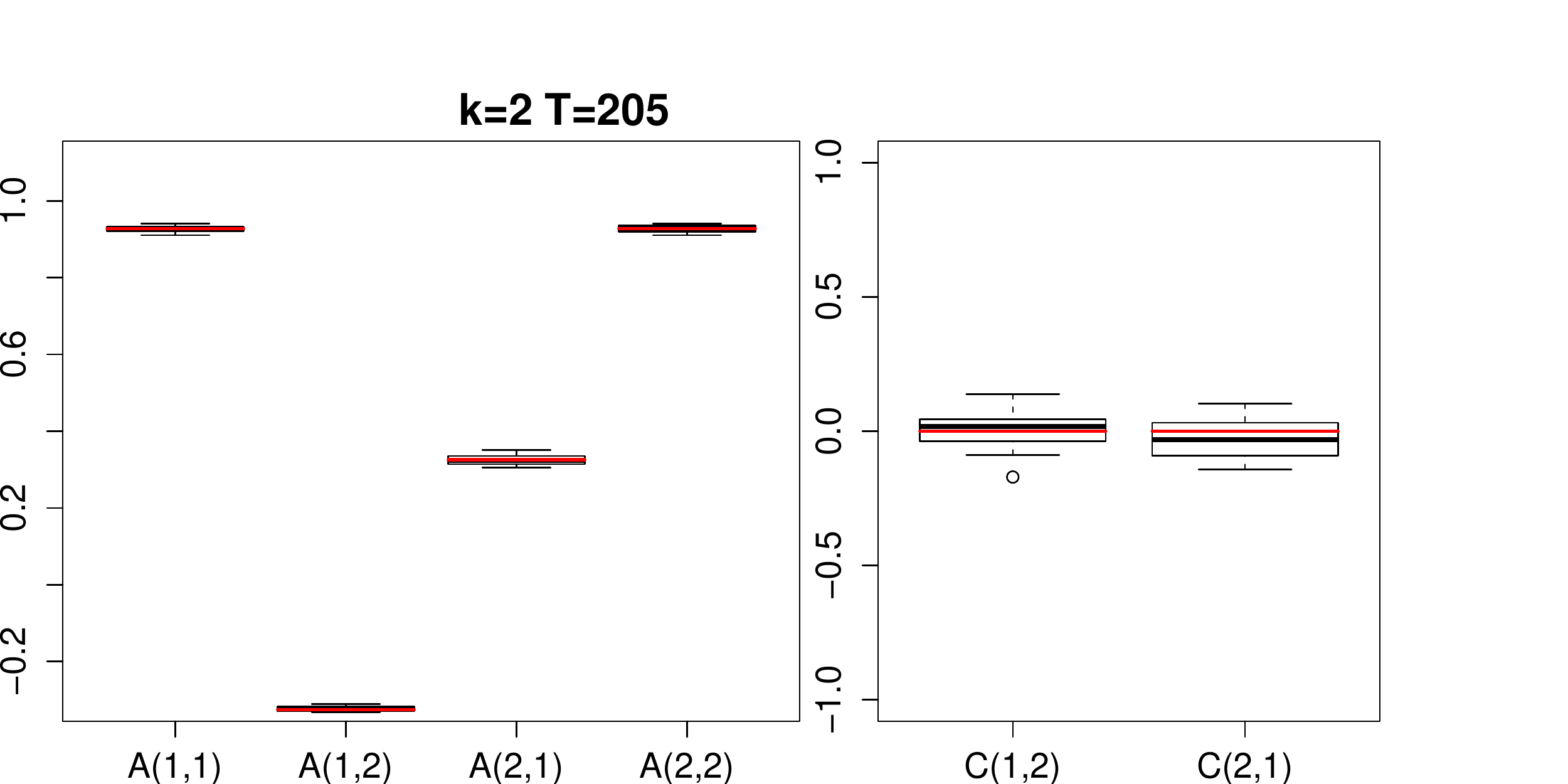} \hspace{.2 in}
\includegraphics[width=.47\textwidth]{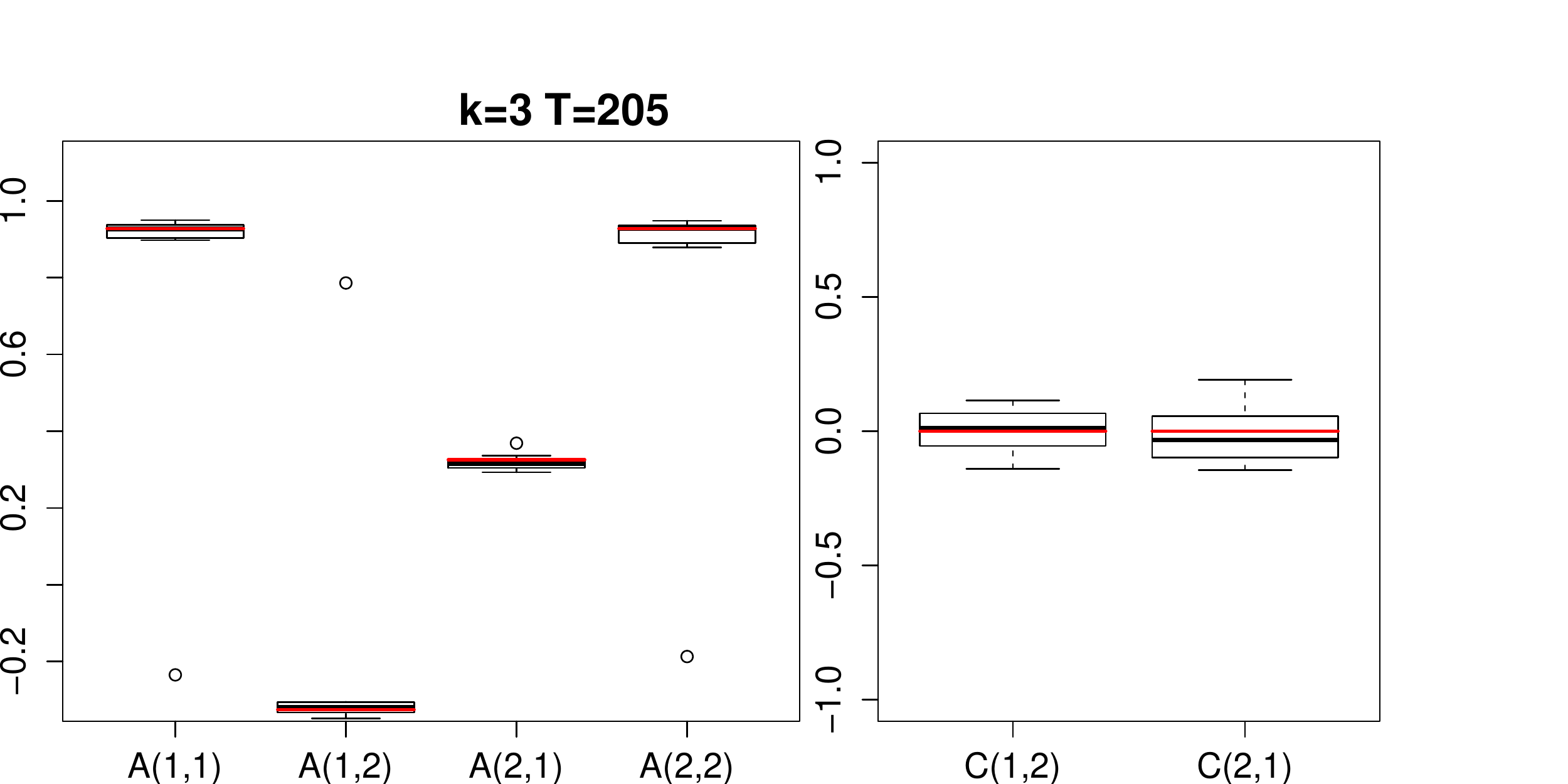}
\includegraphics[width=.47\textwidth]{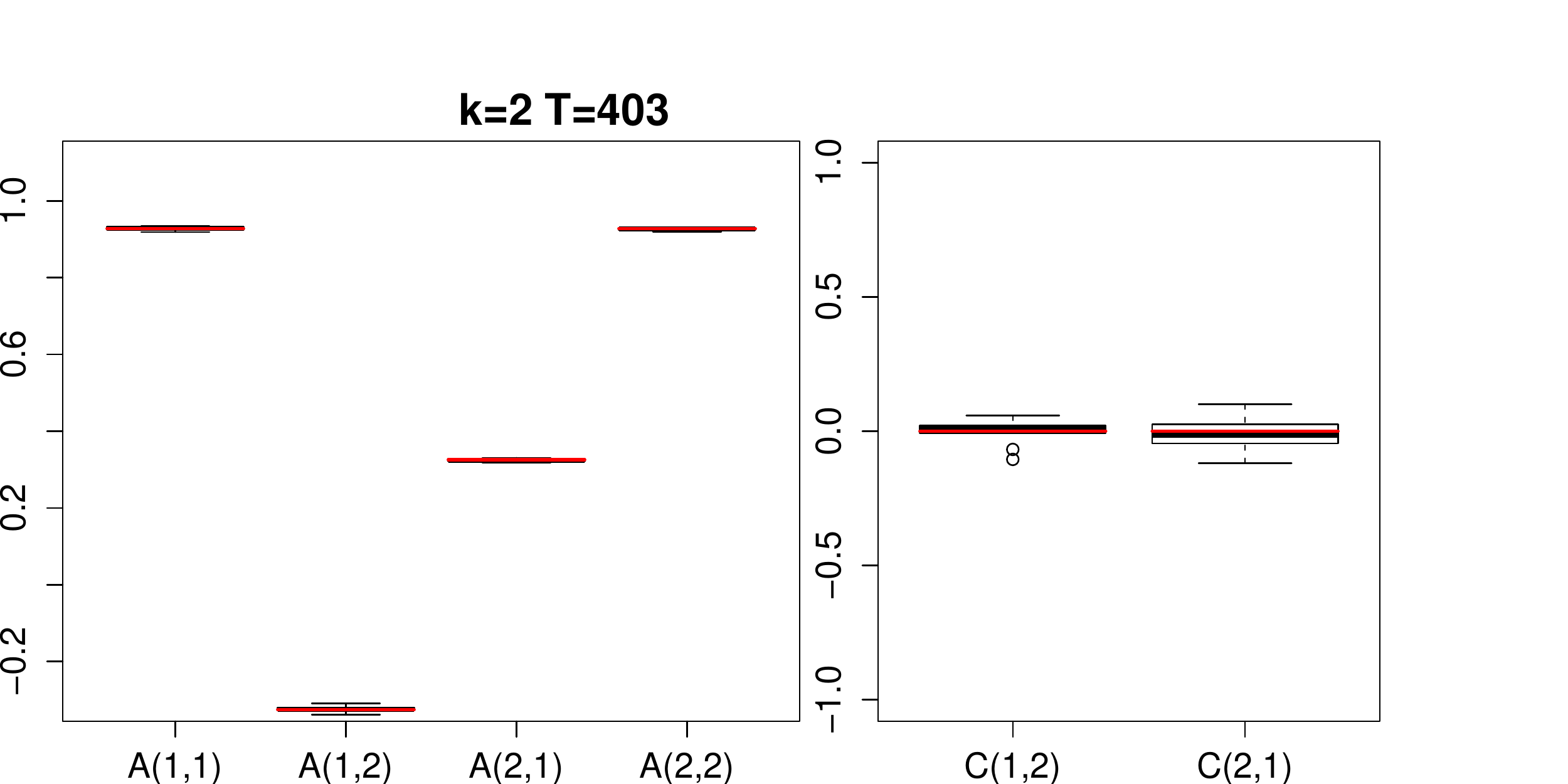} \hspace{.2 in}
\includegraphics[width=.47\textwidth]{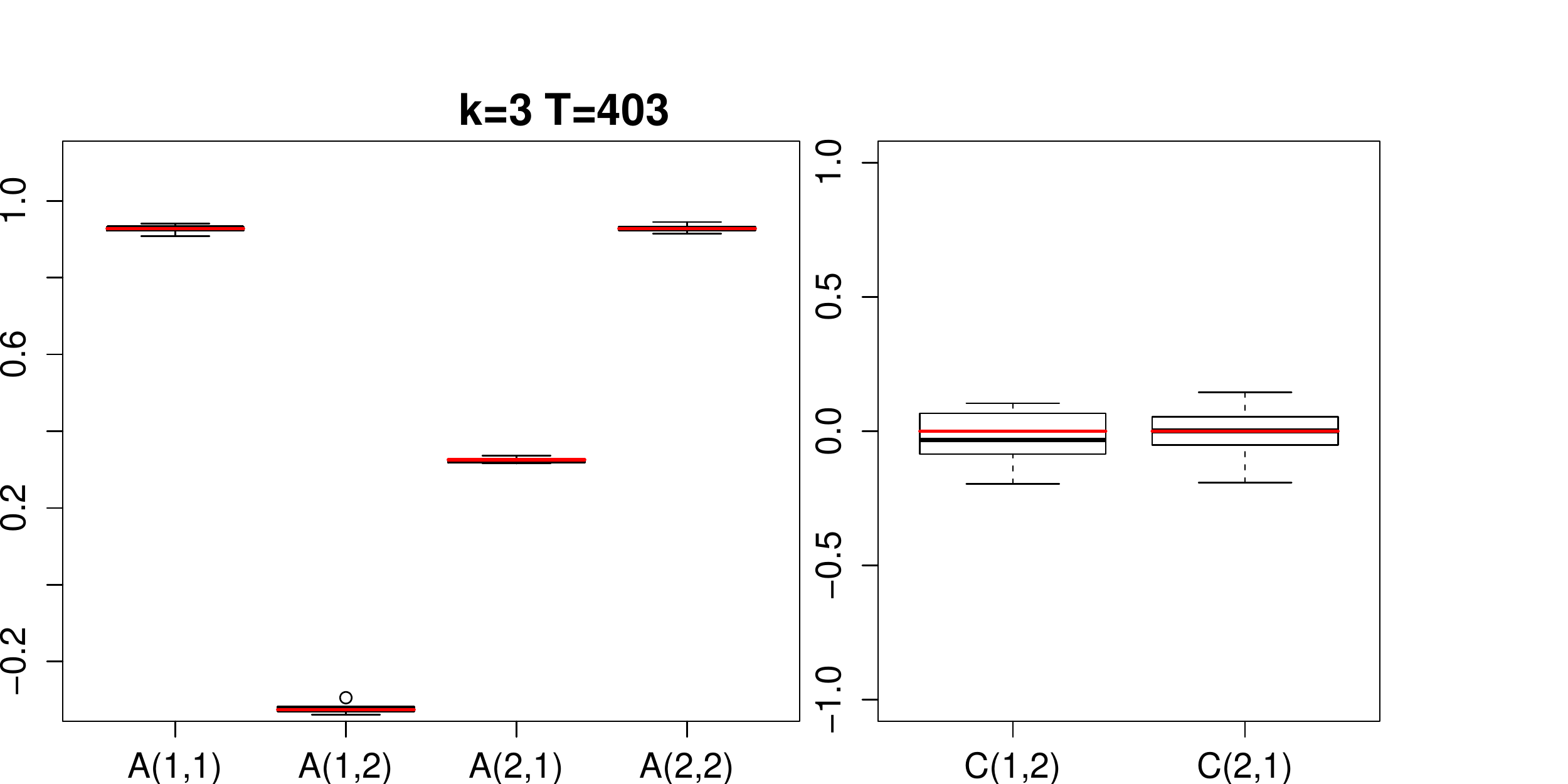}
\includegraphics[width=.47\textwidth]{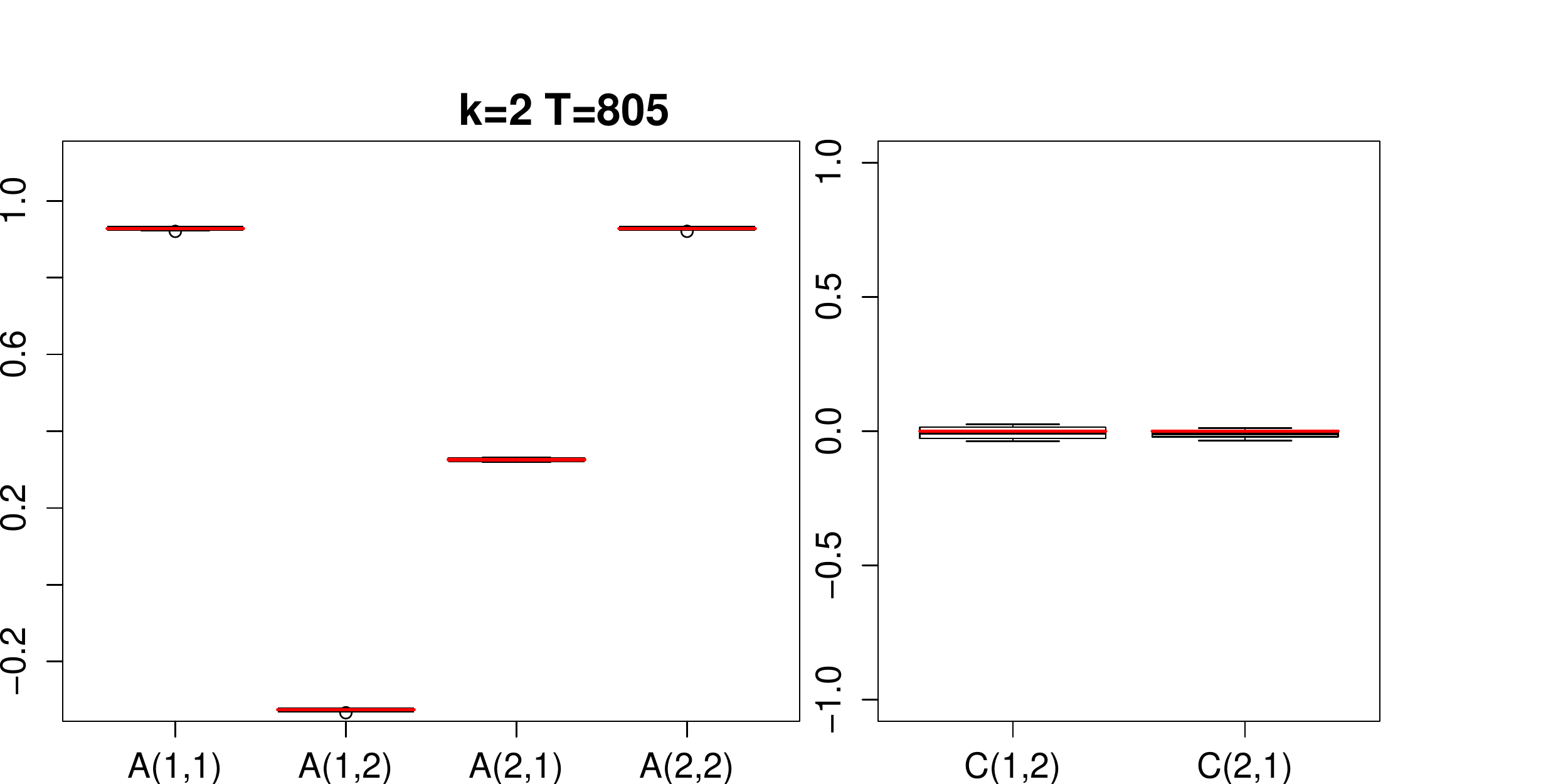} \hspace{.2 in}
\includegraphics[width=.47\textwidth]{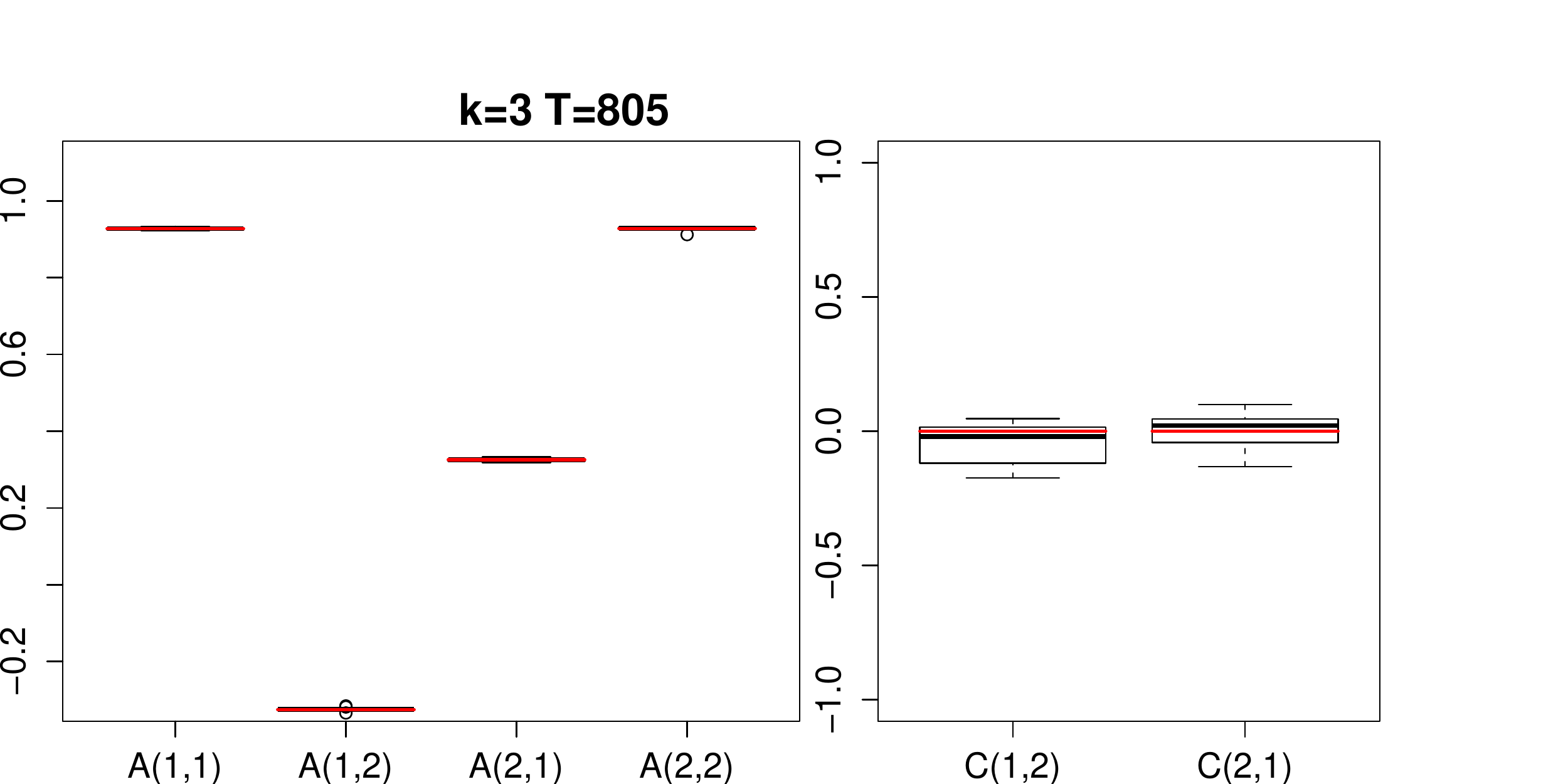}
\caption{Histogram plots of $\A^{(2)}$ and $\C^{(1)}$ parameter estimates as in \ref{f1}.}\label{f2}
\end{figure}
\begin{figure}[!htb]
\includegraphics[width=.47\textwidth]{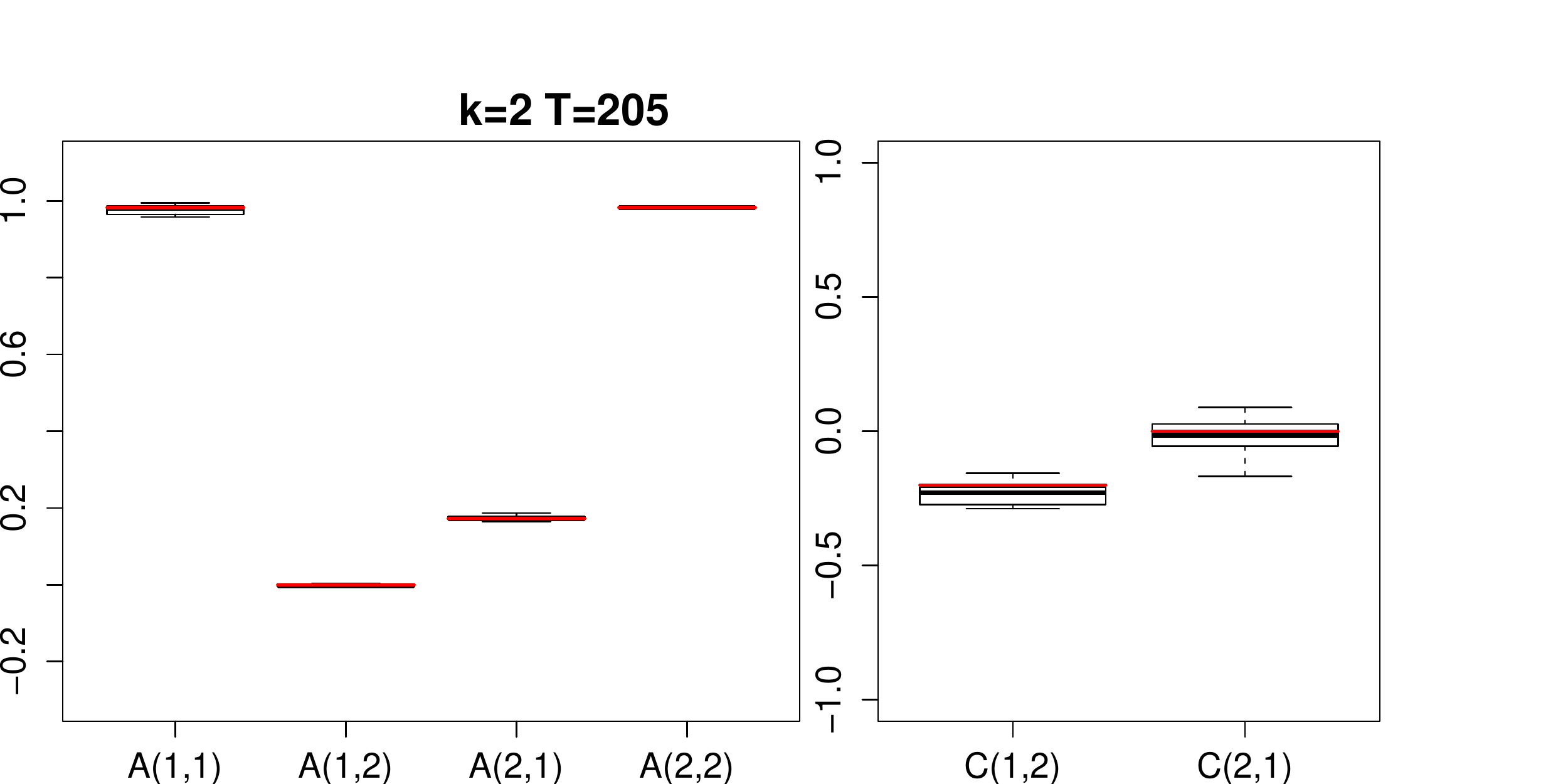} \hspace{.2 in}
\includegraphics[width=.47\textwidth]{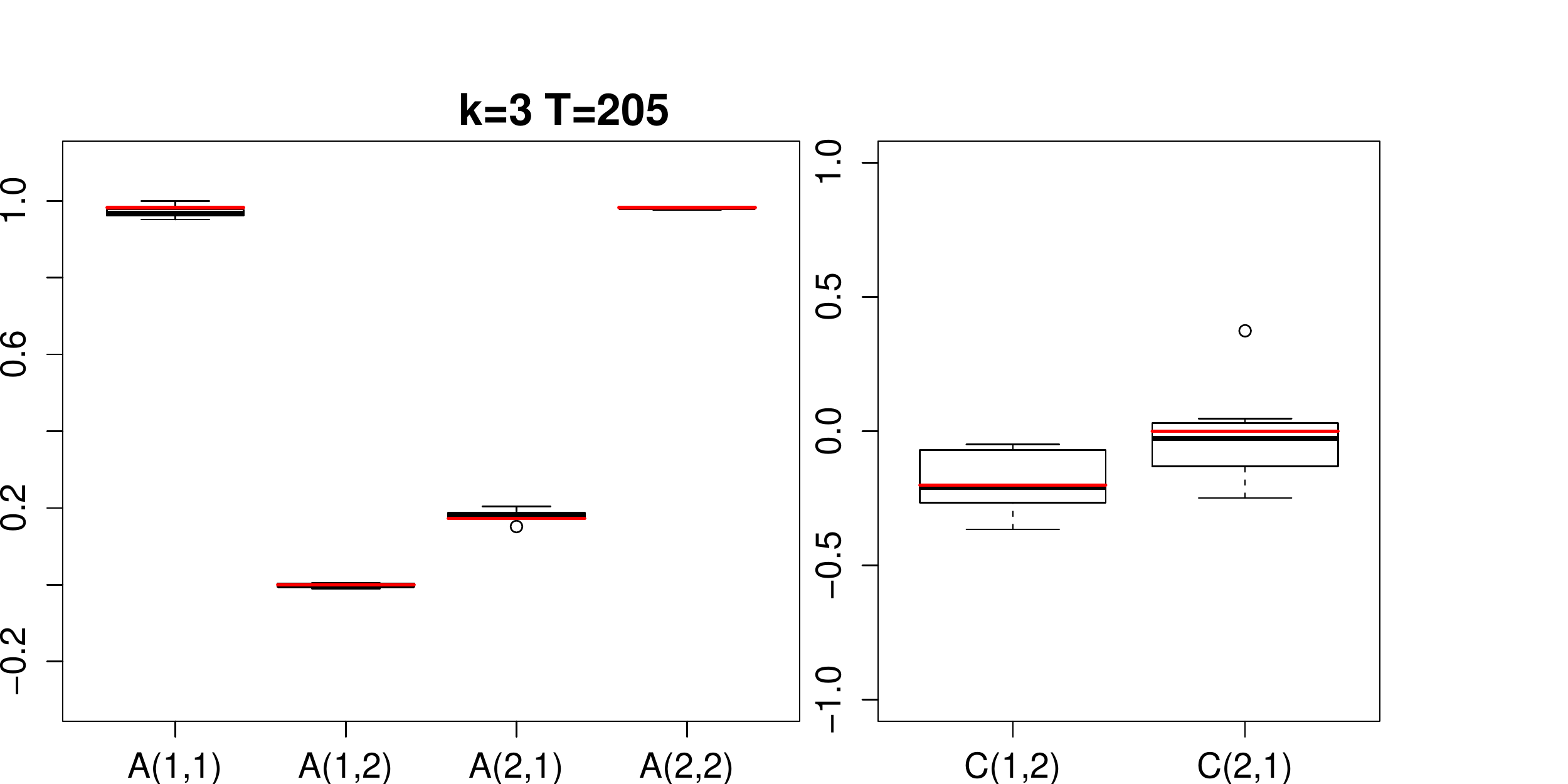}
\includegraphics[width=.47\textwidth]{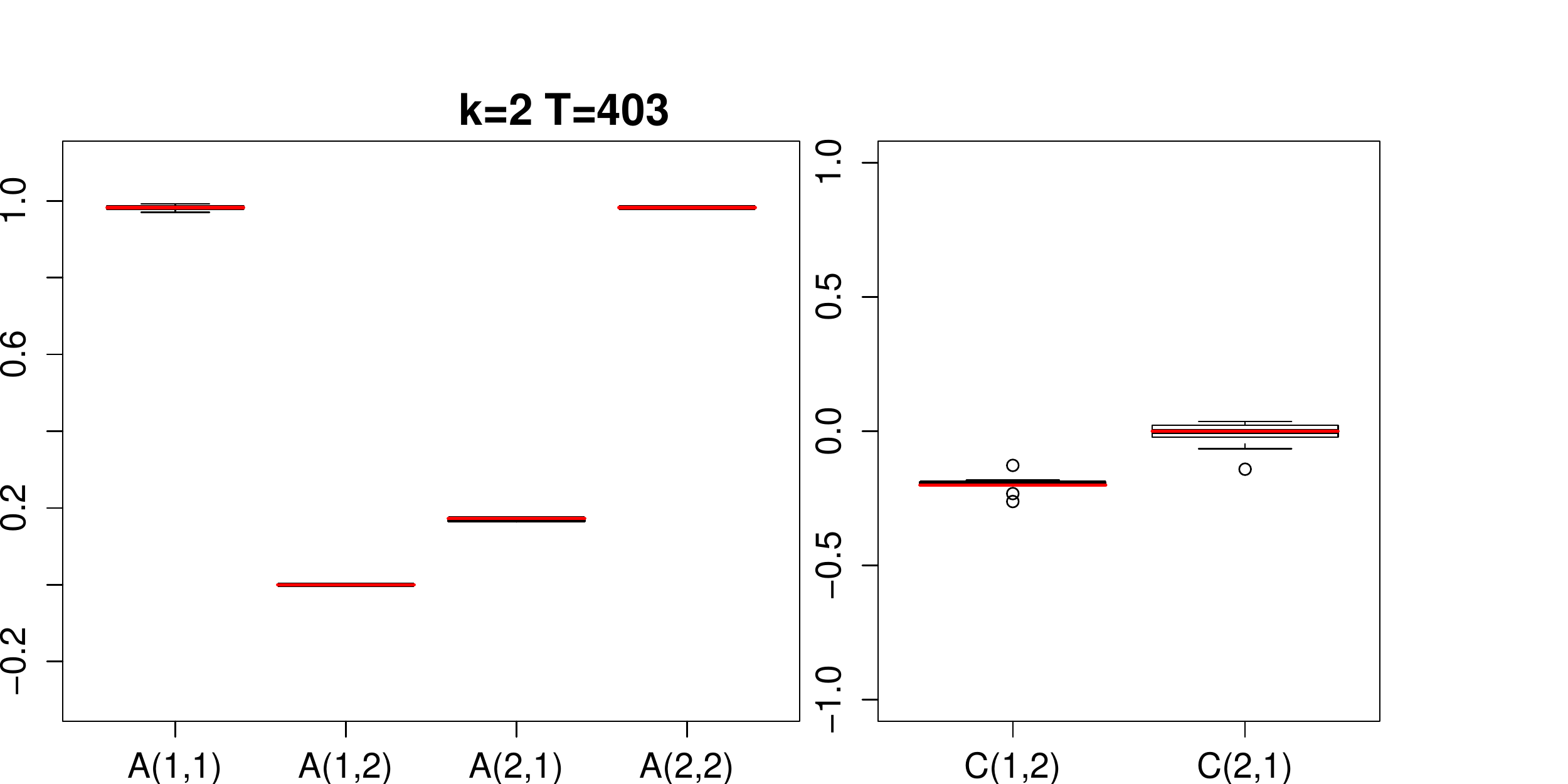} \hspace{.2 in}
\includegraphics[width=.47\textwidth]{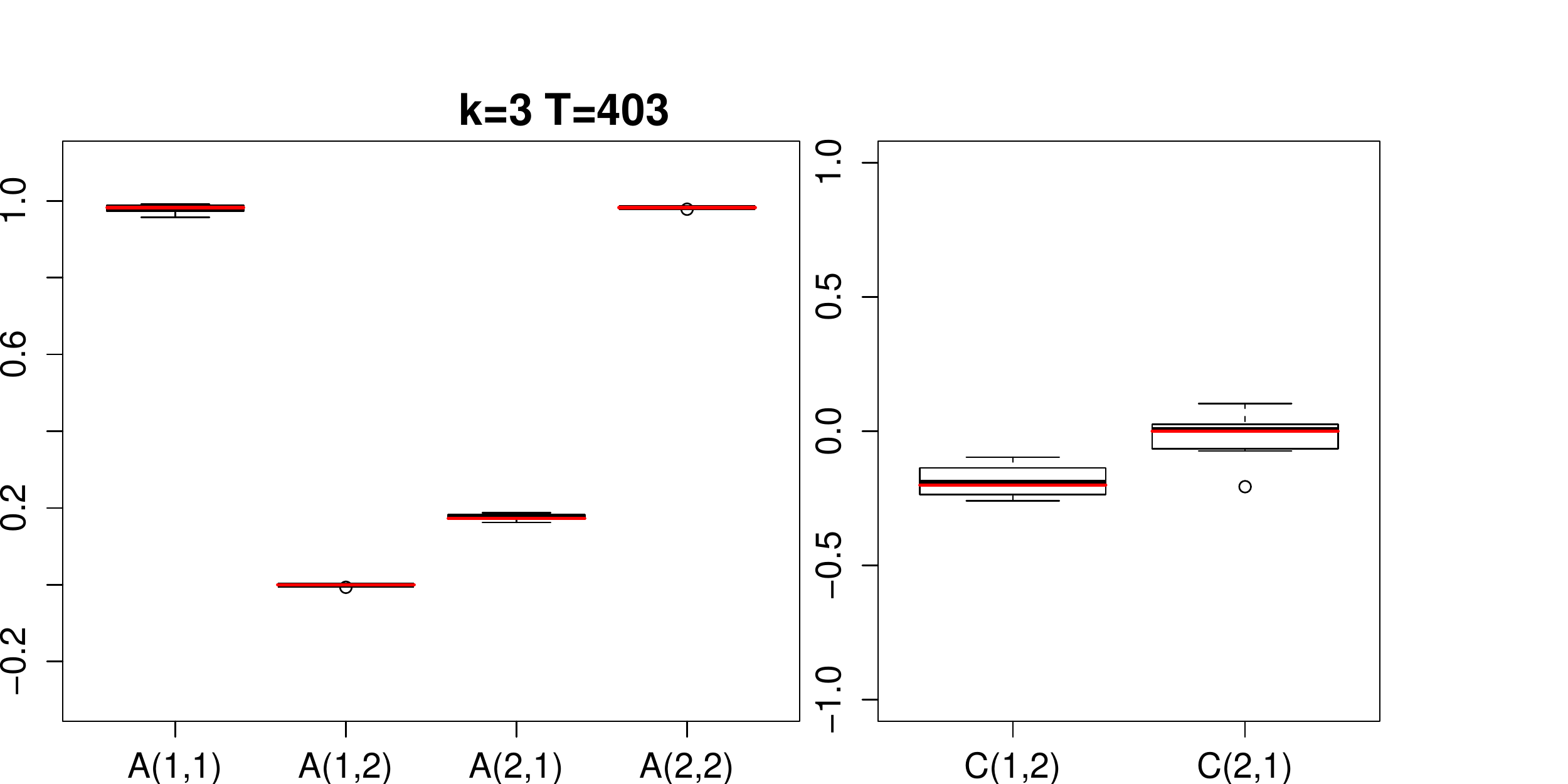}
\includegraphics[width=.47\textwidth]{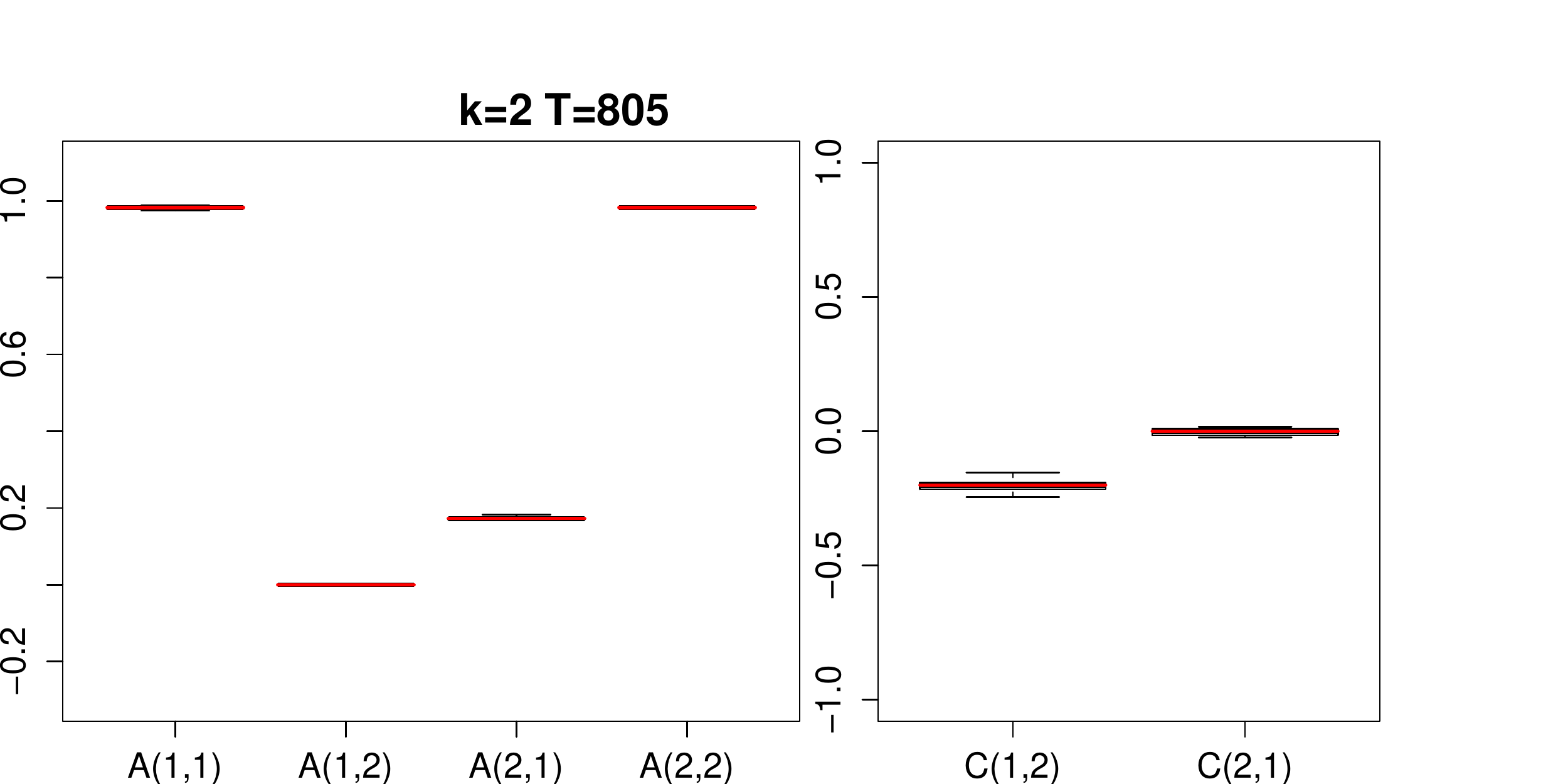} \hspace{.2 in}
\includegraphics[width=.47\textwidth]{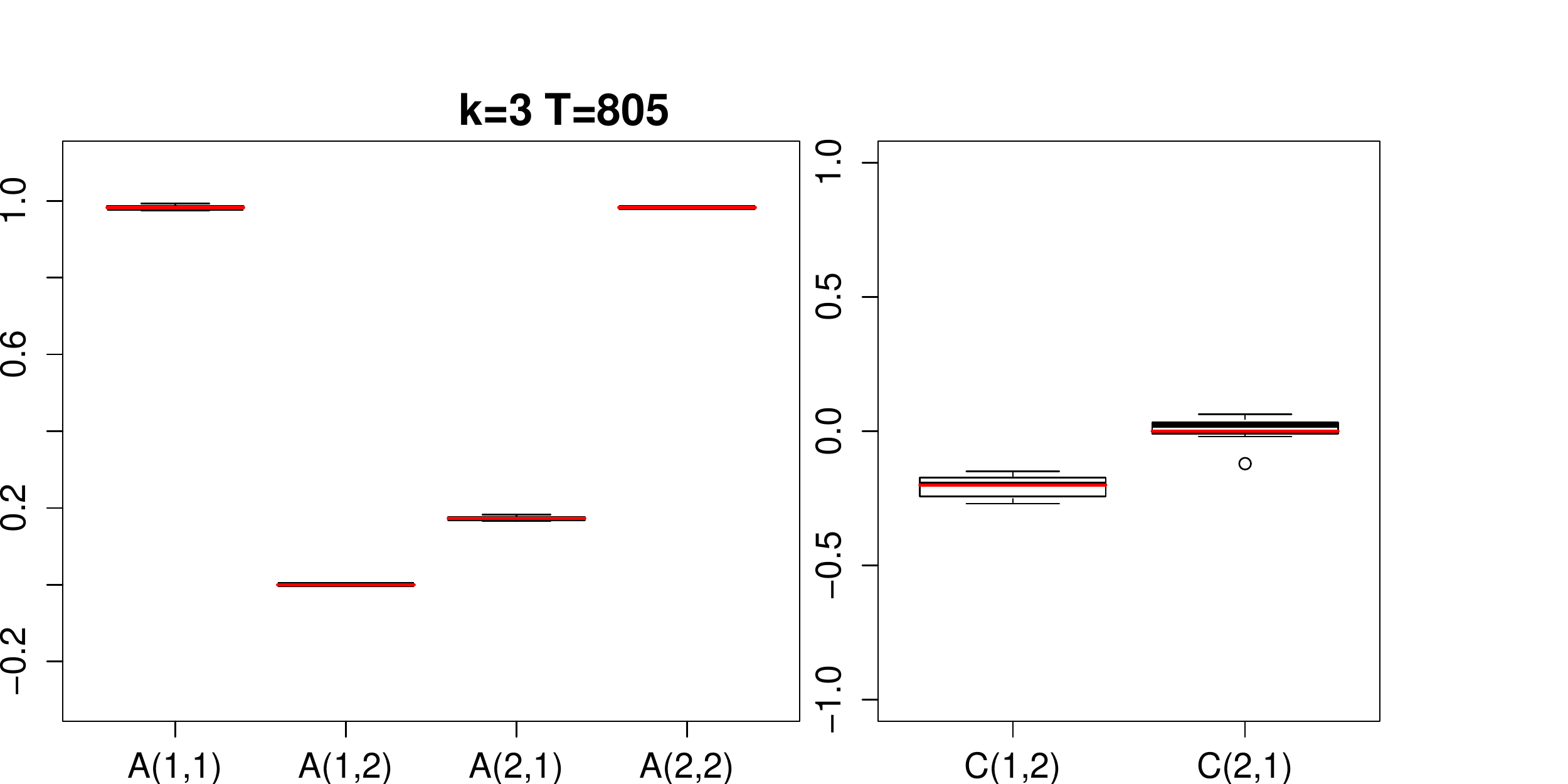}
\caption{Histogram plots of $\A^{(1)}$ and $\C^{(2)}$ parameter estimates as in Figure \ref{f1}.}\label{f3}
\end{figure}

\begin{figure}[!htb]
\includegraphics[width=.47\textwidth]{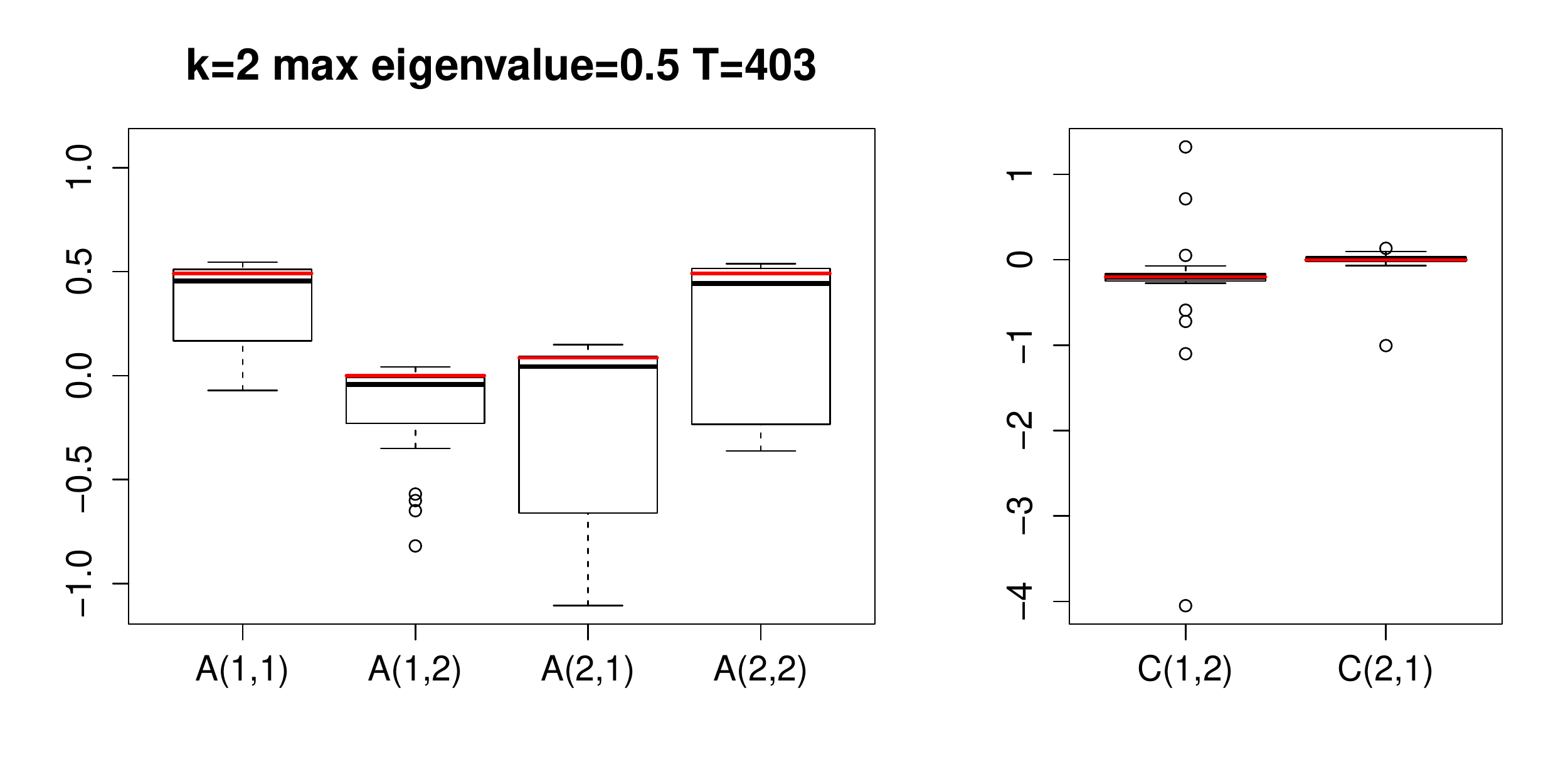} \hspace{.2 in}
\includegraphics[width=.47\textwidth]{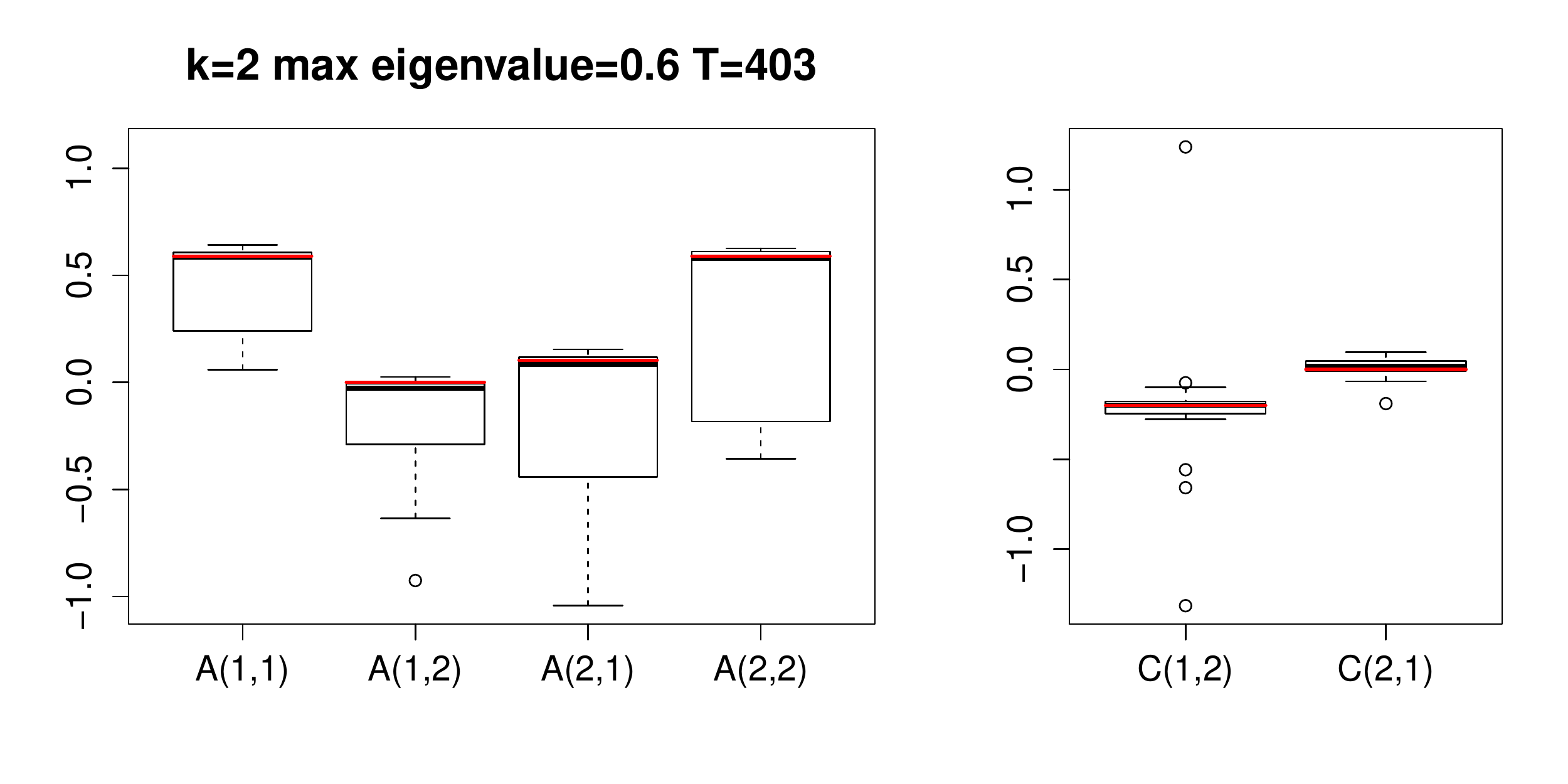}
\includegraphics[width=.47\textwidth]{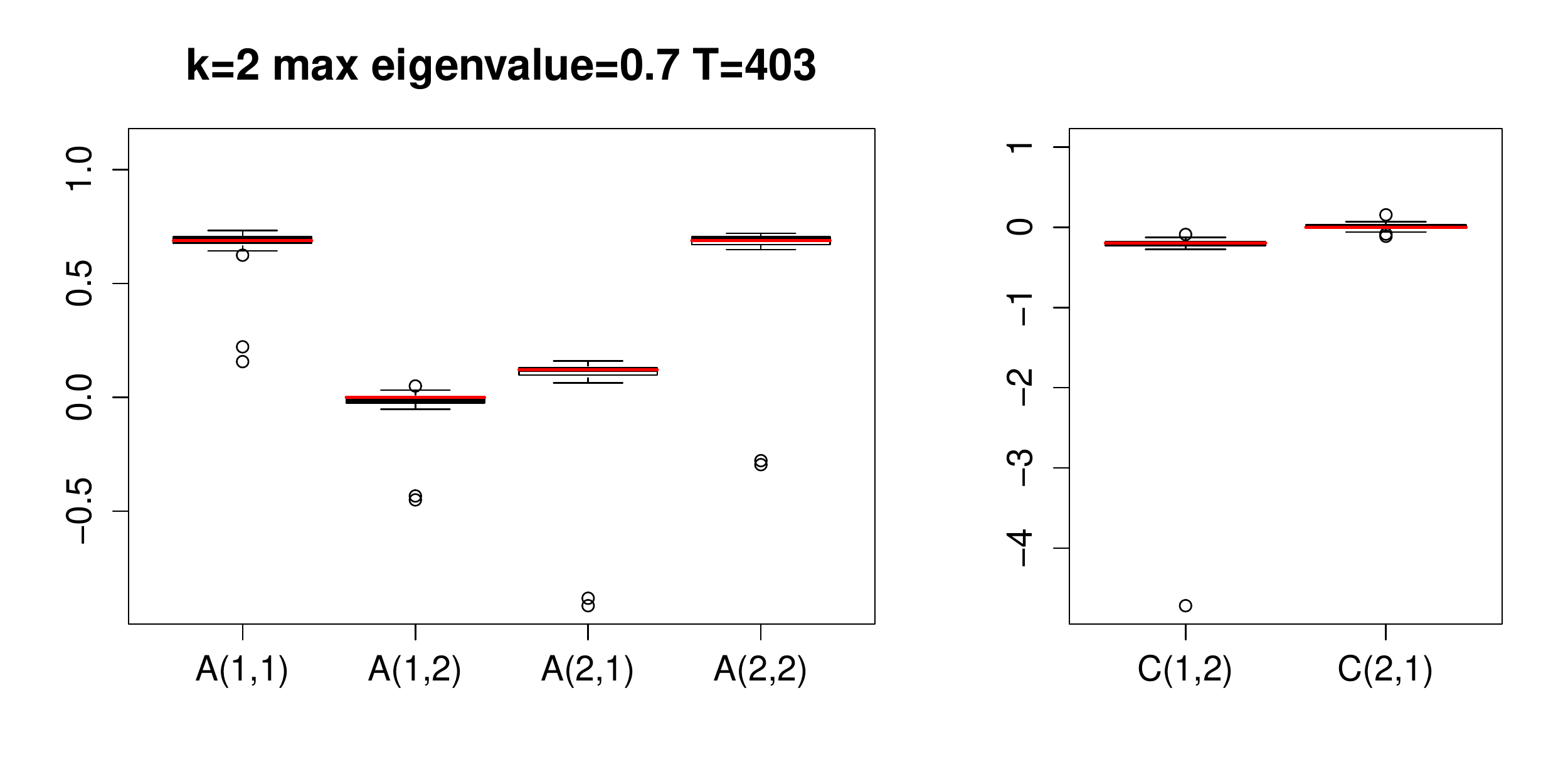} \hspace{.2 in}
\includegraphics[width=.47\textwidth]{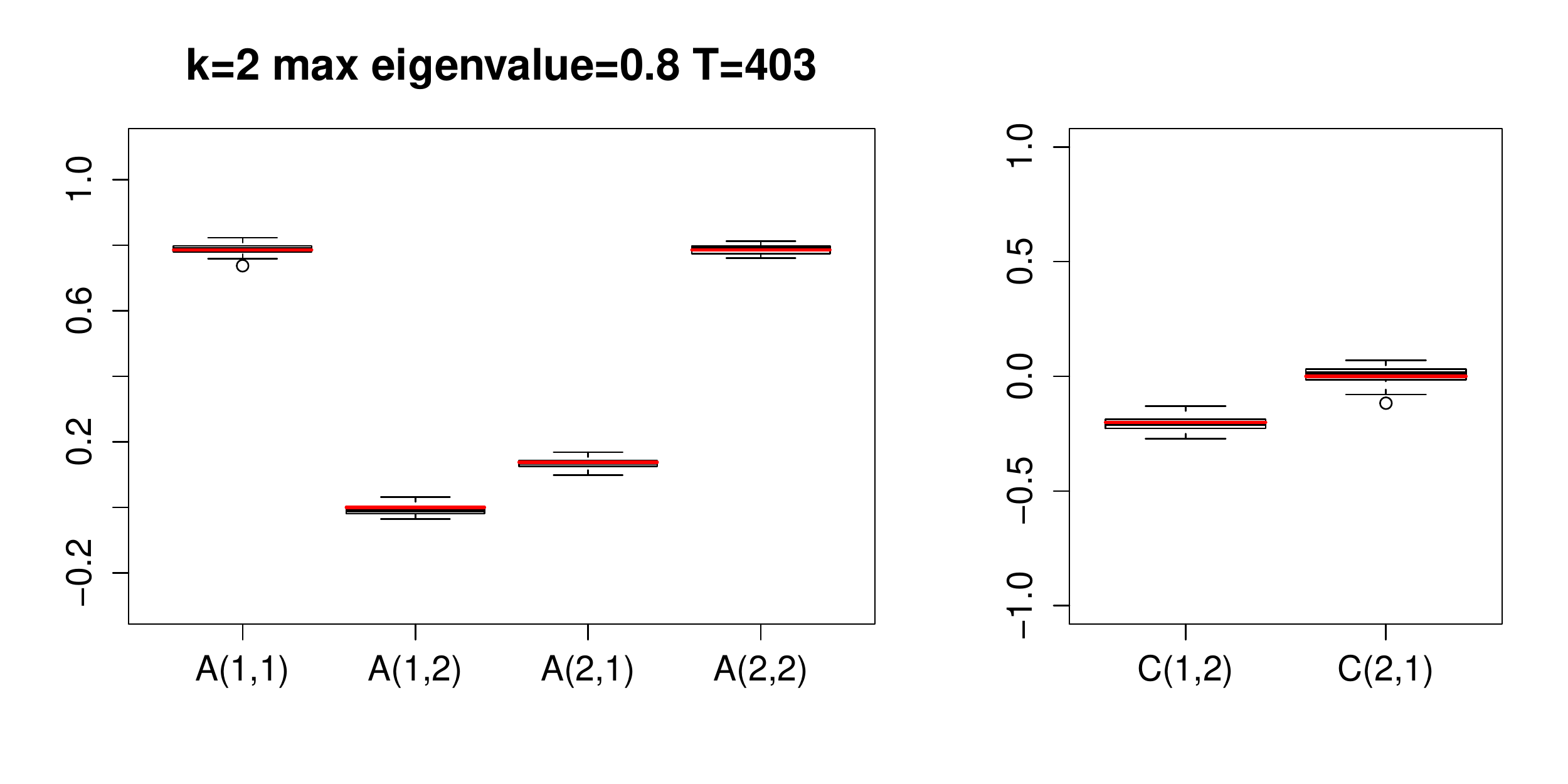}
\includegraphics[width=.47\textwidth]{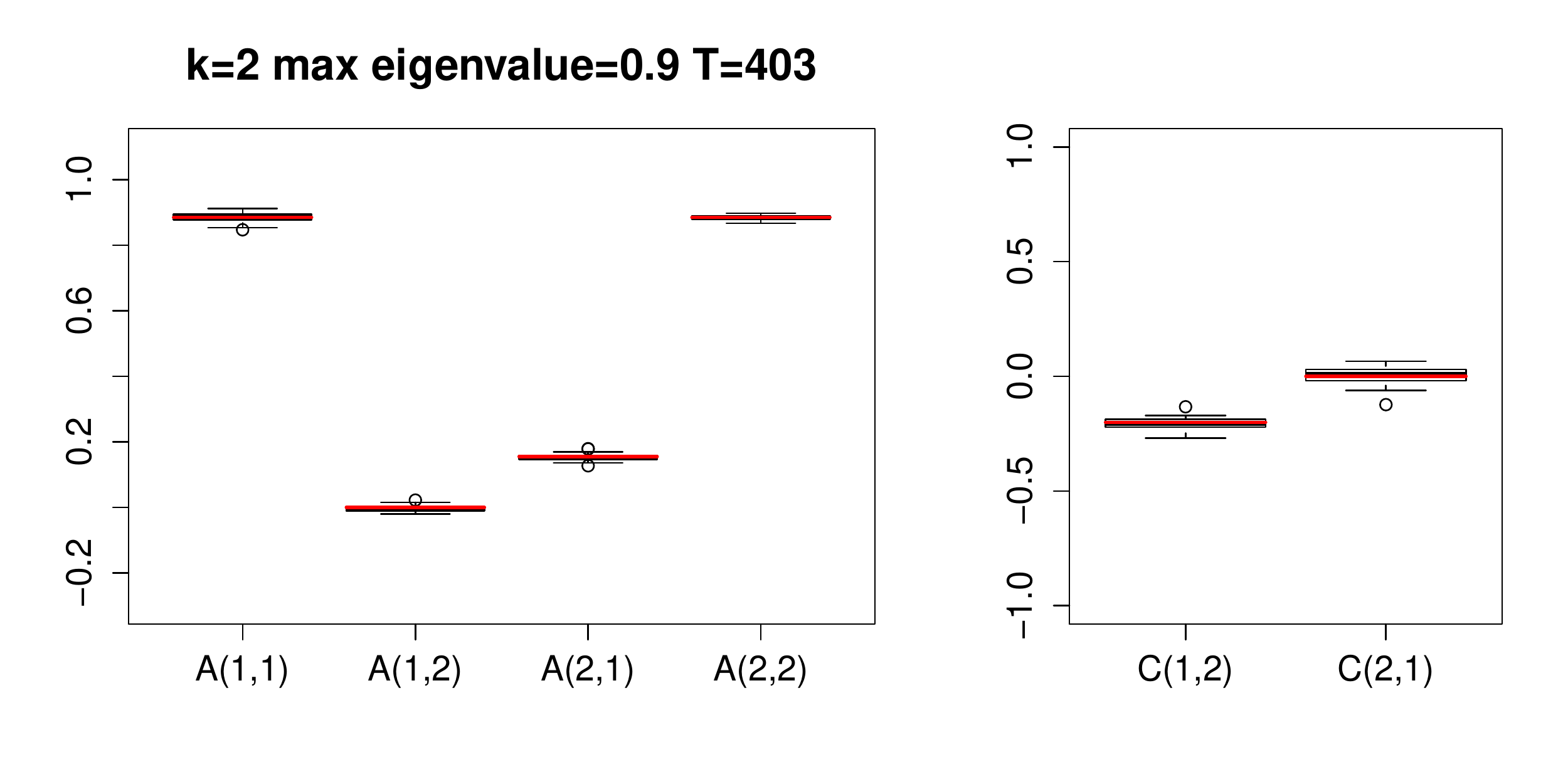} \hspace{.2 in}
\includegraphics[width=.47\textwidth]{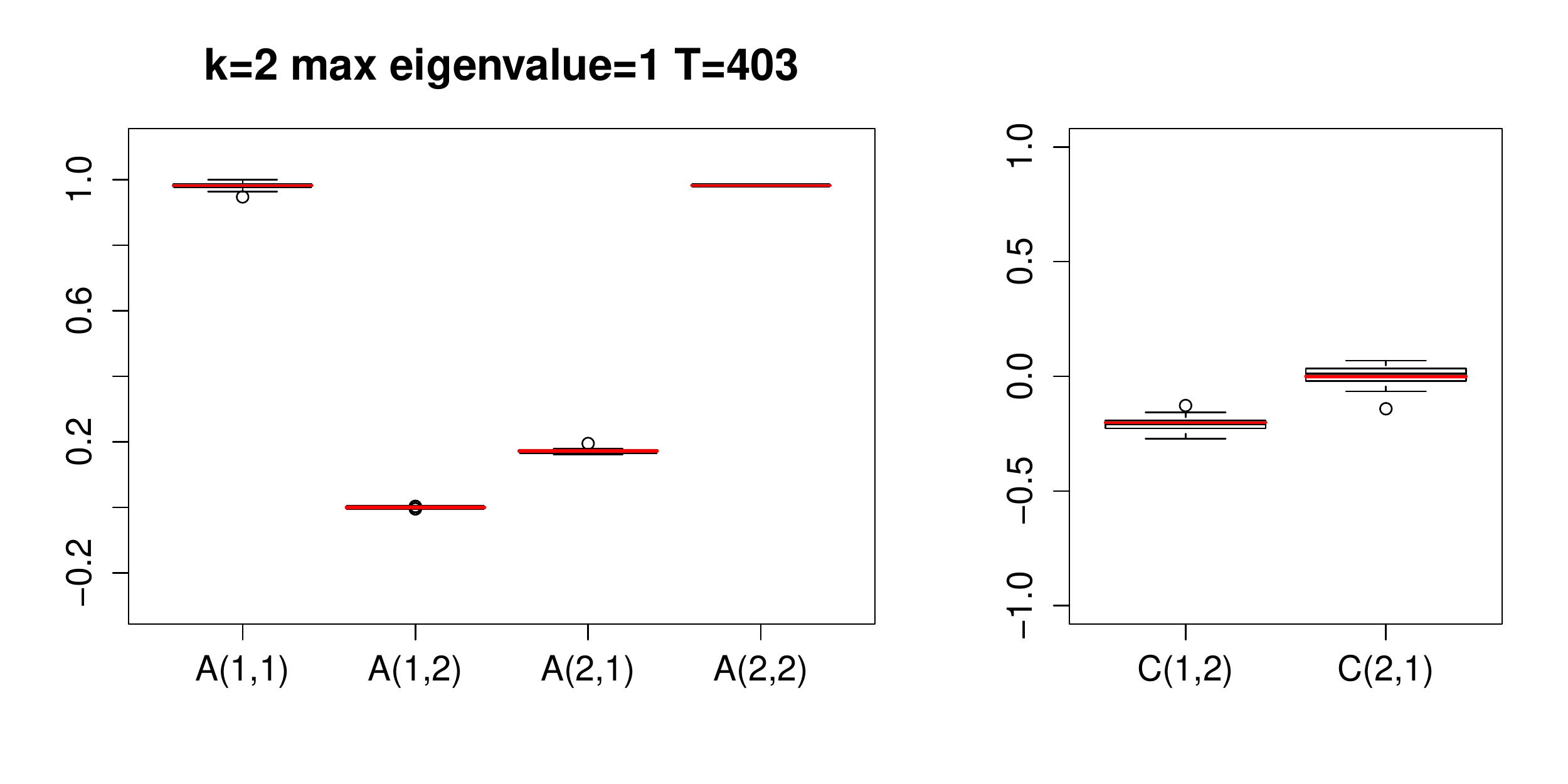}
\caption{Histogram plots of a scaled $\A^{(2)}$ and $\C^{(2)}$ parameter estimates over 40 random data samplings for differing maximum eigenvalue of $\A^{(2)}$ and subsampling factor of $k = 2$.}
\label{box2}
\end{figure}

\begin{figure}[!htb]
\includegraphics[width=.47\textwidth]{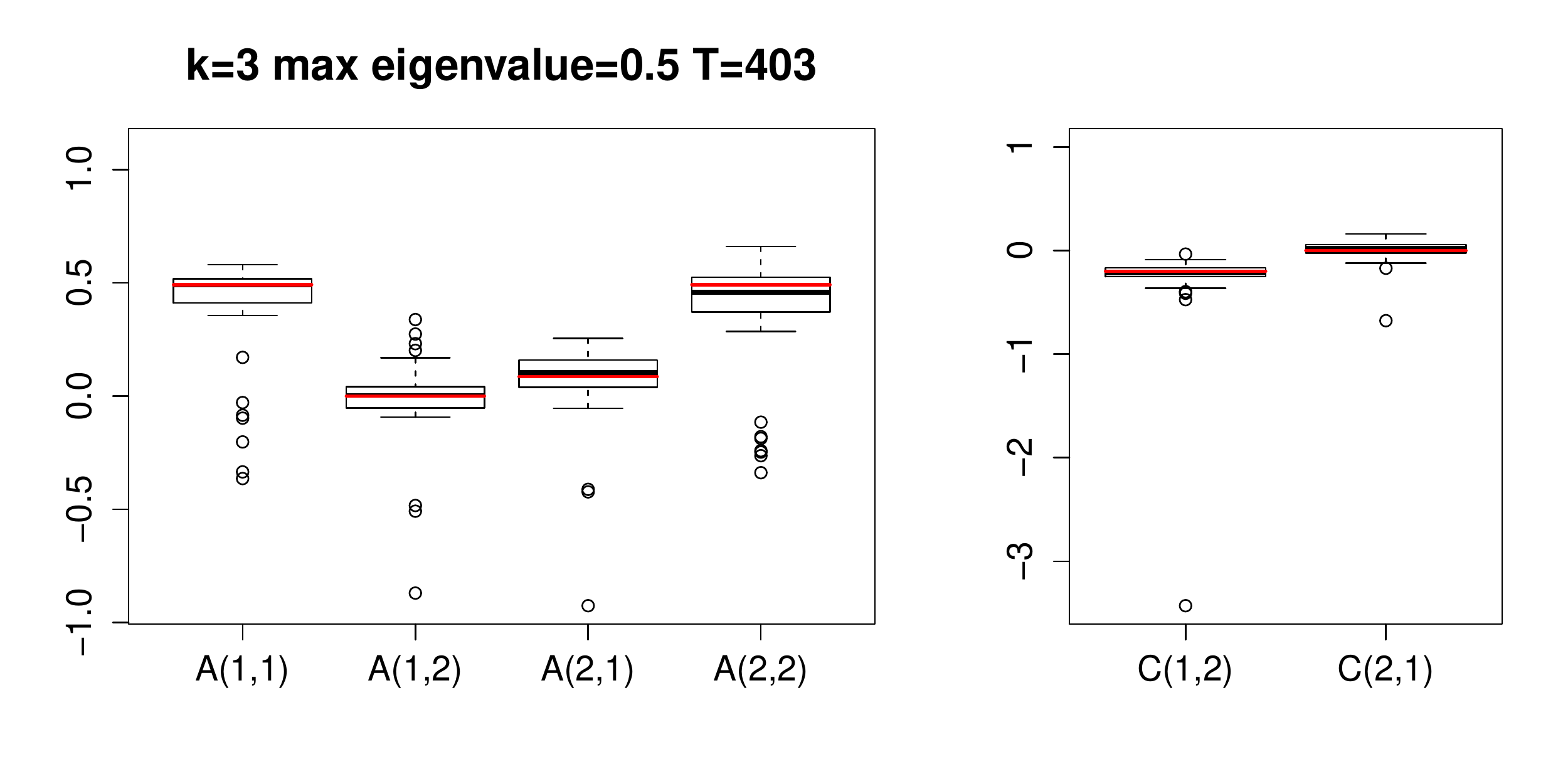} \hspace{.2 in}
\includegraphics[width=.47\textwidth]{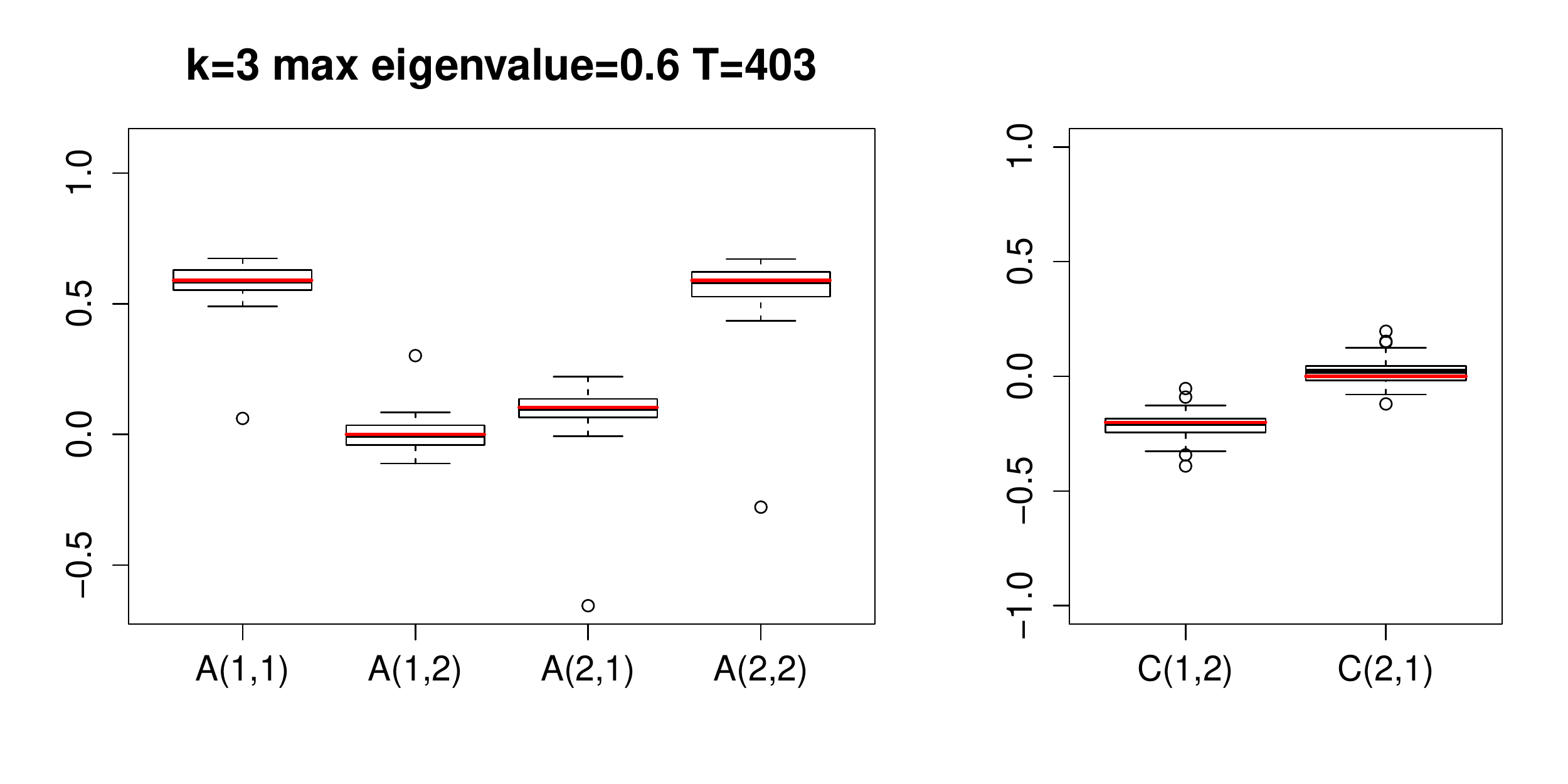}
\includegraphics[width=.47\textwidth]{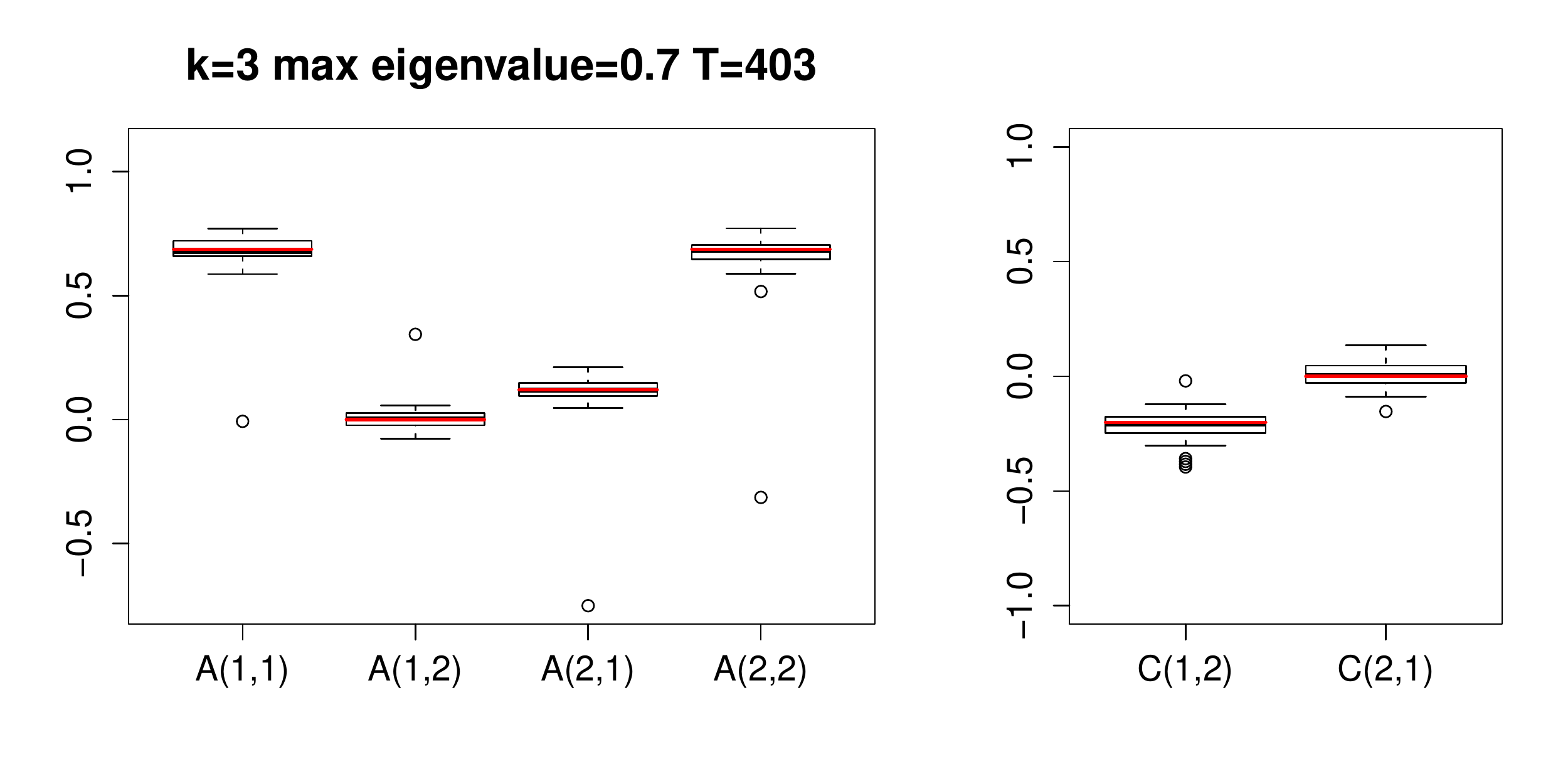} \hspace{.2 in}
\includegraphics[width=.47\textwidth]{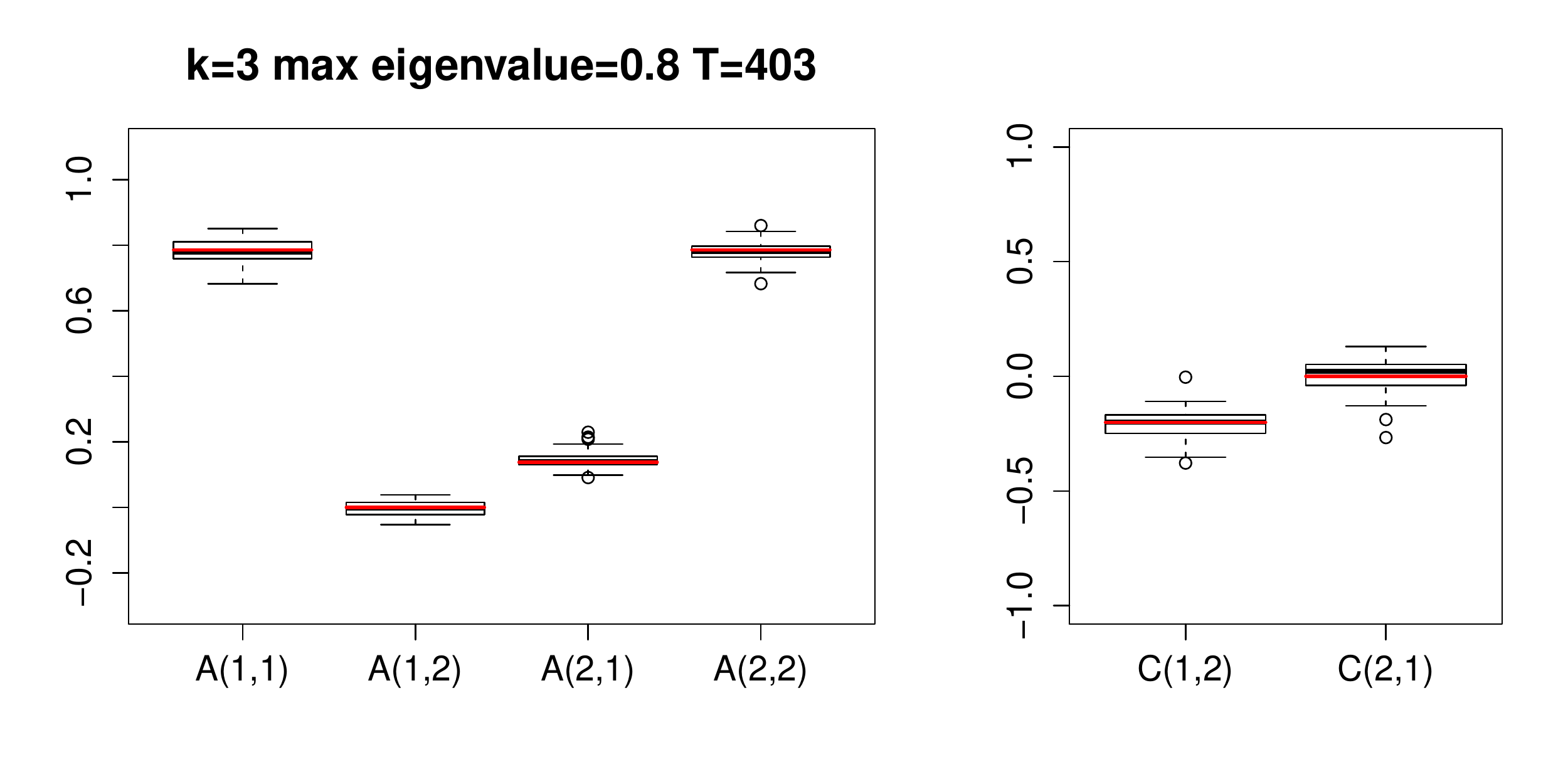}
\includegraphics[width=.47\textwidth]{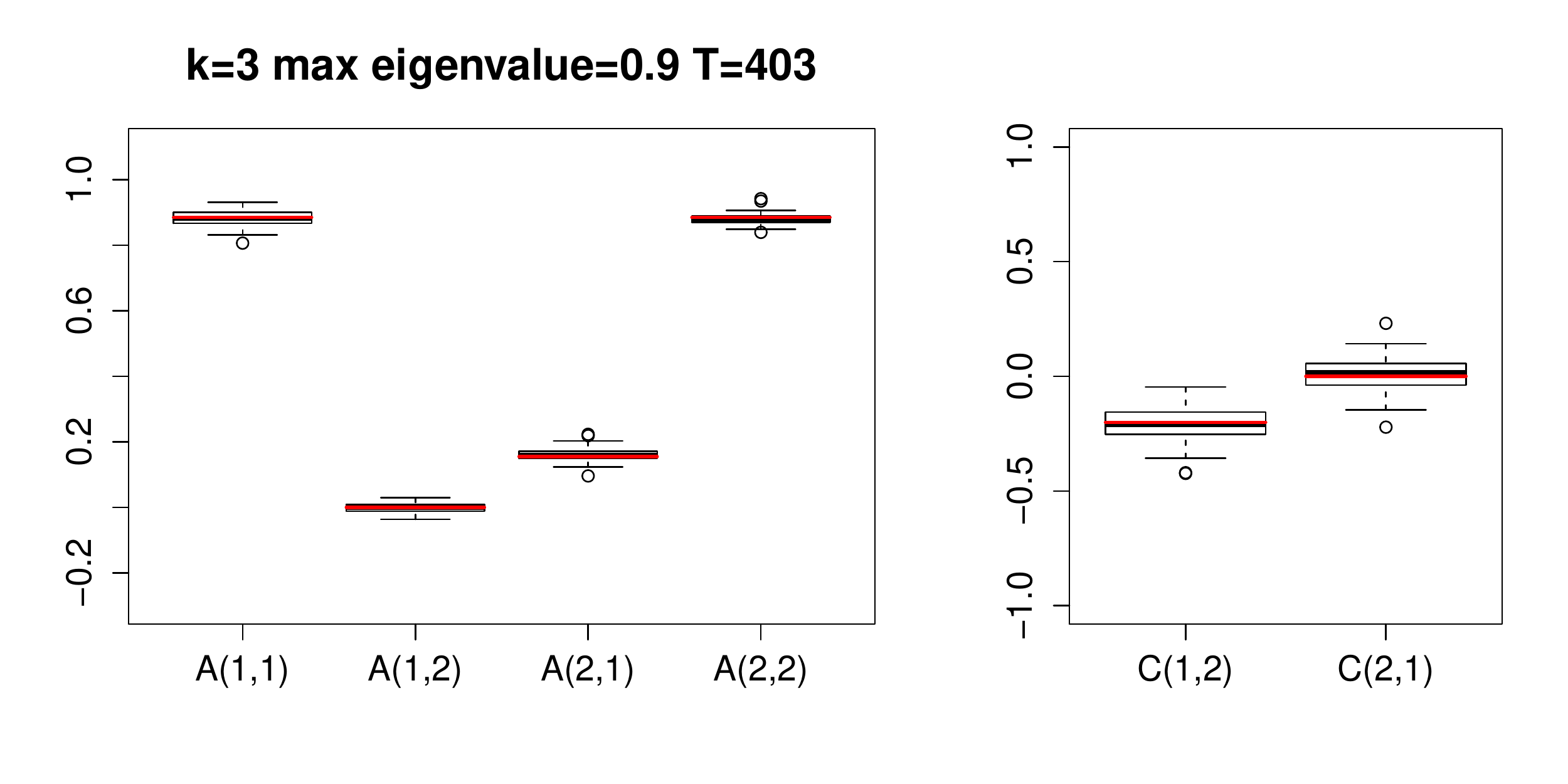} \hspace{.2 in}
\includegraphics[width=.47\textwidth]{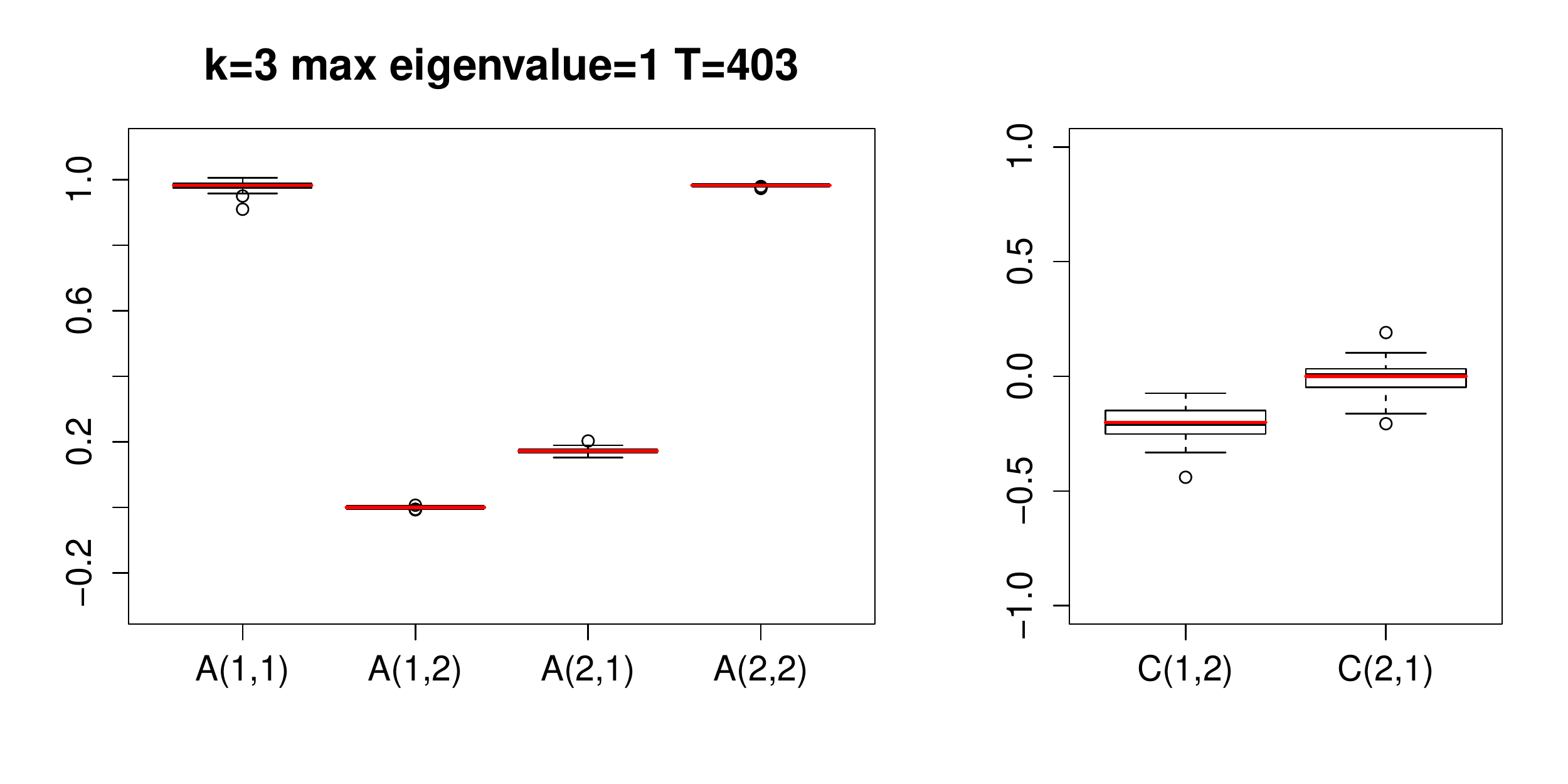}
\caption{As in Figure \ref{box2} but for $k = 3$.}
\label{box3}
\end{figure}

\end{document}